\documentclass[11pt,letterpaper]{article}
\usepackage[utf8]{inputenc}
\usepackage[T1]{fontenc}
\usepackage{lmodern}
\usepackage[DIV=11]{typearea}

\usepackage{microtype}
\usepackage{times}

\usepackage[utf8]{inputenc}
\usepackage{amssymb}
\usepackage{amsmath}
\usepackage{amsthm}
\usepackage[ruled,vlined,linesnumbered,nokwfunc]{algorithm2e_}
\usepackage{tikz}
\usetikzlibrary{arrows,shapes}
\usetikzlibrary{decorations,shadows}
\definecolor{lightgray}{rgb}{0.85,0.85,0.85}
\tikzstyle{player1}=[draw, thick, circle, fill=lightgray,inner sep=2pt, minimum width=12pt]
\tikzstyle{player2}=[draw, thick, fill=lightgray,inner sep=3pt, minimum width=15pt,minimum height=15pt]
\tikzstyle{p1}=[player1,minimum size=17pt]
\tikzstyle{p2}=[player2,minimum size=15pt]
\tikzstyle{arrow}=[->,line width=1pt,>=stealth',bend angle=15]

\sloppy
\usepackage{xcolor}
\usepackage{xspace}
\usepackage{booktabs}
\usepackage{nicefrac}
\usepackage{hyperref}
\usepackage{authblk}

\title{Improved Set-based Symbolic Algorithms for Parity Games}
\author[1]{Krishnendu Chatterjee}
\author[2,3]{Wolfgang Dvo{\v r}{\' a}k}
\author[3]{\\Monika Henzinger}
\author[4,3]{Veronika Loitzenbauer}

\affil[1]{IST Austria}
\affil[2]{TU Wien, Vienna, Austria}
\affil[3]{University of Vienna, Vienna, Austria} 
 \affil[4]{Bar-Ilan University, Ramat Gan, Israel}
 
\newcommand{\set}[1]{\{#1\}}
\newcommand{\lu}{\textup{(}}
\newcommand{\ru}{\textup{)}}
\newcommand{\upbr}[1]{\lu #1\ru}
\newcommand{\sseq}{\langle v_0,v_1,v_2,\ldots\rangle}
\newcommand{\pat}{\omega\xspace} 
\DeclareMathOperator{\Out}{\textit{Out}\xspace}
\newcommand{\at}[3]{\textit{Attr}_{#1}(#2, #3)\xspace}
\newcommand{\idxa}{i\xspace}
\newcommand{\straa}{\sigma\xspace}
\newcommand{\strab}{\pi\xspace}
\newcommand{\pe}{{\mathcal{E}}\xspace}
\newcommand{\po}{{\mathcal{O}}\xspace}
\newcommand{\prio}{\alpha\xspace}
\newcommand{\numprio}{c\xspace}
\newcommand{\ve}{V_\pe\xspace}
\newcommand{\vo}{V_\po\xspace}
\newcommand{\we}{W_\pe\xspace}
\newcommand{\wo}{W_\po\xspace}
\newcommand{\pl}{{z}\xspace}
\newcommand{\op}{{\overline{z}}\xspace}
\newcommand{\pgame}{\mathcal{P}\xspace}
\newcommand{\game}{\mathcal{G}\xspace}
\newcommand{\domsize}{h\xspace}
\newcommand{\dombound}[2]{h\xspace}
\DeclareMathOperator{\domalg}{\ProcNameSty{Dominion}\xspace}
\newcommand{\idx}{k}
\newcommand{\rank}{r}
\newcommand{\pre}{\mathsf{Pre}\xspace}

\newcommand{\cpre}[1]{\mathsf{CPre}_{#1}\xspace}
\newcommand{\numcpre}{\#\mathsf{CPre}}
\newcommand{\incr}{\mathrm{inc}}

\newcommand{\decr}{\mathrm{dec}}
\newcommand{\proj}[1]{\left< #1\right>}
\newcommand{\best}{\mathrm{best}}

\newcommand{\lift}{\operatorname{Lift}}

\theoremstyle{plain}
\newtheorem{theorem}{Theorem}

\newtheorem{lemma}[theorem]{Lemma}
\newtheorem{keylemma}[theorem]{Key Lemma}

\newtheorem{invariant}[theorem]{Invariant}
\newtheorem{claim}[theorem]{Claim}
\newtheorem{remark}[theorem]{Remark}
\newtheorem{example}[theorem]{Example}
\newcommand{\qee}{\phantom{a}\hfill$\lozenge$}

\begin{document}

\maketitle

\begin{abstract}
Graph games with $\omega$-regular winning conditions provide a
mathematical framework to analyze a wide range of problems in the
analysis of reactive systems and programs (such as the synthesis of reactive 
systems, program repair, and the verification of branching time properties). 
Parity conditions are canonical forms to specify $\omega$-regular winning
conditions. 
Graph games with parity conditions are equivalent to $\mu$-calculus model 
checking, and thus a very important algorithmic problem. 
Symbolic algorithms are of great significance because they provide scalable
algorithms for the analysis of large finite-state systems, as well as algorithms 
for the analysis of infinite-state systems with finite quotient.
A set-based symbolic algorithm uses the basic set operations and the 
one-step predecessor operators.
We consider graph games with $n$ vertices and parity conditions with $\numprio$
priorities (equivalently, a $\mu$-calculus formula with $\numprio$ alternations of 
least and greatest fixed points).
While many explicit algorithms exist for graph games with parity conditions,
for set-based symbolic algorithms there are only two algorithms
(notice that we use space to refer to the number of sets stored by a symbolic algorithm):
(a)~the basic algorithm that requires $O(n^\numprio)$ symbolic operations and 
linear space; and (b)~an improved algorithm that requires  $O(n^{\numprio/2+1})$ 
symbolic operations but also  $O(n^{\numprio/2+1})$ space (i.e., exponential space).
In this work we present two set-based symbolic algorithms for parity games:
(a)~our first algorithm requires  $O(n^{\numprio/2+1})$ symbolic operations and only
requires linear space; and (b)~developing on our first algorithm, we present 
an algorithm that requires  $O(n^{\numprio/3+1})$ symbolic operations and only
linear space.
We also present the first linear space set-based symbolic algorithm for parity games
that requires at most a sub-exponential number of symbolic operations.
\end{abstract}

\pagebreak
\section{Introduction}\label{sec:intro}
In this work we present improved set-based symbolic algorithms for 
solving graph games with parity winning conditions, which is equivalent to 
modal $\mu$-calculus model-checking.

\paragraph{Graph games.} 
Two-player graph games provide the mathematical framework to analyze 
several important problems in computer science, especially in formal methods
for the analysis of reactive systems.
Graph games are games that proceed for an infinite number of rounds,
where the two players take turns to move a token along the edges of the graph 
to form an infinite sequence of vertices (which is called a {\em play} or 
a {\em trace}). 
The desired set of plays  is described as an $\omega$-regular winning condition.
A {\em strategy} for a player is a recipe that describes how the player 
chooses to move tokens to extend plays,  and a {\em winning} strategy ensures 
the desired set of plays against all strategies of the opponent. 
Some classical examples of graph games in formal methods are as follows:

(a)
If the vertices and edges of a graph represent the states 
and transitions of a reactive system, resp., then the synthesis problem
(Church's problem~\cite{Church62}) asks for the construction of a 
{\em winning strategy} in a graph 
game~\cite{BuchiL69,RamadgeW87,PnueliR89,MadhusudanT01,McNaughton93}.

(b)
The problems of (i)~verification of a branching-time property of 
a reactive system~\cite{EmersonJ91}, where one player models the existential 
quantifiers and the opponent models the universal quantifiers; as well as 
(ii)~verification of open systems~\cite{AlurHK02}, where one player represents the 
controller and the opponent represents the environment; are naturally modeled 
as graph games, where the winning strategies represent the choices of the 
existential player and the controller, respectively.

Moreover, game-theoretic formulations have been used for 
refinement~\cite{HenzingerKR02}, compatibility 
checking~\cite{InterfaceTheories} of reactive systems, 
program repair~\cite{JobstmannGB05}, and synthesis of programs~\cite{CernyCHRS11}.    
Graph games with {\em parity winning} conditions are particularly important 
since all $\omega$-regular winning conditions (such as safety, reachability, 
liveness, fairness) as well as all Linear-time Temporal Logic (LTL) winning 
conditions can be translated to parity conditions~\cite{Safra88,Safra89},
and parity games are equivalent to modal $\mu$-calculus model 
checking~\cite{EmersonJ91}.
In a parity winning condition, every vertex is assigned a non-negative
integer priority from $\set{0,1,\ldots,\numprio-1}$, and a play is winning if 
the highest priority visited infinitely often is even.
Graph games with parity conditions can model all the applications mentioned 
above, and there is a rich literature on the algorithmic study of finite-state 
parity games~\cite{EmersonJ91,BrowneCJLM97,Seidl96,Jurdzinski00,
VogeJ00,JurdzinskiPZ08,Schewe17}.

\paragraph{Explicit vs. symbolic algorithms.}
The algorithms for parity games can be classified broadly 
as {\em explicit} algorithms, where the algorithms operate on the explicit 
representation of the graph game, and {\em implicit or symbolic} algorithms, 
where the algorithms only use a set of predefined operations and do not 
explicitly access the graph game.
Symbolic algorithms are of great significance for the following reasons:
(a)~first, symbolic algorithms are required for large finite-state systems
that can be succinctly represented implicitly (e.g., programs with Boolean 
variables) and symbolic algorithms are scalable, whereas explicit algorithms 
do not scale; and 
(b)~second, for infinite-state systems (e.g., real-time systems modeled as
timed automata, or hybrid systems, or programs with integer domains) only 
symbolic algorithms are applicable, rather than explicit algorithms. 
Hence for the analysis of large systems or infinite-state systems symbolic 
algorithms are necessary.

\paragraph{Significance of set-based symbolic algorithms.}
The most significant class of symbolic algorithms for parity games 
are based on {\em set operations}, where the allowed symbolic operations are:
(a)~basic set operations such as union, intersection, complement, and 
inclusion; and (b)~one step predecessor (Pre) operations.
Note that the basic set operations (that only involve state variables) are 
much cheaper as compared to the predecessor operations (that involve both variables of the current and of the next state). 
Thus in our analysis we will distinguish between the basic set operations and 
the predecessor operations. We refer to the number of sets stored by a set-based
symbolic algorithm as its space.
The significance of set-based symbolic algorithms is as follows:

(a) First, in several domains of the analysis of both infinite-state systems
(e.g., games over timed automata or hybrid systems) as well as large 
finite-state systems (e.g., programs with many Boolean variables, or bounded
integer variables), the desired model-checking question is specified as 
a $\mu$-calculus formula with the above set operations~\cite{deAlfaroHM01,deAlfaroFHMS03}. 
Thus an algorithm with the above set operations provides a symbolic algorithm 
that is directly applicable to the formal analysis of such systems.

(b)
Second, in other domains such as in program analysis, the one-step 
predecessor operators are routinely used (namely, with the 
weakest-precondition as a predicate transformer).
A symbolic algorithm based only on the above operations thus can
easily be developed on top of the existing implementations.
Moreover, recent work~\cite{BeyeneCPR14} shows how efficient procedures (such as 
constraint-based approaches using SMTs) can be used for the computation of the 
above operations in infinite-state games. 
This highlights that symbolic one-step operations can be applied to a large class of problems.

(c)
Finally, if a symbolic algorithm is described with the above very basic 
set of operations, then any practical improvement to these operations in 
a particular domain would translate to a symbolic algorithm that is faster 
in practice for the respective domain.

Thus the problem is practically relevant, and understanding the symbolic 
complexity of parity games is an interesting and important problem.

\paragraph{Previous results.}
We summarize the main previous results for finite-state game graphs 
with parity conditions.
Consider a parity game with $n$ vertices, $m$ edges, and $\numprio$ priorities 
(which is equivalent to $\mu$-calculus 
model-checking of transitions systems with $n$ states, $m$ transitions, 
and a $\mu$-calculus formula of alternation depth $\numprio$).
In the interest of concise presentation, in the following discussion, 
we ignore denominators in~$\numprio$ 
in the running time bounds, see  Appendix~\ref{sec:algoforparity} and 
Theorems~\ref{thm:pmalg_space_efficient} and~\ref{thm:bigstep} for precise bounds.

\emph{Set-based symbolic algorithms.} 
Recall that we use space to refer to the number of sets stored by a symbolic
algorithm. The basic set-based symbolic algorithm (based on the direct evaluation of
the nested fixed point of the $\mu$-calculus formula) for parity games requires 
$O(n^\numprio)$ symbolic operations and space linear in $\numprio$~\cite{EmersonL86}. 
In a breakthrough result~\cite{BrowneCJLM97}, a new set-based symbolic algorithm was 
presented that requires $O(n^{\numprio/2+1})$ symbolic operations, but also 
requires $O(n^{\numprio/2+1})$ many sets, i.e., exponential space as compared to the 
linear space of the basic algorithm.
A simplification of the result of~\cite{BrowneCJLM97} was presented in~\cite{Seidl96}.

\emph{Explicit algorithms.}
The classical algorithm for parity games requires $O(n^{\numprio-1} m)$ time 
and can be implemented in quasi-linear space~\cite{Zielonka98,McNaughton93}, 
which was then improved to the small-progress 
measure algorithm that requires $O(n^{\numprio/2} m)$ time and space to store
$O(\numprio \cdot n)$ integer counters~\cite{Jurdzinski00}. 
The small-progress measure algorithm, which is an explicit algorithm, 
uses an involved domain of the product of integer priorities and \emph{lift} 
operations (which is a lexicographic $\max$ and $\min$ in the involved domain). 
The algorithm shows that the fixed point of the lift operation computes the 
solution of the parity game. 
The lift operation can be encoded with algebraic binary 
decision diagrams~\cite{BustanKV04} but this does not provide a 
set-based symbolic algorithm. 
Other notable explicit algorithms for parity games are as follows:
(a)~a strategy improvement algorithm~\cite{VogeJ00}, which in the worst-case is 
exponential~\cite{Friedmann09};
(b)~a dominion-based algorithm~\cite{JurdzinskiPZ08}
that requires $n^{O(\sqrt{n})}$ time and a randomized $n^{O(\sqrt{n/\log{n}})}$
algorithm~\cite{BjorklundSV03} (both algorithms are sub-exponential, but inherently explicit algorithms); 
and, combining the small-progress measure and the dominion-based algorithm, 
(c)~an $O(n^{\numprio/3} m)$ time algorithm~\cite{Schewe17} and 
its improvement for dense graphs with $\numprio$ sub-polynomial in $n$ to an 
$O(n^{\numprio/3} n^{4/3})$ time algorithm~\cite{ChatterjeeHL15} (both bounds are simplified). 
A recent breakthrough result~\cite{CaludeJKLS17} shows that parity games 
can be solved in $O(n^{\log \numprio})$ time, i.e., quasi-polynomial time.
Follow-up work \cite{JurdzinskiL17,FearnleyJSSW17}
reduced the space requirements from quasi-polynomial to $O(n \log n \log \numprio)$,
i.e., to quasi-linear, space.

While the above algorithms are specified for finite-state graphs, the symbolic 
algorithms also apply to infinite-state graphs with a finite bi-simulation 
quotient (such as timed-games, or rectangular hybrid games), and then $n$ 
represents the size of the finite quotient. 

\paragraph{Our contributions.}
Our results for 
game graphs with $n$ vertices and parity
objectives with $\numprio$ priorities are as follows. 

(1) First, we present a set-based symbolic algorithm that requires 
$O(n^{\numprio/2+1})$ symbolic operations and linear space
(i.e., a linear number of sets). 
		Thus it matches the symbolic operations bound of~\cite{BrowneCJLM97,Seidl96} 
		and brings the space requirements down to a linear number of sets as in
		the classical algorithm (albeit linear in $n$ and not in $\numprio$).

(2) Second, developing on our first algorithm, 
we present a set-based symbolic algorithm that requires 
$O(n^{\numprio/3+1})$ symbolic operations (simplified bound) and linear space. 
Thus it improves the symbolic operations of~\cite{BrowneCJLM97,Seidl96} while
achieving an exponential improvement in the space requirement.
We also present a modification of our algorithm that requires 
$n^{O(\sqrt{n})}$ symbolic operations and at most linear space. 
This is the first linear-space 
set-based symbolic algorithm that requires at most a sub-exponential number of 
symbolic operations.

In the results above the number of symbolic operations mentioned is 
the number of predecessor operations, and in all
cases the number of required basic set operations
(which are usually cheaper) is at most a factor of $O(n)$ more.
Our main results and comparison with previous
set-based symbolic algorithms are presented in the table below.

\begin{center}
  \begin{tabular}{@{}lcc@{}}
    \toprule
    reference & symbolic operations & space\\
    \midrule
    \cite{EmersonL86,Zielonka98} & $O(n^\numprio)$ & $O(\numprio)$ \\
    \cite{BrowneCJLM97,Seidl96} & $O(n^{\numprio/2+1})$ & $O(n^{\numprio/2+1})$\\
    Thm.~\ref{thm:pmalg_space_efficient} & $\mathbf{O(n^{\numprio/2+1})}$ & $\mathbf{O(n)}$ \\
    Thm.~\ref{thm:bigstep} & $\mathbf{\min\{ n^{O(\sqrt{n})},O(n^{\numprio/3+1})\}}$ & $\mathbf{O(n)}$ \\
    \bottomrule
  \end{tabular}  
\end{center}

\paragraph{Technical contribution.}
We provide a symbolic version of the progress measure algorithm. 
The main challenge is to succinctly encode the numerical domain of the progress
measure as sets. 
More precisely, the challenge is to represent $\Theta(n^{c/2})$ many numerical 
values with $O(n)$ many sets, such that they can still be efficiently processed 
by a set-based symbolic algorithm.
For the sake of efficiency our algorithms consider sets $S_\rank$ storing 
all vertices with progress measure at least~$\rank$.
However, there are $\Theta(n^{c/2})$ many such sets $S_\rank$ and thus, to 
reduce the space requirements to a linear number of sets, we use a succinct 
		representation that encodes all the sets $S_\rank$ with just $O(n)$ many 
sets, such that we can restore a set $S_\rank$ efficiently whenever 
it is processed by the algorithm.

\section{Preliminaries and Previous Results}\label{sec:prelimianries}

\subsection{Parity Games}
\paragraph{Game graphs.}
We consider games on graphs played by two adversarial players, denoted 
by $\pe$ (for even) and $\po$ (for odd).
We use $\pl$ to denote one of the players of $\set{\pe,\po}$ and $\op$ to denote 
its opponent. We denote by $(V, E)$ a directed graph with 
$n=\lvert V \rvert$~vertices and $m=\lvert E \rvert$ edges, where
$V$ is the vertex set and $E$ is the edge set.
A \emph{game graph}  $\game = ((V, E),(\ve, \vo))$ is a directed graph $(V, E)$
with a partition of the vertices into player-$\pe$ vertices $\ve$ 
and player-$\po$ vertices $\vo$.
For a vertex $u\in V$, we write $\Out(u)=\set{v\in V \mid (u,v) \in E}$ 
for the set of successor vertices of~$u$.
As a standard convention (for technical simplicity) we consider 
that every vertex has at least one outgoing edge, i.e., 
$\Out(u)$ is non-empty for all vertices~$u$.

\paragraph{Plays.}
A game is initialized by placing a token on a vertex. Then the two
players form an infinite path, called \emph{play}, in the game graph by 
moving the token along the edges. Whenever the token is on a vertex of~$V_\pl$, 
player~$\pl$ moves the token along one of the outgoing edges of the vertex.
Formally, a \emph{play} is an infinite sequence $\sseq$ of vertices such that
$(v_j,v_{j+1}) \in E$ for all $j \geq 0$.

\paragraph{Parity games.}
A \emph{parity game} $\pgame = (\game, \prio)$ with $\numprio$ priorities 
consists of a game graph $G = ((V, E),(\ve, \vo))$ and a \emph{priority
function} $\prio: V \rightarrow [\numprio]$ that assigns an integer from the set $[\numprio] = \set{0, \ldots, \numprio -1}$ to each vertex
(see Figure~\ref{fig:example2} for an example). 
Player~$\pe$ (resp.\ player~$\po$) wins a play of the parity game if the 
\emph{highest} priority occurring infinitely often in the play is 
even (resp.\ odd).
We denote by $P_i$ the set of vertices with priority $i$, i.e., 
$P_i=\{ v \in V\mid \prio(v)=i \}$. 
Note that if $P_i$ is empty for $0 < i < \numprio-1$, then the priorities $> i$
can be decreased by~2 without changing the parity condition, and when $P_{\numprio-1}$
is empty, we simply have a parity game with a priority less; thus we assume 
w.l.o.g.\
$P_i \ne \emptyset$ for $0 < i < \numprio$.

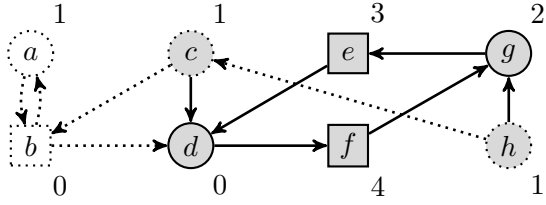
\begin{figure}[t]
\centering
 \begin{tikzpicture}
\matrix[column sep=12mm, row sep=6mm]{
	\node[p1,label=60:$1$, fill=white, dotted] (a) {$a$};
	& \node[p1,label=60:$1$, dotted] (c) {$c$};
	& \node[p2,label=60:$3$] (e) {$e$};
	& \node[p1,label=60:$2$] (g) {$g$};\\
	\node[p2,label=-60:$0$, fill=white, dotted] (b) {$b$};
	& \node[p1,label=-60:$0$] (d) {$d$};
	& \node[p2,label=-60:$4$] (f) {$f$};
	& \node[p1,label=-60:$1$, dotted] (h) {$h$};\\
};
\path[arrow, bend right, dotted] 
	(a) edge (b)
	(b) edge (a);
\path[arrow] 
	(c) edge[dotted] (b)
	(b) edge[dotted] (d)
	(d) edge (f)
	(f) edge (g)
	(g) edge (e)
	(e) edge (d)
	(c) edge (d)
	(h) edge[dotted] (c)
	(h) edge (g)
	;
\end{tikzpicture}
\caption{A parity game with 5 priorities.
Circles denote player~$\pe$ vertices, squares denote player~$\po$ vertices.
The numeric label of a vertex gives its priority,
e.g., $a$ is an $\pe$-vertex with priority~$1$. 
The set of solid vertices is a player-$\pe$ dominion and the union of 
this $\pe$-dominion with the vertices $c$ and $h$ is another $\pe$-dominion
that is equal to the winning set of player $\pe$ in this game. 
The solid edges indicate a winning strategy for player~$\pe$. } 

\label{fig:example2}
\end{figure}

\paragraph{Strategies.}
A \emph{strategy} of a player~$\pl \in \set{\pe,\po}$ is a 
function that, given a finite prefix of a play ending at $v \in V_\pl$, 
selects a vertex from $\Out(v)$ to extend the finite prefix. 
\emph{Memoryless strategies} depend only on the last vertex of the finite prefix.
That is, a memoryless strategy of player~$\pl$ is a function $\straa: V_\pl \rightarrow V$ 
such that for all $v \in V_\pl$ we have $\straa(v) \in \Out(v)$. 
It is well-known that for parity games it is sufficient to consider memoryless 
strategies~\cite{EmersonJ91,McNaughton93}. 
Therefore we only consider memoryless strategies from now on.
A start vertex~$v$, a strategy~$\straa$ for $\pe$, and a 
strategy~$\strab$ for $\po$ describe a unique play $\pat(v,\straa,\strab)=\sseq$,
which is defined as follows: 
$v_0=v$ and for all $i \geq 0$, if $v_i \in \ve$, then $\straa(v_i)=v_{i+1}$, 
and if $v_i \in \vo$, then $\strab(v_i)=v_{i+1}$.

\paragraph{Winning strategies and sets.}
A strategy 
$\straa$ is \emph{winning} for player~$\pe$ at start vertex
$v$ iff for all strategies $\strab$ of player~$\po$ we have that the play 
$\pat(v,\straa,\strab)$ satisfies the parity condition, and analogously for 
winning strategies for player~$\po$.
A vertex~$v$ belongs to the \emph{winning set} $W_\pl$ of player~$\pl$ 
if player~$\pl$ has a winning strategy from start vertex $v$. 
Every vertex is winning for exactly one of the two players.
The algorithmic problem we study for parity games is to compute the winning sets 
of the two players. 
A non-empty set of vertices $D$ is a \emph{player-$\pl$ dominion} 
if player~$\pl$ has a winning strategy from every vertex of $D$ that
also ensures only vertices of $D$ are visited.

\subsection{Set-based Symbolic Operations}\label{sec:symops}
Symbolic algorithms operate on sets of vertices, which are usually 
described by Binary Decision Diagrams (BDD)~\cite{Lee59,Akers78}.
For the symbolic algorithms for parity games we consider the most basic form of
symbolic operations, namely, \emph{set-based symbolic} operations.
More precisely, 
we only allow the following operations:

\noindent \emph{Basic set operations.} First, we allow basic set operations 
like $\cup$, $\cap$, $\setminus$, $\subseteq$, and $=$.

\noindent \emph{One-step operations.} 
Second, we allow the following symbolic one-step operations:\\
(a)~the one-step predecessor operator 
$
  \pre(B)=\{v \in V \mid \exists u \in B: (v,u) \in E\};
$
and 
(b)~the one-step \emph{controllable} predecessor operator
$
  \cpre{\pl}(B) = \left\{ v \in V_\pl \mid \Out(v) \cap B \ne \emptyset \right\} \cup  
		  \left\{ v \in V_\op \mid \Out(v) \subseteq B \right\};
$
i.e., the $\cpre{\pl}$ operator computes all vertices from which $\pl$ can ensure 
that in the next step the successor belongs to the given set~$B$.
Moreover, the $\cpre{\pl}$ operator can be defined using the $\pre$ operator
and basic set operations as follows:
$
  \cpre{\pl}(B)= \pre(B) \setminus (V_\op \cap \pre(V \setminus B)).
$

Algorithms that use only the above operations are called \emph{set-based symbolic}
algorithms. Additionally, successor operations can be allowed but are not needed
for our algorithms. The above symbolic operations correspond to  
primitive operations in 
standard symbolic 
packages like \textsc{CuDD}~\cite{Somenzi15}.

Typically, the basic set operations are cheaper (as they encode relationships
between state variables) as compared to the one-step symbolic operations (which encode 
the transitions and thus the relationship between the variables of the present and of 
the next state).
Thus in our analysis we distinguish between these two types of operations.

For the \emph{space} requirements of set-based symbolic algorithms, 
as per standard convention~\cite{BrowneCJLM97,BustanKV04},
we consider that a set is stored in constant space (e.g., a set can be represented symbolically as one BDD~\cite{Bryant86}). 
We thus consider the space requirement of a symbolic algorithm to be the maximal number 
of sets that the algorithm has to store.

\subsection{Progress Measure Algorithm}\label{sec:pmdef}

We first provide basic intuition for the progress measure~\cite{Jurdzinski00}
and then provide the formal definitions. 
Solving parity games can be reduced to computing the progress 
measure~\cite{Jurdzinski00}.
In Section~\ref{sec:pmalg} we present a set-based symbolic algorithm
to compute the progress measure. 

\subsubsection{High-level intuition}
Towards a high-level intuition behind the progress measure, consider an 
$\pe$-dominion~$D$, i.e., 
player~$\pe$ wins on all vertices of $D$ without leaving $D$.
Fix a play started at a vertex $u \in D$ in which player~$\pe$ follows her 
winning strategy on $D$. In the play from some point on the highest priority 
visited by the play, say~$\prio^*$, has to be even. 
Let $v_*$ be the vertex after which the highest visited priority is $\prio^*$
(recall that memoryless strategies are sufficient for parity games).
Before $v_*$ is visited, the play might have visited vertices with odd priority
higher than~$\prio^*$ but the number of these vertices has to be less than~$n$.
The \emph{progress measure} 
is based on a so-called \emph{lexicographic ranking} function that assigns a \emph{rank} 
to each vertex~$v$, where the rank is a ``vector of counters''  for
the number of times player~$\po$ can force a play to visit an odd priority 
vertex before a vertex with higher even priority is reached. 
If player~$\po$ can ensure a counter value of at least $n$,
then she can ensure that a cycle with highest priority odd is reached from~$v$ 
and therefore player~$\pe$ cannot win from the vertex~$v$.
Conversely, if player~$\po$ can reach a cycle with highest priority odd 
before reaching a higher even priority, then she can also force a play to visit 
an odd priority $n$ times (thus a counter value of $n$) 
before reaching a higher even priority.
In other words, a vertex~$u$ is in the $\pe$-dominion~$D$
if and only if player~$\po$ \emph{cannot} force any counter value to reach~$n$ from~$u$.
When a vertex~$u$ is classified as winning for player~$\po$, it is marked with 
the rank $\top$ and whenever $\po$ has a strategy for some vertex $v$ to 
reach a $\top$-ranked vertex, it is also winning for player~$\po$ and thus ranked $\top$.
Computing the progress measure is done by updating the rank of a vertex 
according to the ranks of its successors and is equal to computing the least
simultaneous fixed point for all vertices with respect to ``ranking functions''.

\paragraph{Strategies from progress measure.}
An additional property of the progress measure
is that the ranks assigned to the vertices of the $\pe$-dominion provide a 
certificate for a winning strategy of player~$\pe$ within the dominion, namely, 
player~$\pe$ can follow edges that lead to vertices with ``lower or equal'' rank
with respect to a specific ordering of the ranks.

We next provide formal definitions of \emph{rank}, the \emph{ranking function},
the ordering on the ranks, the \emph{lift}-operators, 
and finally the \emph{progress measure} (see also~\cite{Jurdzinski00}).

\subsubsection{Formal Definitions}
\paragraph{The progress measure domain $M^\infty_G$.}
We consider parity games with $n$ vertices and priorities~$[\numprio]$.
Let $n_i$ be the number of vertices with priority~$i$ for odd~$i$ 
(i.e., $n_i= |P_i|$), let $n_i= 0$ for even $i$, and let $N_i=[n_i+1]$ for 
$0 \leq i < \numprio$.
Let $M_\game=(N_0 \times N_1 \times \cdots \times N_{\numprio-2} \times N_{\numprio-1})$
be the product domain where every even 
index is~0 and every odd index $i$ is a number between $0$ and $n_i$.
The \emph{progress measure domain} is $M^\infty_\game=M_\game \cup \{\top\}$, where $\top$ 
is a special element called the top element.
Then we have $|M^\infty_\game|=1+\prod_{i=1}^{\lfloor\numprio/2\rfloor}(n_{2i-1}+1) = 
O\big(\big(\frac{n}{\lfloor \numprio/2\rfloor}\big)^{\lfloor \numprio/2\rfloor}\big)$~\cite{Jurdzinski00} 
(this bound uses that w.l.o.g.\
$|P_i| >0$ for each priority $i > 0$).

\paragraph{Ranking functions.}
A \emph{ranking function} $\rho: V \rightarrow M^\infty_\game$ 
assigns to each vertex a \emph{rank}~$\rank$ that is 
either one of the $\numprio$ dimensional vectors in 
$M_\game$ or the top element~$\top$. Note that a rank has at most
$\lfloor \numprio / 2 \rfloor$ non-zero entries.
Informally, we call the entries of a rank with an odd index~$i$ 
a ``counter'' because as long as the top element is not reached,
it counts (with ``carry'', i.e., if $n_i$ is reached, the next highest counter 
is increased by one and the counter at index~$i$ is reset to zero)
the number of times a vertex of priority~$i$ is reached before a vertex 
of higher priority is reached (from some specific start vertex).
The co-domain of $\rho$ is $M^\infty_\game=M_\game \cup \{\top\}$ and 
we index the elements of the vectors from $0$ to $\numprio-1$. 

\paragraph{Lexicographic comparison operator $<$.}
We use the following ordering $<$ of the ranks assigned by $\rho$:
the vectors are considered in the lexicographical order, 
where the left most entry is the least significant one and
the right most entry is the most significant one, and $\top$ is the maximum element
of the ordering.
We write $\bar{0}$ to refer to the all zero vector 
(i.e., the minimal element of the ordering)
and $\bar{N}$ to refer to the maximal vector $(n_0, n_1, \dots, n_{\numprio - 1})$
(i.e., the second largest element, after $\top$, in the ordering).

\paragraph{Lexicographic increment and decrement operations.}
Given a rank~$r$, i.e., either a vector or $\top$, 
we refer to the successor in the ordering~$<$ by $\incr(\rank)$ (with 
$\incr(\top)=\top$), and to the predecessor in the ordering~$<$ by $\decr(\rank)$ 
(with $\decr(\bar{0})=\bar{0}$).
We also consider restrictions of  $\incr$ and $\decr$ to fewer dimensions,
which are described below.
Given a vector $x = (x_0,x_1,x_2, \dots, x_{\numprio-1})$, we denote
by $\proj{x}_\ell$ (for $0 \leq \ell < \numprio$) the vector
$(0, 0, \dots , 0, x_\ell, \dots, x_{\numprio - 1})$, where we set all elements with 
index less than $\ell$ to $0$; in particular $x = \proj{x}_0$.
Intuitively, we use the notation $\proj{x}_\ell$ to ``reset the counters''
for priorities lower than~$\ell$ when a vertex of priority~$\ell$ is reached
(as long as we have not counted up to the top element).
Moreover, we also generalize the ordering to a family of orderings $<_\ell$ where $x <_\ell y$ for two vectors $x$ and $y$ iff $\proj{x}_\ell < \proj{y}_\ell$;
the top element $\top$ is the maximum element of each ordering.
In particular, $x <_0 y$ iff $x < y$ and in our setting also $x <_1 y$ iff $x < y$.
We further have restricted versions $\incr_\ell$ and $\decr_\ell$ of  $\incr$ and 
$\decr$; note that $\decr_\ell$ is a partial function and that
$\ell$ will be the priority of the vertex~$v$ for which we want to 
update its rank and $x$ will be the rank of one of its neighbors in the game graph.
\begin{itemize}
 \item $\incr_\ell(x)$:   For $x=\top$ we have $\incr_\ell(\top)=\top$; 
			  Otherwise $\incr_\ell(x)= \proj{x}_\ell$ if $\ell$ is even and
			  $\incr_\ell(x)=\min\{y \in M^\infty_\game \mid y >_\ell x\}$ if $\ell$ is odd.
 \item $\decr_\ell(x):$ $\decr_\ell(x)=\bar{0}$ if $\proj{x}_\ell=\bar{0}$; 
			Otherwise if $\proj{x}_\ell > \bar{0}$ then 
			$\decr_\ell(x)=\min\{y \in M_\game \mid x=\incr_\ell(y)\}$.
\end{itemize}
For $\bar{0} < \proj{x}_\ell < \top$ we have $\incr_\ell(\decr_\ell(x)) = \decr_\ell(\incr_\ell(x))=\proj{x}_\ell$ while for $\top$ we 
only have $\incr_\ell(\decr_\ell(\top))=\top$ and for $\proj{x}_\ell=\bar{0}$ only $\decr_\ell(\incr_\ell(x))=\bar{0}$. 
By the restriction of $\incr$ by the priority~$\ell$ of~$v$,
for both even and odd priorities the counters for lower (odd) priorities
are reset to zero as long as the top element is not reached. For an odd~$\ell$
additionally the counter for $\ell$ is increased or, if the counter for $\ell$ 
has already been at~$n_\ell$, then one of the higher counters is increased while the 
counter for $\ell$ is reset to zero as well; if no higher counter can be increased 
any more, then the rank of~$v$ is set to~$\top$. 

\paragraph{The $\best$ operation.}
Recall the interpretation of the progress measure as a witness for a 
player-$\pe$ winning strategy on an $\pe$-dominion, where player~$\pe$ wants 
to follow a path of non-increasing rank. The function~$\best$ we define next
reflects the ability of player~$\pe$ to choose the edge leading to the lowest 
rank when he owns the vertex, while for player-$\po$ vertices all edges 
need to lead to non-increasing ranks if player~$\pe$ can win from this vertex.
The function~$\best$ for each vertex~$v$ and ranking function~$\rho$ is given by
\[ 
\best(\rho,v) = 
 \begin{cases}
  \min\{\rho(w)\mid (v,w) \in E\} & \text{if } v \in V_\pe\,, \\
  \max\{\rho(w)\mid (v,w) \in E\} & \text{if } v \in V_\po\,.
 \end{cases}
\]

\paragraph{The $\lift$ operation and the progress measure.}
Finally, the lift operation implements the incrementing of the rank of a vertex~$v$
according to its priority and the ranks of its neighbors:
\[ 
\lift(\rho,v)(u) = 
 \begin{cases}
  \incr_{\prio(v)}(\best(\rho,v)) & \text{if } u=v\,, \\
  \rho(u) & \text{otherwise}\,.
 \end{cases}
\]
The $\lift(.,v)$-operators are monotone and the \emph{progress measure} 
for a parity game is defined as the \emph{least simultaneous fixed point of all $\lift(.,v)$-operators}.
The progress measure can be computed by starting with the ranking function equal 
to the all-zero function and iteratively applying the $\lift(.,v)$-operators
in an arbitrary order~\cite{Jurdzinski00}.
Note that in this case the $\lift(.,v)$-operator assigns only rank vectors~$\rank$
with $\rank = \proj{\rank}_{\prio(v)}$ to~$v$.
See \cite{Jurdzinski00} for a worst-case example for any lifting algorithm.
By \cite{Jurdzinski00}, the winning set of player $\pe$ can be obtained from 
the progress measure by selecting those vertices whose rank is a vector, 
i.e., smaller than $\top$.

\begin{lemma}\cite{Jurdzinski00}\label{lem:jur:pmcorrect}
For a given parity game and the progress measure~$\rho$ with co-domain $M^\infty_\game$, 
the set of vertices with $\rho(v)<\top$ is exactly the winning set of player~$\pe$.
\end{lemma}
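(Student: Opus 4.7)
The plan is to prove the two inclusions $\{v \mid \rho(v) < \top\} \subseteq \we$ and $\we \subseteq \{v \mid \rho(v) < \top\}$ separately. The first direction extracts a winning player-$\pe$ strategy directly from $\rho$, while the second uses that $\rho$ is the \emph{least} simultaneous fixed point: it suffices to exhibit any prefixed point $\rho^*$, that is, any $\rho^*$ with $\rho^*(v) \geq \incr_{\prio(v)}(\best(\rho^*, v))$ for every $v$, that is strictly below $\top$ on $\we$, because then $\rho \leq \rho^*$ forces $\rho(v) < \top$ there.

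For the first inclusion I would define the memoryless strategy $\straa(v) \in \arg\min\{\rho(w) \mid (v,w) \in E\}$ at every $v \in \ve$ with $\rho(v) < \top$. Along any play $\sseq$ consistent with $\straa$, the fixed-point equation $\rho(v) = \incr_{\prio(v)}(\best(\rho, v))$ yields $\rho(v_i) \geq \incr_{\prio(v_i)}(\rho(v_{i+1}))$ for every $i$: at $\pe$-vertices this uses the $\arg\min$ choice, and at $\po$-vertices it uses that $\best$ takes a maximum and $\incr_\ell$ is monotone. Two facts then follow by induction along the play. First, $\incr_\ell(\top) = \top$ gives $\rho(v_{i+1}) = \top \Rightarrow \rho(v_i) = \top$, so $\rho(v_0) < \top$ propagates and every $\rho(v_i) < \top$. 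Second, suppose for contradiction that the highest priority $\ell$ visited infinitely often is odd; on the tail where no priority exceeds $\ell$, the weak inequality $\rho(v_i) \geq \rho(v_{i+1})$ gives $\proj{\rho(v_i)}_\ell \geq \proj{\rho(v_{i+1})}_\ell$, and at every index with $\prio(v_i) = \ell$ the definition of $\incr_\ell$ for odd $\ell$ forces the strict step $\proj{\rho(v_i)}_\ell >_\ell \proj{\rho(v_{i+1})}_\ell$. Infinitely many strict descents in a finite chain is impossible, so the highest priority infinitely often is even and player~$\pe$ wins.

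For the second inclusion I would fix a memoryless winning strategy $\straa^*$ of player~$\pe$ on $\we$, which exists by memoryless determinacy of parity games. Set $\rho^*(v) = \top$ off $\we$, and on $\we$ define $\rho^*(v)_\ell$ for odd $\ell$ as the maximum, over player-$\po$ strategies $\strab$ in the subgame obtained by fixing $\straa^*$, of the number of visits to $P_\ell$ along $\pat(v,\straa^*,\strab)$ before the first visit to a vertex of priority strictly greater than $\ell$. The bound $\rho^*(v)_\ell \leq n_\ell$ on $\we$ uses that $\straa^*$ is winning: more than $n_\ell$ such visits to $P_\ell$ with no intervening higher priority would repeat some $\ell$-priority vertex and close a cycle of highest priority $\ell$ (odd), contradicting that $\straa^*$ wins against every $\po$-strategy.

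The main obstacle will be verifying the prefixed-point inequality $\rho^*(v) \geq \incr_{\prio(v)}(\best(\rho^*, v))$ at every $v$. The delicate part is matching the carry semantics of $\incr_\ell$ --- increment the $\ell$-entry, reset entries below $\ell$, and carry to the next higher odd position when the $\ell$-counter saturates at $n_\ell$ --- with the recursive structure of the worst-case counts defining $\rho^*$. Concretely, a single visit to a priority-$\ell$ vertex has to correspond to exactly one step in $<_\ell$, while saturation of the $\ell$-counter along any path must force a higher odd counter to advance, invoking once more the no-losing-cycle guarantee provided by $\straa^*$. Once this prefixed-point inequality is established, the leastness of $\rho$ yields $\rho(v) \leq \rho^*(v) < \top$ on $\we$, and combining with the first inclusion completes the proof.
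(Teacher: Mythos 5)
The paper does not prove this lemma at all: it is imported from \cite{Jurdzinski00} as a black box, so there is no in-paper argument to compare against. What you have written is essentially Jurdzi\'nski's original proof --- soundness via the no-infinite-descent argument along plays consistent with the min-successor strategy, and completeness by exhibiting a prefixed point $\rho^*$ built from a memoryless winning strategy and invoking leastness of $\rho$. Both halves are the right argument, and I do not see a step that would fail.

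Two points should be tightened. In the first direction, the inequality you propagate is not $\rho(v_i) \geq \rho(v_{i+1})$ in the full order: for even $\prio(v_i)$ the lift projects away the low-order entries, which can strictly decrease the rank with respect to $<_0$. What the fixed-point equation actually yields is $\rho(v_i) \geq_{\prio(v_i)} \rho(v_{i+1})$ (strict for odd $\prio(v_i)$), and since on the tail all priorities are at most $\ell$ this still gives $\rho(v_i) \geq_\ell \rho(v_{i+1})$, which is all your descent argument needs. In the second direction, the prefixed-point verification you flag as the main obstacle does go through, and more easily than you suggest. First record that every successor of a vertex in $\vo \cap \we$ lies in $\we$ (otherwise $\po$ escapes), so $\best(\rho^*,v) < \top$ on $\we$. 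Then for $v \in \we$ and the relevant successor $w$ (namely $\straa^*(v)$ if $v \in \ve$, an arbitrary successor if $v \in \vo$), prepending $v$ to plays from $w$ gives the componentwise bounds $\rho^*(v)_{\ell'} \geq \rho^*(w)_{\ell'}$ for every odd $\ell' \geq \prio(v)$, with an extra $+1$ at $\ell' = \prio(v)$ when $\prio(v)$ is odd. This already yields $\rho^*(v) \geq_{\prio(v)} \best(\rho^*,v)$, strictly for odd $\prio(v)$, and the carry semantics never has to be matched by hand: since $\incr_\ell(y)$ is by definition the $<$-least element that is $>_\ell y$ and has zero entries below $\ell$, the strict inequality $\rho^*(v) >_\ell \best(\rho^*,v)$ with $\rho^*(v) < \top$ immediately gives $\rho^*(v) \geq \incr_\ell(\best(\rho^*,v))$. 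With that paragraph filled in, your proof is complete.
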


This implies that to solve parity games it is sufficient to provide an algorithm 
that computes the least simultaneous fixed point of all $\lift(.,v)$-operators.
The $\lift$ operation can be computed explicitly in $O(m)$ time, which gives
the \textsc{SmallProgressMeasure} algorithm of~\cite{Jurdzinski00}. 
The \textsc{SmallProgressMeasure} algorithm is an explicit algorithm that requires $O(m \cdot |M^\infty_\game|) =
O\big(m \cdot \big(\frac{n}{\lfloor \numprio/2\rfloor}\big)^{\lfloor \numprio/2\rfloor}\big)$  
time and $O(n \cdot \numprio)$ space (assuming constant size integers).

\begin{example}\label{example:parityExplicit}
In this example we apply the explicit progress measure algorithm to the parity game in 
Figure~\ref{fig:example}. We have $n_1=3$ and $n_3=1$ and thus we have to consider ranks in the co-domain 
$M^\infty_G=\{(0,0),(1,0),(2,0),(3,0),(0,1),$ $(1,1),(2,1),(3,1),\top\}$. 
We initialize $\rho(v)=(0,0)$ for all~$v$ and then apply lifting operations as follows:
\begin{enumerate}
  \item First, consider the vertices $a,h,c$. 
	For $v \in \{a,h,c\}$ we have $\best(\rho,v)=(0,0)$ and $\prio(v)=1$ and thus
	the lift operation $\lift(\rho,v)$ sets $\rho(v)=(1,0)$.
	Now consider the vertex $e$ which also hast $\best(\rho,e)=(0,0)$ but $\prio(e)=3$.
	Here $\lift(\rho,a)$ sets $\rho(a)=(0,1)$.
	
  \item Now consider vertices with even priority.
	The vertex $g$ has $\best(\rho,g)=(0,1)$ and as $\prio(g)<3$ we lift $\rho(g)$ to $(0,1)$.
	The vertex $b$ has $\best(\rho,b)=(1,0)$ and as $\prio(g)<1$ we lift $\rho(b)$ to $(1,0)$.
	The vertex $d$ has $\best(\rho,d)=(0,0)$ and $\rho(d)$ stays unchanged.
	Finally the vertex $f$ now has $\best(\rho,f)=(0,1)$ but $\rho(f)$ stays unchanged because $\prio(f)>3$.
	
  \item Now as $g$ was lifted we have $\best(\rho,h)=(1,0)$ and $\rho(h)$ is lifted to $(2,0)$.
	Similarly as $b$ was lifted we have $\best(\rho,a)=(1,0)$ and $\rho(a)$ is lifted to $(2,0)$, which in turn causes
	that $\best(\rho,b)=(2,0)$ and $\rho(b)$ being lifted to $(2,0)$. This behavior continues until both
	$\rho(a)$ and $\rho(b)$ are lifted to $\top$.
	
  \item Now we already have a fixed-point with $\rho(f)=\rho(d)=(0,0)$, 
	$\rho(c)=(1,0)$,
	$\rho(h)=(2,0)$,
	$\rho(e)=\rho(g)=(0,1)$, and
	$\rho(a)=\rho(b)=\top$.
\end{enumerate}\smallskip
That is, by Lemma~\ref{lem:jur:pmcorrect}, the winning set of player~$\pe$ is $\{c,d,e,f,g,h\}$. \qee 
\end{example}

\section{Set-based Symbolic Progress Measure Algorithm for Parity Games}
\label{sec:pmalg}
In this section we present a set-based symbolic algorithm for 
parity games, with $n$ vertices and $\numprio$ priorities, by showing 
how to compute a progress measure (see Section~\ref{sec:pmdef})
using only set-based symbolic operations (see Section~\ref{sec:symops}). 
In Appendix~\ref{sec:par5} we provide additional intuition for
our algorithm for the special case of $5$ priorities.
All proofs are in Appendix~\ref{app:pargen}.

\paragraph{Key differences and challenges.}
We mention the key differences of Algorithm~\ref{alg:SymbolicParityDominion} and 
the explicit progress-measure algorithm (\cite{Jurdzinski00}, see
Section~\ref{sec:pmdef}).

(1) The main challenge for an efficient set-based symbolic algorithm 
similar to the \textsc{SmallProgressMeasure} algorithm 
is to represent $\Theta(n^{c/2})$ many numerical 
values succinctly with $O(n)$ many sets, such that they can still be efficiently 
processed by a symbolic algorithm.

(2) To exploit the power of symbolic operations, in each iteration of the algorithm 
we compute all vertices whose rank can be increased to a certain value $\rank$.
This is in sharp contrast to the explicit progress-measure 
algorithm, where vertices are considered one by one 
and the rank is increased to the maximal possible value. 

\paragraph{Key concepts.}
Recall 
that the progress measure for parity games
is defined as the least simultaneous fixed point of the $\lift(\rho,v)$-operators 
on a ranking function~$\rho: V \rightarrow M^\infty_\game$. There are two key aspects 
of our algorithm:

(1) \emph{Symbolic encoding of numerical domain.}
In our symbolic algorithm we cannot directly deal with the ranking function
but have to use sets of vertices to encode it. 
We first formulate our algorithm with sets $S_\rank$ for $\rank \in  M^\infty_\game$
that contain all vertices that have rank $\rank$ or higher;
that is, given a function~$\rho$, the corresponding sets 
are $S_\rank=\{v \mid \rho(v) \geq r\}$. 
On the other hand, given a family of sets $\{S_\rank\}_\rank$,
the corresponding ranking function $\rho_{\{S_\rank\}_\rank}$ is given by
$
  \rho_{\{S_\rank\}_\rank}(v) = \max \{r \in M^\infty_\game \mid v \in S_\rank \}
$.
This formulation encodes the numerical domain with sets but uses exponential in $\numprio$ many sets.

(2) \emph{Space efficiency.} 
We refine the algorithm to directly encode the ranks with one set 
for each possible index-value pair. This reduces the required 
number of sets to linear at the cost of increasing the number of set 
operations only by a factor of $n$;
the number of one-step symbolic operations does not increase.

We first present the variant that uses an exponential number of sets and then
show how to reduce the number of sets to linear.

The above ideas yield a set-based symbolic algorithm, but since we now deal with 
sets of vertices, as compared to individual vertices, the correctness needs 
to be established.
The non-trivial aspect of the proof is to identify appropriate 
\emph{invariants on sets} (which we call \emph{symbolic invariants}, 
see Invariant~\ref{invariant}) and use them to establish the correctness.

\subsection{The Set-based Symbolic Progress Measure Algorithm}

\paragraph{The codomain $M^\infty_\domsize$.} We formulate our 
algorithm such that it cannot only compute the winning sets of the players but 
also $\pe$-dominions of size at most $\domsize + 1$. (For $\po$-dominions add one to 
each priority and exchange the roles of the two players.) The only change needed
 for this is to use the codomain $M^\infty_\domsize$, instead of $M^\infty_\game$, 
 for the $\incr$ and $\decr$ operations. The codomain $M^\infty_\domsize$
 contains all ranks of $M^\infty_\game$ whose entries sum up to at most $\domsize$ 
 (see Appendix~\ref{sec:bigstepexpl}).

\begin{algorithm}
	\small
	\SetAlgoRefName{SymbolicParityDominion}
	\caption{Symbolic Progress Measure Algorithm}
	\label{alg:SymbolicParityDominion}
	\SetKwInOut{Input}{Input}
	\SetKwInOut{Output}{Output}
	\SetKw{break}{break}
	\BlankLine
	\Input{%
	  \emph{parity game} $\pgame = (\game, \prio)$, with 
	  \emph{game graph} $\game = ((V, E),(\ve, \vo))$, \\
	  \emph{priority function} $\prio: V \rightarrow [\numprio]$, and
	  \emph{parameter} $\domsize \in [0, n] \cap \mathbb{N}$
	}
	\Output
	{
	  Set containing all $\pe$-dominions of size $\le \domsize+1$, which is an $\pe$-dominion or empty.
	}
	\BlankLine
	$S_{\bar{0}} \gets V$;\
	${S_\rank} \gets {\emptyset}$ for $\rank \in M^\infty_\domsize \setminus \{\bar{0}\}$\;
	$\rank \gets \incr(\bar{0})$\;
	\While
	  {
	    $true$
	  }
	  {
	    \If{$\rank \not= \top$\label{alg:startupdateS}}
	    {
	      Let $\ell$ be maximal such that $\rank=\proj{\rank}_\ell$\; 
	      $S_\rank \gets S_\rank \cup 
				\bigcup_{1 \le \idx \leq (\ell+1)/2}
				\left(
				  \cpre{\po}(S_{{\decr_{2\idx-1}(\rank)}}) \cap P_{2\idx-1} 
				\right)$\;\label{alg:symbParityc_line6}
		\Repeat
		{
		  a fixed-point for $S_\rank$ is reached
		}
		{
		  $S_\rank \gets S_\rank \cup \left( \cpre{\po}(S_\rank) \setminus \bigcup_{\ell < \idx < \numprio} P_\idx \right)$\label{alg:symbParityc_line8}
		}\label{alg:endupdateS}  
	    }
	    \ElseIf{$\rank=\top$\label{alg:startupdatetop}}
	    {
		$S_{\top} \gets  S_\top \cup 
				    \bigcup_{1 \le \idx \leq \lfloor \numprio/2 \rfloor}
				    \left(
				      \cpre{\po}(S_{{\decr_{2\idx-1}(\top)}}) \cap P_{2\idx-1} 
				    \right)$\;\label{alg:symbParityc_line11}
		\Repeat
		{
		  a fixed-point for $S_\top$ is reached
		}
		{
		  $S_\top \gets S_\top \cup \left( \cpre{\po}(S_\top) \right)$\label{alg:symbParityc_line13}
		}\label{alg:endupdatetop}
	    }
	    $\rank'\gets \decr(\rank)$\;\label{alg:symbParityc_line15}
	    \If{$S_{\rank'} \supseteq  S_\rank$ and $\rank<\top$\label{alg:symbParityc_line16}}
	    {
	      $\rank \gets \incr(\rank)$\label{alg:rankinc}
	    }
	    \ElseIf{$S_{\rank'} \supseteq  S_\rank$ and $\rank=\top$
	    \label{alg:ifterminates}}
	    {
	      \break\label{alg:terminates}
	    }
	    \Else
	    {\label{alg:symbParityc_line20}
	      \Repeat{$S_{\rank'} \supseteq  S_\rank$\label{alg:symbParityc_line24}}
	      {\label{alg:rankloop}
		  $S_{\rank'} \gets S_{\rank'} \cup S_\rank$\;\label{alg:symbParityc_line22}
		  $\rank'\gets \decr(\rank');$
	      }
	      $\rank \gets \incr(\rank')$\label{alg:rankafterloop}\;\label{alg:symbParityc_line25}
	    }
	  }
	 \Return $V \setminus S_\top$
\end{algorithm}
\paragraph{The sets $S_\rank$ and the ranking 
function~$\rho_{\{S_\rank\}_\rank}$.}
The algorithm implicitly maintains a rank for each vertex. 
A vertex is contained in a set $S_\rank$
only if its maintained rank is \emph{at least}~$\rank$. 
Each set $S_\rank$ is monotonically increasing throughout the algorithm. 
The rank of a vertex~$v$ is the highest $\rank$ such that $v \in S_\rank$.
In other words, the family of sets~$\{S_\rank\}_\rank$ defines the ranking function
$\rho_{\{S_\rank\}_\rank}(v) = \max \{\rank \in M^\infty_h \mid v \in S_\rank \}$.
When the rank of a vertex is increased, this information has to be propagated to its 
predecessors. This is achieved efficiently by maintaining \emph{anti-monotonicity}
among the sets, i.e., we have $S_{\rank'} \supseteq S_\rank$ for all $\rank$ and 
all $\rank' < \rank$ before and after each iteration.
Anti-monotonicity together with defining the sets $S_{\rank'}$ to contain 
vertices with rank \emph{at least} $\rank'$ instead of \emph{exactly} 
$\rank'$ enables us to decide whether the rank of a vertex~$v$ can be increased 
to $\rank$ by only considering one set $S_{\rank'}$.

\paragraph{Structure of the algorithm.}
The set $S_{\bar{0}}$ is initialized with the set of all vertices~$V$,
while all other sets $S_\rank$ for $\rank > \bar{0}$ are initially empty, i.e., the 
ranks of all vertices are initialized with the zero vector. The variable~$\rank$
is initially set to the second lowest rank $\incr(\bar{0})$ that is one
at index 1 and zero otherwise. In the while-loop the set $S_\rank$ is updated
for the value of $\rank$ at the beginning of the iteration (see below). 
After the update of $S_\rank$, it is checked whether the set corresponding 
to the next lowest rank already contains the vertices newly added to $S_\rank$,
i.e., whether the anti-monotonicity is preserved. 
If the anti-monotonicity is preserved despite the update of $S_\rank$,
then for $\rank < \top$ the value of $\rank$ is increased to the next highest rank
and for $\rank = \top$ the algorithm 
terminates. Otherwise
the vertices newly added to $S_\rank$ are also 
added to all sets with $\rank' < \rank$ that do not already contain them; 
the variable $\rank$ is then updated to the lowest
$\rank'$ for which a new vertex is added to $S_{\rank'}$ in this 
iteration.

\paragraph{Update of set~$S_\rank$.}
To reach a simultaneous fixed point of the lift-operators, the rank of a vertex~$v$
has to be increased to $\lift(\rho_{\{S_\rank\}_\rank}, v)(v)$
whenever the value of $\lift(\rho_{\{S_\rank\}_\rank}, v)(v)$ is strictly higher 
than $\rho_{\{S_\rank\}_\rank}(v)$ for the current ranking function
$\rho_{\{S_\rank\}_\rank}$. 
Now consider a fixed iteration of the while-loop and let $\rank$ 
be as at the beginning of the while-loop.
Let $\rho_{\{S_\rank\}_\rank}$ be denoted by $\rho$ for short.
In this update of the set $S_\rank$ we want to add to $S_\rank$
all vertices~$v$ with $\rho(v) < \rank$ and $\lift(\rho, v)(v) \ge \rank$
under the condition that the priority of~$v$ allows $v$ to be assigned the 
rank~$\rank$, i.e., $\rank = \proj{\rank}_{\prio(v)}$.
Note that by the anti-monotonicity property the set $S_\rank$ already contains 
all vertices with $\rho(v) \ge \rank$.

(1)
We first consider the case 
$\rank < \top$.
Let~$\ell$ be maximal such that $\rank=\proj{\rank}_\ell$,
i.e., the first $\ell$ entries with indices $0$ to $\ell - 1$ of $\rank$
are $0$ and the entry with index $\ell$ is larger than $0$. Note that 
$\ell$ is odd. 
We have that only the $\lift(.,v)$-operators with $\prio(v)\leq \ell$ can
increase the rank of a vertex to $\rank$
as all the others would set the element with index~$\ell$ to $0$.

Recall that $\lift(\rho,v)(v) = \incr_{\prio(v)}(\best(\rho,v))$.
The function $\best$ is implemented by the $\cpre{\po}$ operator: For a player-$\pe$
vertex the value of $\best$ increases only if the ranks of all 
successor have increased, for a player-$\po$ vertex it increases as 
soon as the maximum rank among the successor vertices has increased.
The function $\incr_{\prio(v)}(x)$ for $x < \top$
behaves differently for odd and even $\prio(v)$ (see Section~\ref{sec:pmdef}):
If $\prio(v)$ is odd, then $\incr_{\prio(v)}(x)$ is the smallest
rank~$y$ in $M^\infty_\domsize$ such that $y >_{\prio(v)} x$, i.e.,
$y$ is larger than $x$ w.r.t.\ indices $\ge \prio(v)$.
If $\prio(v)$ is even, then $\incr_{\prio(v)}(x)$ is equal to $x$ with 
the indices lower than $\prio(v)$ set to 0.

(i)
First, consider a $\lift(\rho,v)$ operation with odd $\prio(v) \le \ell$, i.e., 
let $\prio(v)=2\idx-1$ for some $1 \le \idx \le (\ell+1)/2$.
Then $\lift(\rho, v)(v) \ge \rank$ only if 
(a) $v \in V_\pe$ and all successors~$w$ have 
$\rho(w) \geq  \decr_{2\idx-1}(\rank)$, or 
(b) $v \in V_\po$  and one successor~$w$ has $\rho(w) \geq  \decr_{2\idx-1}(\rank)$.
That is, $\lift(\rho, v)(v) \ge \rank$ only if $v \in 
\cpre{\po}(S_{\decr_{2\idx-1}(\rank)})$.
Vice versa, we have that if $v \in \cpre{\po}(S_{\decr_{2\idx-1}(\rank)})$ then 
by $\rho = \rho_{\{S_\rank\}_\rank}$ also $\lift(\rho, v)(v) \ge \rank$.
This observation is implemented in \ref{alg:SymbolicParityDominion} in 
line~\ref{alg:symbParityc_line6}, where such vertices $v$ are added to $S_\rank$. 

(ii)
Now, consider a $\lift(\rho,v)$ operation with even $\prio(v) \le \ell$,
i.e., let $\prio(v)=2\idx$ for some $1 \le \idx \le \ell/2$.
Then $\lift(\rho, v)(v) \ge \rank$ only if 
(a) $v \in V_\pe$ and all successors $w$ have $\rho(w) \geq  \rank$, or 
(b) $v \in V_\po$  and one successor $w$ has $\rho(w) \geq  \rank$.
That is, $\lift(\rho, v)(v) \ge \rank$ only if $v \in \cpre{\po}(S_{\rank})$.
Vice versa, we have that if $v \in \cpre{\po}(S_{\rank})$ then 
$\lift(\rho, v)(v) \ge \rank$.
In \ref{alg:SymbolicParityDominion} these vertices are added iteratively in 
line~\ref{alg:symbParityc_line8} until a fixed point is reached.
The algorithm 
also adds vertices~$v$
with odd priority to $S_\rank$, but due do the above argument 
we have $\lift(\rho, v)(v) > \rank$ and thus they can be included in $S_\rank$.

(2)
The case $\rank = \top$ 
works 
similarly except that (a)~every
vertex is a possible candidate for being assigned the rank $\top$, 
independent of its priority (line~\ref{alg:symbParityc_line11}),
and (b)~whenever $x$ is equal to $\top$, $\incr_{\prio(v)}(x)$ assigns 
the rank $\top$ independently of 
$\prio(v)$ (line~\ref{alg:symbParityc_line13}).

\paragraph{Sketch of bound on number of symbolic operations.}
Observe that each rank~$\rank$ is considered in at least one iteration of the 
while-loop but is only reconsidered in a later iteration 
if at least one vertex was added to the set $S_\rank$ since the last time $\rank$
was considered; in this case $O(\numprio)$ one-step operations are performed.
Thus the number of symbolic operations per set~$S_\rank$
is of the same order as the number of times a vertex is added to the 
set.
Hence the algorithm can be implemented with $O(\numprio \cdot 
n \cdot |M^\infty_\domsize|)$
symbolic operations. For the co-domain $M^\infty_\game$ the bound 
$O(\numprio \cdot n \cdot |M^\infty_\game|)$ is analogous.

\begin{example}\label{example:paritySymbolic}
In this example we apply Algorithm~\ref{alg:SymbolicParityDominion} to the parity game in 
Figure~\ref{fig:example2}. We have $n_1=3$ and $n_3=1$ and thus we have to consider ranks in the co-domain 
$M^\infty_G=\{(0,0),(1,0),(2,0),(3,0),(0,1),$ $(1,1),(2,1),(3,1),\top\}$ (we ignore entires of ranks that are always zero in this notation). \smallskip

\noindent The algorithm initializes the set $S_{(0,0)}$ to $\{a,b,c,d,e,f,g,h\}$
and $\rank$ to $(1,0)$. 
All the other sets $S_\rank$ are initialized as the empty set.
It then proceeds as follows:
\begin{enumerate}
\item In the first iteration of the while-loop it processes $\rank=(1,0)$.
We have $\ell = 1$ and thus the only possible value of $k$ in line~\ref{alg:symbParityc_line6} is $k = 1$. That is, 
line~\ref{alg:symbParityc_line6} adds the vertices in  $\cpre{\po}(S_{0,0}) \cap P_1 = \{a, c, h\}$ to $S_{(1,0)}$
and then in line~\ref{alg:symbParityc_line8} also $b$ is added.
We obtain $S_{(1,0)}=\{a,b,c,h\}$ and as $S_{(1,0)} \subseteq  S_{(0,0)}$,
the rank $\rank$ is increased to $(2,0)$.

\item In the second iteration it processes $\rank=(2,0)$ and
the vertex $a$ is added to $S_{(2,0)}$ in line~\ref{alg:symbParityc_line6} 
and the vertex $b$ is added to $S_{(2,0)}$ in line~\ref{alg:symbParityc_line8},
i.e., $S_{(2,0)}=\{a,b\}$, and $\rank$ is set to $(3,0)$.

\item When processing $\rank=(3,0)$ the set $S_{(3,0)}$ is updated to 
$\{a,b\}$ and $\rank$ is increased to $(0,1)$.

\item Now the algorithm processes the rank $(0,1)$ the first time. We have 
$\ell = 3$ and thus the possible values for $k$ are $1$ and $2$.
     The vertex $a$ is added to $S_{(0,1)}$ because it is contained in
     $\cpre{\po}(S_{3,0}) \cap P_1$ and the vertex $e$ is added because it is 
     contained in $\cpre{\po}(S_{0,0}) \cap P_3$ in line~\ref{alg:symbParityc_line6}. 
     Finally, also $b$ and $g$ are added in line~\ref{alg:symbParityc_line8}.
     That is, we have  $S_{(0,1)}=\{a,b,e,g\}$.
     Now as $S_{(3,0)}\not\subseteq S_{(0,1)}$, $S_{(2,0)}\not\subseteq S_{(0,1)}$ and $S_{(1,0)}\not\subseteq S_{(0,1)}$,
     we have to decrease $\rank$ to $(1,0)$, and also to modify the other 
     sets with smaller rank as follows: $S_{(1,0)}=\{a,b,c,e,g,h\}$; and
     $S_{(2,0)}=S_{(3,0)}=\{a,b,e,g\}$.
\item The algorithm considers $\rank=(1,0)$ again, makes no changes to $S_{(1,0)}$
      and sets $\rank$ to $(2,0)$.
\item Now considering $\rank=(2,0)$, the vertex $h$ is added to the set $S_{(2,0)}$
      in line~\ref{alg:symbParityc_line6}, i.e.,  $S_{(2,0)}=\{a,b,e,g,h\}$,
      and, as $h$ is already contained in  $S_{(1,0)}$, $\rank$ is increased to $(3,0)$.
\item The set $S_{(3,0)}$ is not changed and $\rank$ is increased to
      $(0,1)$.
\item The set $S_{(0,1)}$ is not changed and $\rank$ is increased to
      $(1,1)$.
\item The vertex $a$ is added to $S_{(1,1)}$ in line~\ref{alg:symbParityc_line6} 
      and the vertex $b$ is added to $S_{(1,1)}$ in line~\ref{alg:symbParityc_line8},
      i.e., $S_{(1,1)}=\{a,b\}$, and $\rank$ is increased to $(2,1)$.
\item The vertices $a,b$ are added to $S_{(2,1)}$, i.e., $S_{(2,1)}=\{a,b\}$,
      and $\rank$ is increased to $(3,1)$.      
\item The vertices $a,b$ are added to $S_{(3,1)}$, i.e., $S_{(3,1)}=\{a,b\}$,
      and $\rank$ is increased to $\top$.      
\item The vertex $a$ is added to $S_{\top}$ in line~\ref{alg:symbParityc_line11} 
      and $b$ is added to $S_{\top}$ in line~\ref{alg:symbParityc_line13},
      i.e., $S_\top=\{a,b\}$.      
      Now as $S_{(3,1)} \subseteq S_\top$, the algorithm terminates.
\end{enumerate}\smallskip

Finally we have that 
$S_{(0,0)} = \{a,b,c,d,e,f,g,h\}$,
$S_{(1,0)} = \{a,b,c,e,g,h\}$,
$S_{(2,0)} = \{a,b,e,g,h\}$,
$S_{(3,0)} = S_{(0,1)} = \{a,b,e,g\}$,
and 
$S_{(1,1)} = S_{(2,1)} = S_{(3,1)} = S_{\top} = \{a,b\}$,
That is, the algorithm returns $\{c,d,e,f,g,h\}$ as the winning set of player~$\pe$.
The final sets of the algorithm corresponds to the progress measure $\rho$ with
$\rho(f)=\rho(d)=(0,0)$, 
$\rho(c)=(1,0)$,
$\rho(h)=(2,0)$,
$\rho(e)=\rho(g)=(0,1)$, and
$\rho(a)=\rho(b)=\top$.
\qee 
\end{example}

\paragraph{Outline correctness proof.}
In the following proof we show that 
when Algorithm~\ref{alg:SymbolicParityDominion} terminates, 
the ranking function $\rho_{\{S_\rank\}_\rank}$ is equal to
the progress measure for the given parity game and the co-domain~$M^\infty_\domsize$. 
The same proof applies to the co-domain $M^\infty_\game$.
The algorithm returns the set of vertices that
are assigned a rank $< \top$ when the algorithm terminates. 
By \cite{Schewe17} 
(see Appendix~\ref{sec:bigstepexpl}) this set is 
 an $\pe$-dominion that contains all $\pe$-dominions of size at most $\domsize + 1$
 when the co-domain $M^\infty_\domsize$ is used, and by Lemma~\ref{lem:jur:pmcorrect}
 this set is equal to the winning set of player~$\pe$ when the co-domain
 $M^\infty_\game$ is used.
 Thus it remains to show that $\rho_{\{S_\rank\}_\rank}$ equals the 
 progress measure for the given co-domain when the algorithm terminates.
 We show that maintaining the following invariants over all iteration  of the
 algorithm is sufficient for this and then prove that the invariants are 
maintained. All proofs are in Appendix~\ref{app:pargen} and are described 
for the co-domain $M^\infty_\domsize$.

\begin{invariant}[Symbolic invariants]\label{invariant}
In Algorithm~\ref{alg:SymbolicParityDominion} the following three invariants hold. 
Every rank is from the 
co-domain $M^\infty_\domsize$ and the $\lift(.,v)$-operators are defined w.r.t.\
the co-domain. Let $\tilde{\rho}$ be the progress measure of the given parity game
 and let $\rho_{\{S_{\rank}\}_{\rank}}(v) = 
\max \{\rank \in M^\infty_h \mid v \in S_{\rank} \}$
be the ranking function with respect to the sets $S_{\rank}$ that are maintained 
by the algorithm. 

	\begin{enumerate}
			\item
			Before and after each iteration of the while-loop we have that if a 
			vertex $v$ is in a set $S_{\rank_1}$ then it is also in $S_{\rank_2}$ 
			for all $\rank_2 < \rank_1$ \upbr{anti-monotonicity}.\label{inv:monotonicity}
			
			\item
			Throughout Algorithm~\ref{alg:SymbolicParityDominion} we have
$\tilde{\rho}(v) \geq \rho_{\{S_{\rank}\}_{\rank}}(v)$ for all $v \in V$.\label{inv:atleast}
			
			\item Before and after each iteration of the while-loop we have for the rank
			stored in~$\rank$ and all vertices~$v$ either 
			$\lift(\rho_{\{S_\rank\}_\rank},v)(v)  \ge  \rank$ or
			$\lift(\rho_{\{S_\rank\}_\rank},v)(v) = \rho_{\{S_\rank\}_\rank}(v)$.
			\upbr{b} After 
			the update of $S_\rank$ and before the update of $\rank$ we additionally have 
			$v \in S_\rank$ for all vertices~$v$ with 
			$\lift(\rho_{\{S_\rank\}_\rank},v)(v) = \rank$ \upbr{closure property}\label{inv:closure}.
	\end{enumerate}
\end{invariant}
\paragraph{Informal description of invariants.}
Invariant~\ref{invariant}\upbr{1} ensures 
that the definition of the sets $S_r$ and the ranking function 
$\rho_{\{S_\rank\}_\rank}$ is sound; 
Invariant~\ref{invariant}\upbr{2}
guarantees that $\rho_{\{S_\rank\}_\rank}$ is a lower bound on $\tilde{\rho}$
throughout the algorithm; and 
Invariant~\ref{invariant}\upbr{3} shows 
that when the algorithm 
terminates, a fixed point of the ranking function $\rho_{\{S_\rank\}_\rank}$
with respect to the $\lift(.,v)$-operators is reached.
Together these three properties guarantee that when the algorithm terminates 
the function $\rho_{\{S_\rank\}_\rank}$ corresponds to the progress measure, 
i.e., to the least simultaneous fixed point of the $\lift(.,v)$-operators.
We prove the invariants by induction over 
the iterations of the while-loop. In particular, Invariant~\ref{invariant}(1) 
is ensured by adding vertices newly added to a set $S_\rank$ also to sets $S_{\rank'}$ 
with $\rank' < \rank$ that do not already contain them at the end of each iteration of 
the while-loop. For Invariant~\ref{invariant}(2) we show that whenever 
$\rho_{\{S_{\rank}\}_{\rank}}(v)$ is increased, i.e., $v$ is added to 
the set $S_\rank$, then no fixed point of the lift-operator for $v$ was reached 
yet and thus also the progress measure for $v$ has to be at least as high as the new 
value of $\rho_{\{S_{\rank}\}_{\rank}}(v)$.
The intuition for the proof of Invariant~\ref{invariant}(3) is as follows: 
We first show that 
$\lift(\rho_{\{S_\rank\}_\rank},v)(v) = \rho_{\{S_\rank\}_\rank}(v)$
remains to hold for all vertices $v$ for which the value of 
$\rho_{\{S_\rank\}_\rank}(v)$ is less than the smallest value $\rank'$ for
which $S_{\rank'}$ was updated in the considered iteration. In iterations in which
the value of the variable $\rank$ is not increased, this is already sufficient 
to show part~(a) of the invariant. If $\rank$ is increased, we additionally
use part~(b) to show part~(a). For part~(b) we prove by case analysis that,
before the update of the variable~$\rank$, a vertex with 
$\lift(\rho_{\{S_\rank\}_\rank},v)(v) = \rank$ is included in $S_\rank$.
The correctness of the algorithm then follows from the invariants as outlined above.

\subsection{Reducing Space to Linear.}\label{sec:space}
Algorithm~\ref{alg:SymbolicParityDominion} requires $\lvert M^\infty_\game \rvert$ 
many sets $S_\rank$, which is drastically beyond the space requirement of the progress
measure algorithm for explicitly represented graphs.
Thus we aim to reduce the space requirement to $O(n)$ many sets
in a way that still allows to restore the sets $S_\rank$ efficiently.
For the sake of readability, we assume for this part that $\numprio$ is even.

\paragraph{Main Idea.}
The main idea to reduce the space requirement is as follows.

  (1)
  Instead of storing sets $S_{\rank}$ corresponding to a specific rank,
  we encode the value of each coordinate of the rank $\rank$ separately.
  That is, we define the sets $C^i_0, C^i_1\dots,C^i_{n_i}$ for each odd 
  priority~$i$. Intuitively, a vertex is in the set $C^i_x$ iff the $i$-th 
  coordinate of the rank of $v$ is~$x$. Given these $O(\numprio + n) \in O(n)$ sets, we have encoded the exact rank
  vector~$\rank$ of each vertex with $\rank < \top$. 
  To also cover vertices with rank $\top$, we additionally store the set 
  $S_\top$.

  (2)
  Whenever the algorithm needs to process a set $S_{\rank}$, we reconstruct 
  it from the stored sets, using a linear number of set operations.
  Algorithm~\ref{alg:SymbolicParityDominion} has to be 
  adapted as follows. 
  First, at the beginning of each iteration we have to compute the set $S_\rank$ 
  and up to $\numprio / 2$ 
  sets $S_{\rank'}$ that correspond to some predecessor $\rank'$ of $\rank$.
  Second, at the end of each iteration we have to update the sets $C^i_x$ to incorporate the updated set $S_\rank$.

\paragraph{Computing a set $S_\rank$ from the sets $C^i_x$.}
Let $r_i$ denote the $i$-th entry of $r$. To obtain the set $S^=_\rank$ of vertices with rank \emph{exactly}~$\rank$ (for $\rank < \top$), one can simply 
compute the intersection
$
  \bigcap_{1 \leq \idx \leq \numprio/2}
      C^{2\idx-1}_{\rank_{2\idx-1}}
$ 
of the corresponding sets $C^i_x$. 
However, in the algorithm we need the sets~$S_\rank$ containing 
all vertices~$v$ with a rank at least $\rank$
and computing all sets $S^=_{\rank'}$ with $\rank' \geq \rank$ is not efficient.
Towards a more efficient method to compute $S_\rank$, recall 
that a rank $\rank'< \top$ is higher than $\rank$ if either 
(a) the right most odd element of $\rank'$ 
is larger than the corresponding element in in $\rank$, i.e., $\rank'_{\numprio-1} >\rank_{\numprio-1}$,  or 
(b) if $\rank$ and $\rank'$ coincide on the $i$ right most odd elements,  
i.e., $\rank'_{\numprio - 2\idx + 1} = \rank_{\numprio - 2\idx + 1}$ for $1 \leq \idx  \le i$,
and $\rank'_{\numprio - 2i -1} > \rank_{\numprio - 2 i - 1}$. 
In case (a) we can compute the corresponding vertices by
$$
  S^0_\rank=\bigcup_{\rank_{\numprio-1} < x \le n_{\numprio-1}} C^{\numprio-1}_x
$$
while in case (b) we can compute the corresponding vertices by
$$
  S^i_\rank=\bigcap_{1 \leq \idx \le i}  C^{\numprio-2\idx+1}_{\rank_{\numprio-2\idx+1}} 
\cap \bigcup_{\rank_{\numprio - 2 i - 1} < x \leq n_{\numprio - 2 i - 1}} 
C^{\numprio - 2 i - 1}_x
$$ 
for $1 \le i \le \numprio / 2 - 1$.
That is, we can reconstruct the set $S_\rank$ by the following union of the above sets $S^i_\rank$,
the set $S^=_\rank$ of vertices with rank $\rank$, and  the set $S_\top$  of vertices with rank $\top$:
$$
    S_\rank= S_\top \cup S^=_\rank \cup \bigcup_{i=0}^{\numprio/2-1} S^i_\rank
$$
Hence, a set $S_\rank$ can be computed with 
$O(\numprio + n) \in O(n)$ many $\cup$ and 
$O(\numprio)$ many $\cap$ operations; 
for the latter bound we use an additional set to store the set
$\bigcap_{1 \leq \idx \leq i} C^{\numprio-2\idx+1}_{\rank_{\numprio-2\idx+1}}$ 
for the current value of $i$, such that for each set $S^i_\rank$ we just 
need two $\cap$ operations. This implies the following lemma.
\begin{lemma}
Given the sets $C^i_x$ as defined above, we can compute the set $S_\rank$ that contains
all vertices with rank at least $\rank$ with $O(n)$ many symbolic set operations.
No symbolic one-step operation is needed.
\end{lemma}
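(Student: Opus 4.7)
The plan is to verify that the decomposition displayed just before the lemma computes $S_\rank$ correctly and then to count the resulting symbolic set operations. Correctness reduces to partitioning the set $\{\rank' \in M^\infty_\game \setminus \{\top\} \mid \rank' \ge \rank\}$ by the right-most coordinate at which $\rank'$ strictly exceeds~$\rank$. First I would observe that since even coordinates are always zero in $M_\game$, every disagreement between $\rank'$ and $\rank$ lies at an odd index; hence for $\rank < \rank' < \top$ there is a unique $i \in \{0,\ldots,\numprio/2 - 1\}$ such that $\rank'$ agrees with $\rank$ on the $i$ right-most odd coordinates and $\rank'_{\numprio - 2i - 1} > \rank_{\numprio - 2i - 1}$. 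By definition of $C^j_x$ this places the corresponding vertex in $S^i_\rank$; the complementary cases $\rank' = \rank$ and $\rank' = \top$ contribute $S^=_\rank$ and $S_\top$, respectively. Taking unions of these pieces therefore yields exactly $S_\rank$.

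For the operation count I would compute the sets incrementally. Initialize a running intersection $T_0 = V$ and set $T_i = T_{i-1} \cap C^{\numprio - 2i + 1}_{\rank_{\numprio - 2i + 1}}$ for $i = 1, \ldots, \numprio/2$, using one $\cap$ per step, and read off $S^=_\rank = T_{\numprio/2}$ at the end. For each $i \in \{0,\ldots,\numprio/2 - 1\}$, form $U_i = \bigcup_{\rank_{\numprio - 2i - 1} < x \le n_{\numprio - 2i - 1}} C^{\numprio - 2i - 1}_x$ with at most $n_{\numprio - 2i - 1}$ unions, let $S^i_\rank = T_i \cap U_i$ with one intersection, and accumulate into a union register with one further $\cup$. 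The total cost of the $U_i$ is $\sum_i n_{\numprio - 2i - 1} \le n$ by the definition of the $n_j$; the maintenance of $T_i$ plus the intersections $T_i \cap U_i$ contribute $O(\numprio)$ further intersections; and the outer unions combining the $S^i_\rank$, $S^=_\rank$, and $S_\top$ add another $O(\numprio)$ unions. Since by the standing assumption on priorities $\numprio \le n + 1$, the algorithm uses $O(n)$ basic set operations and invokes neither $\pre$ nor $\cpre{\pl}$.

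The main obstacle is the partition argument rather than the bookkeeping: one must confirm that the index~$i$ labeling $S^i_\rank$ is well-defined on $\{\rank' \mid \rank < \rank' < \top\}$, that no such rank is missed by the union over $i$, and that the formula for $S^i_\rank$ captures \emph{all} vertices with the prescribed disagreement pattern regardless of their less significant coordinates. Once this is established, the incremental maintenance of $T_i$ together with the bound $\sum_j n_j \le n$ makes the $O(n)$ estimate immediate.
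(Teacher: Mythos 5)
Your proposal is correct and follows essentially the same route as the paper: the same decomposition of $S_\rank$ into $S_\top$, $S^=_\rank$, and the sets $S^i_\rank$ indexed by the right-most odd coordinate of disagreement, and the same operation count obtained by maintaining the running intersection of the matched coordinates and bounding the unions by $\sum_j n_j \le n$. The only difference is that you spell out the partition argument and the incremental bookkeeping slightly more explicitly than the paper does.
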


\paragraph{Updating a set $S_\rank$.}
Now consider we have updated a set $S_\rank$ during the iteration of the while-loop,
and now we want to store the updated set $S_\rank$ within the 
sets $C^i_x$. 
That is, 
we have already computed the fixed-point for $S_\rank$ and are now in line~\ref{alg:symbParityc_line15}
of the Algorithm.
To this end, let $S^\text{old}_\rank$ be the set as stored in $C^i_x$ 
and $S^\text{new}_\rank$ the updated set, 
which is a superset of the old one.
First, one  computes the difference $S^{\text{diff}}_\rank$ between the two sets $S^{\text{diff}}_\rank = 
S^{\text{new}}_\rank \setminus S^{\text{old}}_\rank$; intuitively, the set 
$S^\text{diff}_\rank$ contains the vertices for which the algorithm has increased
the rank.
Now for the vertices of $S^\text{diff}_\rank$ we have to (i) delete their old values by
updating $C^i_x$ to $C^i_x \setminus S^\text{diff}_\rank$ 
for each  $i \in \{1, 3, \dots, \numprio-1 \}$ and each $x \in \{0, \dots, n_i\}$
and (ii) store the new values by updating $C^i_x$ to $C^i_x \cup S^\text{diff}_\rank$ for $i \in \{1, 3, \dots, \numprio-1 \}$
and $x=\rank_i$.
In total we have $O(\numprio)$ many $\cup$ and $O(n)$ many $\setminus$ operations. 

Notice that the update operation for a set $S_\rank$, as described above, 
also updates all sets $S_{\rank'}$ for $\rank' < \rank$. 
Thus, when using the more succinct representation via the sets $C^i_x$ 
and executing \ref{alg:SymbolicParityDominion} literally, 
the computation of the maximal rank $\rank'$ s.t.\ $S_{\rank'} \supseteq S_\rank$ 
would fail because of the earlier update of $S_\rank$. 
Hence, we have to postpone the update of $S_\rank$ till 
the end of the iteration
and adjust the computation of $\rank'$ as follows.
We do not update 
the set $S_\rank'$, and first compute the final value for $\rank'$ by 
decrementing $\rank'$ until $S_{\rank'} \supseteq  S_\rank$
and then
update $S_{\rank}$ to $S^{\text{new}}_\rank$
and thus implicitly also update the 
sets $S_{\tilde{\rank}}$ to $S_{\tilde{\rank}} \cup S_\rank$ for $\rank' <  \tilde{\rank} <\rank$. This gives the following lemma.

\begin{lemma}
In each iteration of \ref{alg:SymbolicParityDominion} only 
$O(n)$ symbolic set operations are needed to update the sets $C^i_x$,
and no symbolic one-step operation is needed.
\end{lemma}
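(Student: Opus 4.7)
The plan is to exhibit an explicit update procedure executed at the end of each iteration of the while-loop of \ref{alg:SymbolicParityDominion}, and to bound its symbolic operations. The key observation is that at the start of the iteration the old set $S^{\text{old}}_\rank$ has already been materialised from the family $\{C^i_x\}$ via the previous lemma, while the iteration only adds vertices to it, producing a new set $S^{\text{new}}_\rank \supseteq S^{\text{old}}_\rank$. I would first compute the set of vertices whose rank strictly increased, $S^{\text{diff}}_\rank \gets S^{\text{new}}_\rank \setminus S^{\text{old}}_\rank$, with a single set-difference operation.

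To propagate this difference into the succinct representation I would proceed in two sweeps. First, to erase the stale coordinate entries, I would update $C^i_x \gets C^i_x \setminus S^{\text{diff}}_\rank$ for every odd priority $i \in \{1,3,\dots,\numprio-1\}$ and every $x \in \{0,\dots,n_i\}$; this is correct because each vertex in $S^{\text{diff}}_\rank$ had some old $i$-th coordinate recorded in exactly one $C^i_x$. The total number of set-difference operations in this sweep is $\sum_{i \text{ odd}}(n_i+1) \le n + \numprio/2 \in O(n)$. Second, to record the new values, I would set $C^i_{\rank_i} \gets C^i_{\rank_i} \cup S^{\text{diff}}_\rank$ for each odd $i$, which is $O(\numprio)\subseteq O(n)$ union operations. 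None of these operations is a predecessor, so no symbolic one-step operation is needed; for iterations with $\rank=\top$ only $S_\top$ is modified, so no $C^i_x$ update is required at all.

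The main obstacle to a clean argument is the interaction with the loop in lines~\ref{alg:symbParityc_line20}--\ref{alg:symbParityc_line25}: in the original algorithm the sets $S_{\rank'}$ with $\rank' < \rank$ are updated one at a time while $\rank'$ is decremented, but under the succinct encoding any modification of the $C^i_x$ implicitly updates all $S_{\tilde\rank}$ at once, which would spoil the containment test $S_{\rank'} \supseteq S_\rank$. I would handle this by deferring the two sweeps above to the very end of the iteration: first compute the final $\rank'$ by repeatedly decrementing it while testing $S_{\rank'} \supseteq S_\rank$ against the still-unmodified $C^i_x$, and only then perform the $O(n)$-operation update described above. Since this single update simultaneously pushes $S^{\text{diff}}_\rank$ into every $S_{\tilde\rank}$ with $\rank' < \tilde\rank \le \rank$, it faithfully realises the effect of the explicit inner loop while spending only $O(n)$ basic set operations and no one-step operation per iteration, as claimed.
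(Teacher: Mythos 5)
Your proposal is correct and follows essentially the same route as the paper: compute $S^{\text{diff}}_\rank = S^{\text{new}}_\rank \setminus S^{\text{old}}_\rank$, erase stale coordinate entries with $\sum_{i\ \mathrm{odd}}(n_i+1) \in O(n)$ set-difference operations, record the new coordinates with $O(\numprio)$ unions, and defer the whole update past the computation of $\rank'$ so that the containment tests $S_{\rank'} \supseteq S_\rank$ are still performed against the unmodified representation. The only addition is your observation that the $\rank=\top$ case needs no $C^i_x$ update, which is a harmless (and correct) refinement since the reconstruction of every $S_\rank$ unions in $S_\top$ anyway.
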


\paragraph{Number of Set Operations.}
To sum up, when introducing the succinct representation of the sets $S_\rank$,
we only need additional $\cup$, $\cap$, and $\setminus$ operations, while 
the number of $\cpre{\pl}$ operations is unchanged.
We show in Appendix~\ref{app:pargen}
that whenever the algorithm computes or updates a set $S_\rank$,
then we can charge a $\cpre{\pl}$ operation for it, and 
each $\cpre{\pl}$ operation is only charged for a constant number of set computations and updates.
Hence, as both computing a set~$S_\rank$ and updating the sets $C^i_x$
can be done with $O(n)$ set
operations, the number of the additional set operations in \ref{alg:SymbolicParityDominion} is in $O(n \cdot \numcpre)$,
for $\numcpre$ being the number of $\cpre{\pl}$ operations in the algorithm. 

\paragraph*{Putting Things Together.}
We presented a set-based symbolic implementation of the progress measure 
that uses $O(n)$ sets, $O(\numprio \cdot n \cdot \lvert M^\infty_\domsize \rvert)$ symbolic one-step
operations and at most a factor of $n$ more symbolic set operations.
We apply the following bound
on the size of $M^\infty_\domsize$:
$$
  |M^\infty_\domsize| \leq \binom {\domsize + \lfloor \numprio / 2 \rfloor} {\domsize} + 1\,.
$$
The following lemma summarizes the result for computing dominions.

\begin{keylemma}\label{lem:small_dominions_space_efficient}
For a parity game with $n$ vertices and $\numprio$ priorities and 
an integer $\domsize \in [1, n-1]$, Algorithm~\ref{alg:SymbolicParityDominion} 
computes a player-$\pe$ dominion 
that contains all $\pe$-dominions with at most $\domsize+1$ vertices
and can be implemented with $O\left( \numprio 
\cdot n \cdot \binom {\domsize + \lfloor \numprio/2 \rfloor} {\domsize} \right)$
symbolic one-step operations, $O\left( \numprio \cdot n^2 \cdot \binom {\domsize +
\lfloor \numprio/2 \rfloor} {\domsize} \right)$ symbolic set operations,
and $O(n)$ many sets.
\end{keylemma}

To solve parity games directly with Algorithm~\ref{alg:SymbolicParityDominion},
we use the co-domain $M^\infty_\game$ instead of $M^\infty_\domsize$.
Recall
that we have
$\lvert M^\infty_\game \rvert \in O\big(\big(\frac{n}{\lfloor \numprio/2\rfloor}\big)^{\lfloor \numprio/2\rfloor}\big)$~\cite{Jurdzinski00}.

\begin{theorem}\label{thm:pmalg_space_efficient}
Let $\xi(n,\numprio)=\big(\frac{n}{\lfloor \numprio/2\rfloor}\big)^{\lfloor \numprio/2\rfloor}$.
Algorithm~\ref{alg:SymbolicParityDominion}
computes the winning sets of parity games
and can be implemented with $O\big(\numprio \cdot n \cdot \xi(n,\numprio) \big)$
symbolic one-step operations, $O\big(\numprio \cdot n^2 \cdot \xi(n,\numprio)\big)$
symbolic set operations, and $O(n)$ many sets. 
\end{theorem}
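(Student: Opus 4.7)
The plan is to obtain Theorem~\ref{thm:pmalg_space_efficient} by instantiating the general framework for Algorithm~\ref{alg:SymbolicParityDominion} with the codomain $M^\infty_\game$ in place of $M^\infty_\domsize$, and then plugging in the known bound on $|M^\infty_\game|$. Correctness is immediate from the symbolic invariants: the three parts of Invariant~\ref{invariant} hold verbatim when the $\lift(.,v)$-operators are defined with respect to $M^\infty_\game$, since nothing in their proof uses the size constraint of $M^\infty_\domsize$. At termination, Invariant~\ref{invariant}(3) forces $\rho_{\{S_\rank\}_\rank}$ to be a simultaneous fixed point of all $\lift(.,v)$-operators, and Invariant~\ref{invariant}(2) forces it to lie pointwise below the progress measure $\tilde\rho$. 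Hence $\rho_{\{S_\rank\}_\rank}=\tilde\rho$, so by Lemma~\ref{lem:jur:pmcorrect} the returned set $V\setminus S_\top$ is exactly the winning set $W_\pe$ of player~$\pe$.

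For the count of symbolic one-step operations, I would bound the total number of iterations of the outer while-loop. Each single iteration performs $O(\numprio)$ calls to $\cpre{\po}$ in lines~\ref{alg:symbParityc_line6} and~\ref{alg:symbParityc_line11}, while the additional iterations of the inner repeat-until loops in lines~\ref{alg:symbParityc_line8} and~\ref{alg:symbParityc_line13} are amortized against insertions of vertices into $S_\rank$. The key counting claim is that a rank~$\rank$ is reconsidered only after at least one new vertex has been added to $S_\rank$ since its previous visit: after processing $\rank$, control can return to $\rank$ from a strictly larger rank only via a cascade started at line~\ref{alg:symbParityc_line20} and terminating with $\rank \gets \incr(\rank')$ for some $\rank'<\rank$, and such a cascade necessarily enlarges $S_\rank$ by the anti-monotonicity update on line~\ref{alg:symbParityc_line22}. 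Since each of the $|M^\infty_\game|$ sets $S_\rank$ grows at most $n$ times, the total number of iterations is $O(n \cdot |M^\infty_\game|)$. Applying Jurdzinski's bound $|M^\infty_\game| \in O(\xi(n,\numprio))$ gives the claimed $O(\numprio \cdot n \cdot \xi(n,\numprio))$ symbolic one-step operations.

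For the set-operation count and space, I would invoke the two lemmas of Section~\ref{sec:space} directly. Each reconstruction of $S_\rank$ from the succinct $C^i_x$ representation costs $O(n)$ set operations, and each write-back of the updated $S_\rank$ into the $C^i_x$ costs $O(n)$ further set operations, none of them one-step. Since these set operations can be charged to the $\cpre{\pl}$-call that triggered them and the number of such calls has just been bounded by $O(\numprio \cdot n \cdot \xi(n,\numprio))$, the total number of set operations is $O(\numprio \cdot n^2 \cdot \xi(n,\numprio))$, as claimed. The space bound is immediate from the succinct encoding: the sets $C^i_x$ total $\sum_{i \text{ odd}} (n_i+1) \in O(n)$, to which we add $S_\top$ and a constant number of scratch sets. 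I expect the only nontrivial step to be the amortized argument that a rank is revisited only after its associated set has grown, since it hinges on the cascade mechanism at the tail of each iteration; the rest is essentially bookkeeping once the framework of Invariant~\ref{invariant} and the succinct representation of Section~\ref{sec:space} are in place.
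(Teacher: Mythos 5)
Your proposal is correct and follows essentially the same route as the paper: correctness via Invariant~\ref{invariant} together with Lemma~\ref{lem:jur:pmcorrect}, the $O(\numprio\cdot n\cdot|M^\infty_\game|)$ bound on one-step operations by amortizing revisits of a rank against strict growth of $S_\rank$ (with the cascade at lines~\ref{alg:symbParityc_line20}--\ref{alg:symbParityc_line25} as the mechanism that both forces the growth and pays for itself), and the set-operation and space bounds by charging the $O(n)$-cost reconstructions and write-backs of the $C^i_x$ encoding to $\cpre{}$ calls. This matches the paper's proof in all essential respects.
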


See Appendix~\ref{app:strategy_construction} for how to construct
winning strategies within the same bounds.

\section{Extensions and Conclusion}\label{sec:bigstep}
\subsection{Big-Step Algorithm.}
We presented a set-based symbolic algorithm for computing a progress measure 
that solves parity games. 
Since the progress measure algorithm can also 
compute dominions of bounded size, it can be combined with the big step approach of 
\cite{Schewe17} to improve the number of symbolic steps as stated in the 
following theorem.
All details are given in the appendix.
\begin{theorem}\label{thm:bigstep}
  Let $\gamma(\numprio) = \numprio/3 + 1/2 - 
  4/(\numprio^2 - 1)$ for odd $\numprio$ and $\gamma(\numprio) = 
  \numprio/3 + 1/2 - 1/(3 \numprio) - 4/\numprio^2$ for even $\numprio$.
  There is a symbolic Big Step Algorithm that
  computes the winning sets for parity games
  and with the minimum of $O(n \cdot (\kappa \cdot
  n/c)^{\gamma(\numprio)})$, for some constant $\kappa$,
  and $n^{O(\sqrt{n})}$ symbolic one-step operations and stores only $O(n)$ many sets.
\end{theorem}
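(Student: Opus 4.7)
The plan is to instantiate the big-step framework of~\cite{Schewe17} with the symbolic dominion-detection subroutine of Key Lemma~\ref{lem:small_dominions_space_efficient} (applied to both players, the second via the standard priority-shift reduction), and then to optimize the size threshold~$\domsize$. At the top level, the algorithm alternates two phases: (i)~a dominion phase that invokes Algorithm~\ref{alg:SymbolicParityDominion} with codomain $M^\infty_\domsize$ to extract any remaining player-$\pe$ or player-$\po$ dominion of size at most $\domsize+1$, and (ii)~an attractor-removal step that peels off the found dominion together with its opponent-avoiding attractor and continues on the sub-game. Once no small dominion remains for either player, the Schewe argument~\cite{Schewe17} ensures that a single final invocation of the symbolic progress measure---with a codomain restricted by the absence of small dominions---suffices to identify the residual winning sets; this final invocation can be implemented by Theorem~\ref{thm:pmalg_space_efficient} applied to the residual game.

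For the operation count, each dominion phase uses $O(\numprio\cdot n\cdot\binom{\domsize+\lfloor \numprio/2\rfloor}{\domsize})$ symbolic one-step operations by the Key Lemma, and the number of phases is $O(n/\domsize)$ since every non-trivial phase peels off at least one dominion of size up to $\domsize+1$. The total is therefore $O\bigl(\numprio\cdot n\cdot \tfrac{n}{\domsize}\cdot\binom{\domsize+\lfloor \numprio/2\rfloor}{\domsize}\bigr)$, and the final residual invocation contributes a term of the same form. Choosing $\domsize$ to minimize this expression and bounding the binomial via Stirling gives $\domsize=\Theta((n/\numprio)^{2/3})$ and yields the announced bound $O(n\cdot(\kappa n/\numprio)^{\gamma(\numprio)})$; the two formulas for~$\gamma(\numprio)$ arise from separately handling even and odd $\numprio$ in the optimization, exactly mirroring the explicit-algorithm analysis in~\cite{Schewe17}.

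For the sub-exponential alternative, set $\domsize=\lceil\sqrt{n}\,\rceil$. Then $\binom{\domsize+\lfloor \numprio/2\rfloor}{\domsize}\le\binom{n+\sqrt{n}}{\sqrt{n}}\le n^{O(\sqrt{n})}$ independently of $\numprio$, and the total operation count is likewise $n^{O(\sqrt{n})}$. Taking the smaller of the two bounds obtained from the two choices of $\domsize$ gives the stated minimum. Space is analyzed independently: each invocation of Algorithm~\ref{alg:SymbolicParityDominion} stores $O(n)$ sets by the Key Lemma and by Theorem~\ref{thm:pmalg_space_efficient}; the phases are executed iteratively rather than in parallel, attractor computation uses a constant number of auxiliary sets, and hence the overall algorithm retains $O(n)$ sets.

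The main technical obstacle is faithfully transporting the explicit-time analysis of~\cite{Schewe17} into the symbolic setting to recover the precise exponent $\gamma(\numprio)$. The count of symbolic one-step operations per dominion search tracks $|M^\infty_\domsize|$ in exactly the same way that Schewe's running time depends on $|M^\infty_\domsize|$; once this correspondence is established via the Key Lemma, the optimization over~$\domsize$ and the case split on the parity of $\numprio$ are identical to those in~\cite{Schewe17}, so no new combinatorial insight is required beyond the symbolic implementation and the space reduction already achieved. The sub-exponential bound then falls out essentially for free from the alternative choice $\domsize=\sqrt{n}$.
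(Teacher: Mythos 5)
There is a genuine gap: your algorithm omits the recursion over priorities that the big-step approach of \cite{Schewe17} (and Algorithm~\ref{alg:symbigstep} in the paper) relies on, and the step you use to replace it does not work. The paper's algorithm interleaves, in every iteration, (i)~a small-dominion search via Algorithm~\ref{alg:SymbolicParityDominion} with parameter $\domsize$ and (ii)~a \emph{recursive call} on the subgame obtained by removing the $\pl$-attractor of the highest-priority vertices, which has $\numprio-1$ priorities. Both ingredients are essential. First, the bound of $O(n/\domsize)$ on the number of iterations (Lemma~\ref{lem:iterations}) is \emph{not} justified by ``every phase peels off a dominion of size up to $\domsize+1$'' --- such a phase might remove a single vertex. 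The correct argument is that if the union of the dominion found by the small-dominion search and the one found by the recursive call had size at most $\domsize+1$, then by Key Lemma~\ref{lem:small_dominions_space_efficient} it would already have been found by the small-dominion search alone, so the recursive call would return the empty set and the loop would terminate; hence every non-terminal iteration removes at least $\domsize+2$ vertices. Without the recursive call this counting argument has no footing. Second, and more seriously, your final step --- ``once no small dominion remains, a single invocation of the symbolic progress measure with a codomain restricted by the absence of small dominions suffices'' --- is unsupported: there is no result (in \cite{Schewe17} or here) that the absence of dominions of size $\le\domsize+1$ shrinks the progress-measure codomain. A full invocation of Theorem~\ref{thm:pmalg_space_efficient} on the residual game costs $O(\numprio\cdot n\cdot(n/\lfloor\numprio/2\rfloor)^{\lfloor\numprio/2\rfloor})$ one-step operations, which destroys both the $n^{\gamma(\numprio)}\approx n^{\numprio/3}$ bound and the $n^{O(\sqrt n)}$ bound; a restricted invocation with parameter $\domsize$ only certifies small dominions and does not compute the winning sets. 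In the paper the residual game is instead handled by the recursive call with one fewer priority, and termination with ``all remaining vertices win for $\pl$'' is inherited from the correctness of the classical algorithm.

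Consequently the quantitative claims are also not recoverable by your route. The exponent $\gamma(\numprio)$ is not obtained by a single flat optimization of $\domsize=\Theta((n/\numprio)^{2/3})$ over one loop; it is established by induction over $\numprio$, choosing $\domsize$ at each recursion level as roughly $n_0^{\beta(\numprio-1)}$ with $\beta(\numprio)=\gamma(\numprio)/(\lfloor\numprio/2\rfloor+1)$ and using the identities $\gamma(\numprio)=\gamma(\numprio-1)+1-\beta(\numprio-1)$ and $\beta(\numprio-1)\lceil\numprio/2\rceil=\gamma(\numprio-1)$ to balance the cost of the dominion searches against the recursive calls (Lemmas~\ref{lem:bigstep_runtimesimple} and~\ref{lem:bigstep_runtime}). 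Likewise the $n^{O(\sqrt n)}$ bound comes from the recurrence $T(n)\le O(n^{\Gamma})+T(n-1)+T(n-\Gamma)$ with $\Gamma=\lceil\sqrt{2n}\rceil$, exactly as in \cite{JurdzinskiPZ08}, not from multiplying a per-phase cost by a phase count. Your space analysis and the use of the priority-shift reduction for $\po$-dominions are fine, but the core of the operation-count argument needs the recursive structure restored.
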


\subsection{Concluding Remarks.}
In this work we presented improved set-based symbolic algorithms for 
parity games, and equivalently modal $\mu$-calculus model checking. 
Our main contribution improves the symbolic algorithmic complexity of one of the most
fundamental problems in the analysis of program logics, with numerous applications 
in program analysis and reactive synthesis. 
There are several practical approaches to solve parity games, 
such as,
\cite{deAlfaroF07,FriedmannL09,HuthKP11,HoffmannL13,BenerecettiDM16} and \cite{Vester16}. 
A practical direction of future work would be to explore whether our algorithmic ideas 
can be complemented with engineering efforts to obtain scalable symbolic algorithms 
for reactive synthesis of systems. 
An interesting theoretical direction of future work is to obtain set-based symbolic 
algorithms for parity games with quasi-polynomial complexity.
The breakthrough result of~\cite{CaludeJKLS17} (see also~\cite{GimbertI17}) relies on alternating 
poly-logarithmic space Turing machines.. 
The follow-up papers of \cite{JurdzinskiL17} and \cite{FearnleyJSSW17} that 
slightly improve the running time and reduce the 
space complexity from quasi-polynomial to quasi-linear rely on succinct notions of progress measures.
All these algorithms are non-symbolic, and symbolic versions of these algorithms 
are an open question, in particular encoding the novel succinct progress measures 
in the symbolic setting when storing at most a linear number of sets.

\subparagraph*{Acknowledgements.}
All authors are partially supported by the Vienna
Science and Technology Fund (WWTF) through project ICT15-003.
K. C. is partially supported by the Austrian Science Fund (FWF)
NFN Grant No S11407-N23 (RiSE/SHiNE) and an ERC Start grant
(279307: Graph Games). For W.~D., M.~H., and V.~L. the research
leading to these results has received funding from the European
Research Council under the European Union’s Seventh Framework
Programme (FP/2007-2013) / ERC Grant Agreement no. 340506.
V.L.~is partially supported by the ISF grant \#1278/16.

\pagebreak
\appendix

\section{Illustration: Parity Games with \texorpdfstring{$5$}{5} Priorities}\label{sec:par5}
\begin{figure}[b]
 \centering
 \begin{tikzpicture}
\matrix[column sep=12mm, row sep=7mm]{
	\node[p1,label=60:$1$] (a) {$a$};
	& \node[p1,label=60:$1$] (c) {$c$};
	& \node[p2,label=60:$3$] (e) {$e$};
	& \node[p1,label=60:$2$] (g) {$g$};\\
	\node[p2,label=-60:$0$] (b) {$b$};
	& \node[p1,label=-60:$0$] (d) {$d$};
	& \node[p2,label=-60:$4$] (f) {$f$};
	& \node[p1,label=-60:$1$] (h) {$h$};\\
};
\path[arrow, bend right] 
	(a) edge (b)
	(b) edge (a);
\path[arrow] 
	(c) edge (b)
	(b) edge (d)
	(d) edge (f)
	(f) edge (g)
	(g) edge (e)
	(e) edge (d)
	(c) edge (d)
	(h) edge (c)
	(h) edge (g)
	;
\end{tikzpicture}
\caption{A parity game with 5 priorities.
Circles denote player~$\pe$ vertices, squares denote player~$\po$ vertices.
The numeric label of a vertex gives its priority,
e.g., $a$ is an $\pe$-vertex with priority~$1$.}
\label{fig:example}
\end{figure}
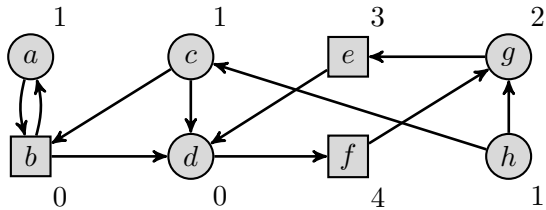
We informally introduce our symbolic algorithm to compute 
the progress measure, using parity games with $5$ priorities as an important 
special case. The pseudo-code is given in Algorithm~\ref{alg:pmfive}.
\begin{algorithm}
	\footnotesize
	\SetAlgoRefName{SymbolicProgressMeasureParity(5)}
	\caption{}
	\label{alg:pmfive}
	\SetKwInOut{Input}{Input}
	\SetKwInOut{Output}{Output}
	\SetKw{break}{break}
	\BlankLine
	\Input{%
	  \emph{parity game} $\pgame = (\game, \prio)$, with \\
	  \emph{game graph} $\game = ((V, E),(\ve, \vo))$ and \\
	  \emph{priority function} $\prio: V \rightarrow \{0,1,2,3,4\}$
	}
	\Output
	{
	  winning set of player~$\pe$ 
	}
	\BlankLine
	$S_{(0,0)} \gets V$\;
	${S_\rank} \gets {\emptyset}$ for $(0,0) < \rank \leq (n_1,n_3)$ and $\rank = \top$\;
	$\rank \gets (1,0)$\;
	\While
	  {
	    $true$
	  }
	  {
	    \If{$\rank = (i,j), i > 0$}
	    {
		$S_{(i,j)} \gets S_{(i,j)} \cup 
				    \left(
					\cpre{\po}(S_{i-1,j}) \cap P_1 
				    \right)$\label{p5l:c1add1}\;
		\Repeat 
		{
		  a fixed-point for $S_{(i,j)}$ is reached
		}
		{
		  $S_{(i,j)}  \gets S_{(i,j)}  \cup \left( \cpre{\po}(S_{i,j}) \setminus \bigcup_{i=2,3,4} P_{i} \right)$\label{p5l:c1add2}
		}
	    }
	    \ElseIf{$\rank = (0,j)$}
	    {
	        $S_{(0,j)} \gets S_{(0,j)} \cup 
				    \left(\cpre{\po}(S_{n_1,j-1}) \cap P_1 \right) \cup 
				    \left(\cpre{\po}(S_{0,j-1}) \cap P_3 \right)$\label{p5l:c2add1}\;
		\Repeat %
		{
		  a fixed-point for $S_{(0,j)}$ is reached
		}
		{
		  $S_{(0,j)}  \gets S_{(0,j)} \cup
				       \left( \cpre{\po}(S_{0,j}) \setminus P_{4} \right)$\label{p5l:c2add2}
		}
	    }
	    \ElseIf{$\rank = \top$}
	    {
		$S_{\top} \gets S_{\top} \cup
				    \left(\cpre{\po}(S_{n_1,n_3}) \cap P_1 \right) \cup 
				    \left(\cpre{\po}(S_{0,n_3}) \cap P_3 \right)$\label{p5l:c3add1}\;
		\Repeat
		{
		  a fixed-point for $S_{\top}$ is reached
		}
		{
		  $S_{\top}  \gets S_{\top}  \cup \cpre{\po}(S_\top)$\label{p5l:c3add2}		
		}
	    }
	    $\rank'\gets \decr(\rank)$\;
	    \If{$S_{\rank'} \supseteq  S_\rank$ and $\rank < \top$}
	    {
	      $\rank \gets \incr(\rank)$\label{p5l:rankinc}
	    }
	    \ElseIf{$S_{\rank'} \supseteq  S_\rank$ and $\rank = \top$}
	    {
	      \break\label{p5l:terminates}
	    }
	    \Else(\tcc*[f]{ensure $S_{\rank'} \supseteq  S_\rank$ for all $\rank' < \rank$})
	    {
	      \Repeat{$S_{\rank'} \supseteq  S_\rank$}
	      {\label{p5l:rankloop}
		  $S_{\rank '} \gets S_{\rank '} \cup S_\rank$\label{p5l:addrankinc}\;
		  $\rank'\gets \decr(\rank');$
	      }
	      $\rank \gets \incr(\rank')$\tcc*[r]{$S_\rank$ is modified set with smallest rank}\label{p5l:rankafterloop}
	    }
	  }
	 \Return $V \setminus S_\top$
\end{algorithm}
In Example~\ref{example:parity5} we use the parity game in 
Figure~\ref{fig:example} as an example to illustrate Algorithm~\ref{alg:pmfive}.

\paragraph{Intuition for ranks $\rank$ and sets $S_{\rank}$.}
We first provide some intuition for the ranks~$\rank$ 
and the sets $S_\rank$ for parity games with~$5$ priorities. 
The rank~$\rank$ has an index for each of the priorities~$\set{0, 1, 2, 3, 4}$ but
may contain non-zero entries only for the odd priorities~$\set{1, 3}$. Thus, for 
this section, we denote a rank vector as a vector with two elements, where the
first element corresponds to element~$1$ of $\rank$
and the second element corresponds to element~$3$ of~$\rank$.
The lowest rank is $(0, 0)$, followed by $(1, 0)$, the highest rank 
is $\top$, preceded by $(n_1, n_3)$.
Throughout the algorithm, whenever a vertex is contained in the set~$S_\rank$, 
then its rank in the progress measure is at least~$\rank$. 
The sets $S_{\rank'}$ are defined to contain vertices with rank 
\emph{at least}~$\rank'$ instead of \emph{exactly} $\rank'$ such that for each 
vertex~$v$ and rank~$\rank$ we only have to consider one set $S_{\rank'}$ in 
order to decide whether the rank of~$v$ can be increased to $\rank$. 
The sets $S_\rank$ implicitly assign each vertex~$v$ a rank, namely the maximum 
rank $\rank$ such that $v \in S_\rank$. 
Each vertex can only be assigned rank vectors for which the rank vector is zero 
at all indices corresponding to priorities lower than its own priority. 
For example, a vertex with priority~$4$ can only be assigned rank $(0, 0)$ or 
rank $\top$. Thus for such a vertex its rank can ``jump'' from $(0, 0)$ to $\top$
(because of one of its successors being added to $S_\top$), 
and then this information has to be propagated to its predecessors. 
The algorithm achieves this efficiently by adding, e.g., vertices that are 
added to $S_\top$ to all sets representing lower ranks as well.
We next describe the intuition for what
it means when a vertex~$v$ is assigned a specific rank~$\rank$.

Intuitively, when a vertex~$v$ is assigned a rank $(i,0)$ for 
$i \in N_1$, i.e., the set with highest rank it is contained in is~$S_{(i,0)}$, 
then, in plays starting from~$v$, player~$\po$ can force a play from $v$ to 
visit $i$~vertices of priority~$1$ before a vertex 
of priority~$2$ or higher is reached; for $i > 0$ this implies that 
the priority of $v$ is either $0$ or~$1$. 

Let vertex~$v$ be assigned rank $(i,j)$ for $i \in N_1$ and $j \in N_3$. 
The interpretation of the value 
of $i$ is the same as in the case $(i,0)$, where the value of $i$ is taken 
modulo $n_1 + 1$. Each contribution of 1 to the value of $j$ corresponds
either (a) to the number of times a priority-$3$ vertex is 
visited before a priority-4 vertex is reached or (b) to priority-$1$ vertices
being reached $n_1+ 1$ times before a vertex with priority at least~$2$ is reached.
Note that both cases can also happen for a vertex with priority 2 or 3;
for $\prio(v)= 4$ the only set~$S_{(i,j)}$ the vertex~$v$
can belong to is $S_{(0,0)}$, as the rank of a priority-4 vertex
 can only be $(0,0)$ or~$\top$. 
 
Recall that $\top$ are the vertices where player $\pe$ has no winning strategy.
 There are three ways a vertex $v$ can be ranked $\top$:
 (i) $v$ has priority-1 and the best successor is ranked $(n_1, n_3)$,
 (ii) $v$ has priority-3 and the best successor is ranked at least $(0, n_3)$,  and
 (iii) the best successor is ranked $\top$.
 The cases (i) and (ii) correspond to the cases where player $\pe$ has to visit at 
 least $n_1+1$ or $n_3+1$ many vertices of the respective 
 odd priority before reaching a higher even priority.
 Note that this means that player~$\po$ can force plays to reach a cycle where the highest priority is odd.
 In case (iii) player $\po$ can force plays to visit a vertex from which she can reach a cycle with highest odd priority.

\paragraph{Symbolic algorithm.} 
In our symbolic algorithm~\ref{alg:pmfive} we use the sets~$S_\rank$ to represent 
the numerical values of the progress measure, and, 
to utilize the power of symbolic operations, 
we compute all vertices whose rank can be increased to a certain value~$\rank$
in each iteration of the algorithm.
The latter is in contrast to the explicit progress measure algorithm~\cite{Jurdzinski00},
where vertices are considered one by one and the rank is increased to the maximal possible value. 

We next describe Algorithm~\ref{alg:pmfive} and then give some intuition for its 
correctness and the number of its symbolic operations.
Recall that the sets~$S_\rank$ define a ranking function
$\rho_{\{S_\rank\}_\rank}(v) = \max \{r \in M^\infty_\game \mid v \in S_\rank \}$
that assigns a rank to each vertex and that by the definition of $\incr$
 a vertex~$v$ with priority $\prio(v)$ is only assigned
ranks $\rank$ with $\rank = \proj{\rank}_{\prio(v)}$.

\paragraph{Initialization.}
To find the least simultaneous fixed point of the lift-operators, all ranks are initialized 
with the all zero vector, i.e., all vertices are added to $S_{(0,0)}$, while 
all other sets $S_\rank$ are empty. The variable~$\rank$ is initialized to~$(1,0)$.

\paragraph{The value of~$\rank$.}
In each iteration of the while loop the set~$S_\rank$ for the rank that is stored in 
the variable~$\rank$ at the beginning of the iteration 
is updated (more details below). 
By the definition of the sets $S_{\rank'}$,
we need to maintain $S_{\rank'} \supseteq S_\rank$ for $\rank' < \rank$, i.e.,
every vertex that is newly added to $S_\rank$ but not 
yet contained in $S_{\rank'}$ is added to $S_{\rank'}$ 
(line~\ref{p5l:addrankinc}). For the vertices newly added to $S_\rank$ we 
say that their rank is increased. When the rank of a vertex is increased, this 
might influence the value of $\lift(\rho_{\{S_\rank\}_\rank}, v)(v)$ for 
its predecessors~$v$. Since we want to obtain a fixed point of 
$\lift(\rho_{\{S_\rank\}_\rank}, v)(v)$ for all $v \in V$, we have to reconsider 
the predecessors of a vertex whenever the rank of the vertex is increased.
This is achieved by updating the variable~$\rank$ to the lowest~$\rank'$
for which a new vertex is added to $S_{\rank'}$ in this iteration (lines~\ref{p5l:rankloop}--\ref{p5l:rankafterloop}).
If $\rank = \rank'$, then for $\rank < \top$ the value of $\rank$ is increased
to the next highest rank in the ordering (line~\ref{p5l:rankinc}) and for $\rank = \top$ the algorithm terminates (line~\ref{p5l:terminates}).

\paragraph{Update of set~$S_\rank$.}
To reach a simultaneous fixed point of the lift-operators, the rank of a vertex~$v$
has to be increased to $\lift(\rho_{\{S_\rank\}_\rank}, v)(v)$
whenever the value of $\lift(\rho_{\{S_\rank\}_\rank}, v)(v)$ is strictly higher 
than $\rho_{\{S_\rank\}_\rank}(v)$ for the current ranking function
$\rho_{\{S_\rank\}_\rank}$. Recall that this is the case only if the value of 
$\best(\rho, v)$ is increased, which implies that the rank assigned to 
at least one successor of $v$ was increased (or a rank~$\rank$ is considered for 
the first time).
For the update of the set~$S_\rank$ in an iteration of the while-loop
(where $\rank$ is the value of the variable at the beginning of the while-loop),
the algorithm adds to $S_\rank$ all vertices that satisfy the following three conditions:
(i)~$\rho_{\{S_\rank\}_\rank}(v) < \rank$,  
(ii) $\lift(\rho_{\{S_\rank\}_\rank}, v)(v) \ge \rank$, and (iii) 
$\rank = \proj{\rank}_{\prio(v)}$.
Note that the algorithm maintains the invariant that vertices 
with $\rho_{\{S_\rank\}_\rank} \ge \rank$ are already contained in $S_\rank$.
We distinguish between $\rank = (i, j)$ with $i > 0$, $\rank = (0, j)$,
and $\rank = \top$. 

(1)
For $\rank = (i, j)$ with $i > 0$ recall that 
only vertices~$v$ with priority 0 or 1 can be assigned the rank~$\rank$. 

\emph{Case~\upbr{a}}: Assume first that $\prio(v) = 1$, i.e., 
$\lift(\rho_{\{S_\rank\}_\rank}, v)(v) = \min\{y \in M^\infty_\game \mid y >_1 
\best(\rho_{\{S_\rank\}_\rank},v)\}$.
Then $\lift(\rho_{\{S_\rank\}_\rank}, v)(v)\ge \rank$
if $\best(\rho_{\{S_\rank\}_\rank},v) \ge (i-1, j)$, i.e., if
(i) $v \in \ve$ and all successors~$w$ of~$v$ have $\rho_{\{S_\rank\}_\rank}(w)
\ge (i-1, j)$ or (ii) $v \in \vo$ and one successor~$w$ of $v$ has 
$\rho_{\{S_\rank\}_\rank}(w) \ge (i-1, j)$.
That is, $\lift(\rho_{\{S_\rank\}_\rank}, v)(v)\ge \rank$ only if $v \in 
\cpre{\po}(S_{(i-1, j)})$. In this case the vertex $v$ is added to $S_\rank$ in 
line~\ref{p5l:c1add1}.

\emph{Case~\upbr{b}}: Assume now that $\prio(v) = 0$, i.e., $\lift(\rho_{\{S_\rank\}_\rank}, v)(v) = \proj{\best(\rho_{\{S_\rank\}_\rank},v)}_0
= \best(\rho_{\{S_\rank\}_\rank},v)$.
Then $\lift(\rho_{\{S_\rank\}_\rank}, v)(v)\ge \rank$
if $\best(\rho_{\{S_\rank\}_\rank},v) \ge (i, j)$, i.e., if 
(i) $v \in \ve$ and all successors~$w$ of~$v$ have $\rho_{\{S_\rank\}_\rank}(w)
\ge (i, j)$ or (ii) $v \in \vo$ and one successor~$w$ of $v$ has 
$\rho_{\{S_\rank\}_\rank}(w) \ge (i, j)$.
That is, $\lift(\rho_{\{S_\rank\}_\rank}, v)(v)\ge \rank$ only if $v \in 
\cpre{\po}(S_{(i, j)})$. In this case the vertex $v$ is added to $S_\rank$
in some iteration of the repeat-until loop in line~\ref{p5l:c1add2}.

(2)
The difference for $\rank = (0, j)$ is first that also vertices 
with priorities 2 or 3 are candidates for the assignment of rank~$\rank$ and second
that for a vertex~$v$ with odd priority we now have the following possibilities:
either (i) $\prio(v) = 1$ and we consider neighbors~$w$ with 
$\rho_{\{S_\rank\}_\rank}(w) \ge (n_1, j-1)$ or (ii) $\prio(v) = 3$ and 
we consider neighbors~$w$ with $\rho_{\{S_\rank\}_\rank}(w) \ge (0, j-1)$.

(3)
The case $\rank = \top$ corresponds to the case $\rank = (0, j)$ with 
$j = n_3 + 1$ with the difference that also vertices with priority~4 can be
included in $S_\top$ in the case $\rho_{\{S_\rank\}_\rank}(w) = \top$.

\begin{example}\label{example:parity5}
In this example we apply Algorithm~\ref{alg:pmfive} to the parity game in 
Figure~\ref{fig:example}. We have $n_1=3$ and $n_3=1$ and thus we have to consider ranks in the co-domain 
$M^\infty_G=\{(0,0),(1,0),(2,0),(3,0),(0,1),$ $(1,1),(2,1),(3,1),\top\}$. \smallskip

\noindent The algorithm initializes the set $S_{(0,0)}$ to $\{a,b,c,d,e,f,g,h\}$
and $\rank$ to $(1,0)$. All the other sets $S_\rank$ are initialized as the empty set.
It then proceeds as follows:
\begin{enumerate}
\item In the first iteration of the while-loop it processes $\rank=(1,0)$.
In line~\ref{p5l:c1add1} the vertices $a,c$ and $h$ are added to $S_{(1,0)}$
and then in line~\ref{p5l:c1add2} also $b$ is added.
That is, we obtain $S_{(1,0)}=\{a,b,c,h\}$ and as $S_{(1,0)} \subseteq  S_{(0,0)}$,
the rank $\rank$ is increased to $(2,0)$.

\item In the second iteration it processes $\rank=(2,0)$ and
the vertex $a$ is added to $S_{(2,0)}$ in line~\ref{p5l:c1add1} 
and the vertex $b$ is added to $S_{(2,0)}$ in line~\ref{p5l:c1add2},
i.e., $S_{(2,0)}=\{a,b\}$, and $\rank$ is set to $(3,0)$.

\item When processing $\rank=(3,0)$ the set $S_{(3,0)}$ is updated to 
$\{a,b\}$ and $\rank$ is increased to $(0,1)$.

\item Now the algorithm processes the rank $(0,1)$ the first time.
     The vertex $a$ is added to $S_{(0,1)}$ because of the 
     $\left(\cpre{\po}(S_{n_1,j-1}) \cap P_1 \right)$ part in line~\ref{p5l:c2add1}
     and the vertex $e$ is added because of the $\left(\cpre{\po}(S_{0,j-1}) \cap P_3 \right)$
     part in line~\ref{p5l:c2add1}. Finally, also $b$ and $g$ are added in line~\ref{p5l:c2add2}.
     That is, we have  $S_{(0,1)}=\{a,b,e,g\}$.
     Now as $S_{(3,0)}\not\subseteq S_{(0,1)}$, $S_{(2,0)}\not\subseteq S_{(0,1)}$ and $S_{(1,0)}\not\subseteq S_{(0,1)}$
     we have to decrease $\rank$ to $(1,0)$, and also to modify the other 
     sets with smaller rank as follows: $S_{(1,0)}=\{a,b,c,e,g,h\}$; and
     $S_{(2,0)}=S_{(3,0)}=\{a,b,e,g\}$.
\item The algorithm considers $\rank=(1,0)$ again, makes no changes to $S_{(1,0)}$,
      and sets $\rank$ to $(2,0)$.
\item Now considering $\rank=(2,0)$, the vertex $h$ is added to the set $S_{(2,0)}$
      in line~\ref{p5l:c1add1}, i.e.,  $S_{(2,0)}=\{a,b,e,g,h\}$,
      and, as $h$ is already contained in  $S_{(1,0)}$, $\rank$ is increased to $(3,0)$.
\item The set $S_{(3,0)}$ is not changed and $\rank$ is increased to
      $(0,1)$.
\item The set $S_{(0,1)}$ is not changed and $\rank$ is increased to
      $(1,1)$.
\item The vertex $a$ is added to $S_{(1,1)}$ in line~\ref{p5l:c1add1} 
      and the vertex $b$ is added to $S_{(1,1)}$ in line~\ref{p5l:c1add2},
      i.e., $S_{(1,1)}=\{a,b\}$, and $\rank$ is increased to $(2,1)$.
\item The vertices $a,b$ are added to $S_{(2,1)}$, i.e., $S_{(2,1)}=\{a,b\}$,
      and $\rank$ is increased to $(3,1)$.      
\item The vertices $a,b$ are added to $S_{(3,1)}$, i.e., $S_{(3,1)}=\{a,b\}$,
      and $\rank$ is increased to $\top$.      
\item The vertex $a$ is added to $S_{\top}$ in line~\ref{p5l:c3add1} 
      and $b$ is added to $S_{\top}$ in line~\ref{p5l:c3add2},
      i.e., $S_\top=\{a,b\}$.
      
      Now as $S_{(3,1)} \subseteq S_\top$, the algorithm terminates.
\end{enumerate}\smallskip

Finally we have that 
$S_{(0,0)} = \{a,b,c,d,e,f,g,h\}$,
$S_{(1,0)} = \{a,b,c,e,g,h\}$,
$S_{(2,0)} = \{a,b,e,g,h\}$,
$S_{(3,0)} = S_{(0,1)} = \{a,b,e,g\}$,
and 
$S_{(1,1)} = S_{(2,1)} = S_{(3,1)} = S_{\top} = \{a,b\}$.
That is, the algorithm returns $\{c,d,e,f,g,h\}$ as the winning set of player~$\pe$.
The final sets of the algorithm corresponds to the progress measure $\rho$ with
$\rho(f)=\rho(d)=(0,0)$, 
$\rho(c)=(1,0)$,
$\rho(h)=(2,0)$,
$\rho(e)=\rho(g)=(0,1)$, and
$\rho(a)=\rho(b)=\top$.
\qee 
\end{example}

\paragraph{Sketch of bound on number of symbolic operations.}
Observe that each rank~$\rank$ is considered in at least one iteration of the 
while-loop but is only reconsidered in a later iteration 
if at least one vertex was added to the set $S_\rank$ since the last time $\rank$
was considered.
This implies by $\lvert S_\rank \rvert \le n$
that Algorithm~\ref{alg:pmfive} can be implemented with 
$O(n \cdot |M^\infty_\game|) = O(n^3)$ symbolic operations.

\paragraph{Sketch of correctness.} 
Let $\{S_\rank\}_\rank$ be the sets in the algorithm at termination.
The algorithm returns the set of vertices that are not contained in $S_\top$,
i.e., the vertices to which $\rho_{\{S_\rank\}_\rank}$  assigns a rank $< \top$. 
If $\rho_{\{S_\rank\}_\rank}$ is equal to the progress measure of the parity game, 
then by Lemma~\ref{lem:jur:pmcorrect} the returned set $V \setminus S_\top$
is equal to the winning set of player~$\pe$. 
Let $\tilde{\rho}$ denote the progress measure.
It remains to show that $\rho_{\{S_\rank\}_\rank}(v) = \tilde{\rho}(v)$ for 
all $v \in V$ when the algorithm terminates. To this end we show
(1) that the algorithm only adds a vertex to a set~$S_\rank$ 
when the progress measure~$\tilde{\rho}$ is at least~$\rank$, i.e.,
throughout the algorithm the ranking function $\rho_{\{S_\rank\}_\rank}$
is a lower bound on~$\tilde{\rho}$ and (2) by the update of the variable 
$\rank$ we have before and after each iteration of the while-loop 
for all vertices~$v$ that either 
$\lift(\rho_{\{S_\rank\}_\rank},v)(v) = \rho_{\{S_\rank\}_\rank}(v)$
or $\rho_{\{S_\rank\}_\rank}(v) \geq \rank$ and in the final iteration
with $\rank = \top$ we have $\lift(\rho_{\{S_\rank\}_\rank},v)(v) = 
\rho_{\{S_\rank\}_\rank}(v)$ for all $v \in V$.
This implies that when the algorithm terminates 
the ranking function $\rho_{\{S_\rank\}_\rank}$ is a simultaneous fixed point of 
the lift-operators. Together these two properties imply that the algorithm 
computes the progress measure of the parity game.

\section{Details for Section~\ref{sec:pmalg}}\label{app:pargen}
\paragraph{Correctness.} 
The correctness of Algorithm~\ref{alg:SymbolicParityDominion},
stated in the following lemma,
follows from combining Lemma~\ref{lem:ifinv} with
Lemmata~\ref{lem:dom_invariant2}--\ref{lem:dom_invariant3}, which we prove below.
\begin{lemma}[Correctness]\label{lem:SymbolicParityDominion_correctness}
  Algorithm~\ref{alg:SymbolicParityDominion} computes the progress measure 
for a given parity game \upbr{with $n$ vertices} 
and a given set of possible ranks $M^\infty_\domsize$
\upbr{for some integer~$\domsize \in [1, n-1]$}.
\end{lemma}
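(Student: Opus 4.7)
The plan is to establish correctness by proving the three symbolic invariants of Invariant~\ref{invariant} by induction over the iterations of the while-loop, and then combine them with Lemma~\ref{lem:jur:pmcorrect} (or its dominion-version analogue for $M^\infty_\domsize$ cited from~\cite{Schewe17}) to conclude that, at termination, the ranking function $\rho_{\{S_\rank\}_\rank}$ coincides with the progress measure~$\tilde\rho$. All three invariants hold vacuously at initialization, since $S_{\bar 0}=V$ and all other $S_\rank$ are empty, so $\rho_{\{S_\rank\}_\rank}\equiv\bar 0$ is pointwise below $\tilde\rho$, anti-monotonicity is immediate, and part (a) of the closure invariant holds trivially for $\rank = \incr(\bar 0)$.

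For Invariant~\ref{invariant}(1) (anti-monotonicity), the key observation is that the sets $S_\rank$ are only ever enlarged, and that the final \textbf{Else} branch of each iteration (lines~\ref{alg:symbParityc_line20}--\ref{alg:symbParityc_line25}) explicitly walks $\rank'$ downward from $\decr(\rank)$ and copies any newly added vertices into each $S_{\rank'}$ with $\rank' < \rank$ until the inclusion $S_{\rank'}\supseteq S_\rank$ is restored. For Invariant~\ref{invariant}(2) (the lower-bound property $\tilde\rho(v)\geq\rho_{\{S_\rank\}_\rank}(v)$), I would show inductively that whenever a vertex $v$ is added to some set $S_\rank$, the condition that triggered the addition (membership in $\cpre{\po}(S_{\decr_{2k-1}(\rank)})\cap P_{2k-1}$ in line~\ref{alg:symbParityc_line6} or \ref{alg:symbParityc_line11}, or membership in $\cpre{\po}(S_\rank)$ in the fixed-point loops of lines~\ref{alg:symbParityc_line8} and \ref{alg:symbParityc_line13}) implies $\lift(\rho_{\{S_\rank\}_\rank},v)(v) \ge \rank$ before the addition, and then use monotonicity of $\lift(\cdot,v)$ together with the inductive hypothesis $\rho_{\{S_\rank\}_\rank}\le\tilde\rho$ and the fixed-point property of $\tilde\rho$ to deduce $\tilde\rho(v)\ge\rank$.

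The main obstacle is Invariant~\ref{invariant}(3), the closure property, as it combines a ``stability'' statement (no vertex with smaller rank would change its lift value) with an ``exhaustiveness'' statement (every vertex whose lift value reaches $\rank$ is already in $S_\rank$ after the update). I would first prove the weaker statement that for every vertex $v$ with $\rho_{\{S_\rank\}_\rank}(v) < \rank'$, where $\rank'$ is the smallest rank whose set was modified in the current iteration, we still have $\lift(\rho_{\{S_\rank\}_\rank},v)(v) = \rho_{\{S_\rank\}_\rank}(v)$; this follows because changes of~$S_{\rank''}$ with $\rank''\ge\rank'$ cannot affect the value of $\best(\rho_{\{S_\rank\}_\rank},v)$ when that value is strictly below $\rank'$. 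For the exhaustiveness part~(b), I would split into cases on the parity and magnitude of $\prio(v)$ relative to the index $\ell$ for which $\rank=\proj{\rank}_\ell$, and verify that in each case the operation added in line~\ref{alg:symbParityc_line6} or~\ref{alg:symbParityc_line11} captures exactly the precondition $v\in\cpre{\po}(S_{\decr_{\prio(v)}(\rank)})$, while the fixed-point loops in lines~\ref{alg:symbParityc_line8} and~\ref{alg:symbParityc_line13} catch all vertices whose $\best$ value is pushed up to $\rank$ by the newly added vertices.

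Finally, to conclude the lemma, I would observe that the algorithm exits only in line~\ref{alg:terminates}, when $\rank=\top$ and $S_{\decr(\top)}\supseteq S_\top$. At that moment, Invariant~\ref{invariant}(3)(a) is active for $\rank=\top$, which forces $\lift(\rho_{\{S_\rank\}_\rank},v)(v)=\rho_{\{S_\rank\}_\rank}(v)$ for every vertex $v$ (the alternative $\lift(\cdot,v)(v)\ge\top$ together with the update of $S_\top$ in line~\ref{alg:symbParityc_line11}--\ref{alg:symbParityc_line13} would have enlarged $S_\top$, contradicting the exit condition). Thus $\rho_{\{S_\rank\}_\rank}$ is a simultaneous fixed point of all $\lift(\cdot,v)$-operators and, by Invariant~\ref{invariant}(2), lies pointwise below the least such fixed point $\tilde\rho$; hence $\rho_{\{S_\rank\}_\rank}=\tilde\rho$, proving the lemma. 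The full inductive proofs of the three invariants will be carried out as separate lemmata (Lemmata~\ref{lem:dom_invariant2}--\ref{lem:dom_invariant3} referenced in the statement), with Lemma~\ref{lem:ifinv} providing the ``invariants imply correctness'' step sketched above.
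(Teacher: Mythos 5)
Your proposal is correct and follows essentially the same route as the paper: the paper likewise reduces the lemma to proving the three symbolic invariants by induction over the while-loop iterations (its Lemmata on anti-monotonicity, the lower bound on $\tilde\rho$, and the closure property, the last via exactly the stability-plus-exhaustiveness split you describe) and then concludes via an ``invariants imply correctness'' lemma that at termination $\rho_{\{S_\rank\}_\rank}$ is a simultaneous fixed point lying below the least fixed point $\tilde\rho$, hence equal to it. The only cosmetic difference is your framing of the termination step as a contradiction with the exit condition, where the paper simply invokes part~(b) of the closure invariant to place every vertex with $\lift$-value $\top$ into $S_\top$; both land in the same place.
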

\begin{lemma}\label{lem:ifinv}
	Assuming that Invariant~\ref{invariant} holds, the ranking function
$\rho_{\{S_\rank\}_\rank}$ induced by the family of sets $\{S_\rank\}_\rank$ at termination 
of Algorithm~\ref{alg:SymbolicParityDominion} is equal to 
the progress measure for the given parity game and the co-domain
$M^\infty_\domsize$.  
\end{lemma}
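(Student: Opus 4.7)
The plan is to show that at termination $\rho_{\{S_\rank\}_\rank}$ is simultaneously (i)~a simultaneous fixed point of all $\lift(.,v)$-operators over the co-domain $M^\infty_\domsize$ and (ii)~sandwiched between $\tilde{\rho}$ from above and from below, and therefore equal to~$\tilde{\rho}$.

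First, I would pin down the precise state of the algorithm at termination. The only exit is the break in line~\ref{alg:terminates}, taken when $\rank=\top$ and $S_{\decr(\top)} \supseteq S_\top$. Inspection of the control flow shows that this check occurs right after the update of $S_\top$ performed in lines~\ref{alg:startupdatetop}--\ref{alg:endupdatetop} of the current iteration, and before any change to the variable~$\rank$. Hence we are exactly in the situation in which part~(b) of Invariant~\ref{invariant}(3) applies: every vertex~$v$ with $\lift(\rho_{\{S_\rank\}_\rank},v)(v)=\rank=\top$ lies in $S_\top$.

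Next, I would derive the fixed-point property from Invariant~\ref{invariant}(3). Fix any vertex~$v$. By part~(a) we have either $\lift(\rho_{\{S_\rank\}_\rank},v)(v) \ge \rank = \top$ or $\lift(\rho_{\{S_\rank\}_\rank},v)(v)=\rho_{\{S_\rank\}_\rank}(v)$. In the first case the lift equals $\top$, so by the consequence of part~(b) just noted $v \in S_\top$, whence $\rho_{\{S_\rank\}_\rank}(v)=\top=\lift(\rho_{\{S_\rank\}_\rank},v)(v)$; in the second case the equality is immediate. Either way $\lift(\rho_{\{S_\rank\}_\rank},v)(v)=\rho_{\{S_\rank\}_\rank}(v)$, so $\rho_{\{S_\rank\}_\rank}$ is a simultaneous fixed point of all lift operators. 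Well-definedness of $\rho_{\{S_\rank\}_\rank}$ as a function $V\to M^\infty_\domsize$ is guaranteed throughout by Invariant~\ref{invariant}(1).

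Finally, I would close the sandwich. By Invariant~\ref{invariant}(2), $\tilde{\rho}(v) \ge \rho_{\{S_\rank\}_\rank}(v)$ for every $v\in V$. Conversely, since $\tilde{\rho}$ is by definition the \emph{least} simultaneous fixed point of the $\lift(.,v)$-operators and $\rho_{\{S_\rank\}_\rank}$ has just been shown to be such a fixed point, we get $\tilde{\rho}(v) \le \rho_{\{S_\rank\}_\rank}(v)$. Combining the two inequalities yields $\rho_{\{S_\rank\}_\rank}=\tilde{\rho}$, as claimed. The only mildly delicate point in the whole argument is the first one, namely verifying that the break is indeed taken in a state in which Invariant~\ref{invariant}(3)(b) is available; once that control-flow observation is in place, the rest is a clean combination of the three invariants with the least-fixed-point characterization of~$\tilde{\rho}$.
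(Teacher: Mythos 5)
Your proposal is correct and follows essentially the same route as the paper's proof: use Invariant~\ref{invariant}(3) at the terminating state with $\rank=\top$ to conclude that $\rho_{\{S_\rank\}_\rank}$ is a simultaneous fixed point of the $\lift(.,v)$-operators, then combine Invariant~\ref{invariant}(2) with the least-fixed-point characterization of the progress measure to get equality. Your extra care in verifying that the break is taken in a state where part~(b) of the closure property is available is a useful elaboration of a step the paper leaves implicit, but it is not a different argument.
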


\begin{proof}
Recall that the progress measure is the 
least simultaneous fixed point of all $\lift(., v)$-operators
for the given parity game (where $\incr$, $\decr$, and the ordering of ranks
are w.r.t.\ the given co-domain) and let the progress measure be denoted 
by~$\tilde{\rho}$. Let $\{S_\rank\}_\rank$ be the sets in the algorithm 
at termination.
For all $v \in V$ the ranking function $\rho_{\{S_\rank\}_\rank}(v)$ is defined as 
$ \max \{\rank \in M^\infty_h \mid v \in S_\rank \}$. 
By Invariant~\ref{invariant}(\ref{inv:atleast})
we have $\rho_{\{S_\rank\}_\rank}(v) \le \tilde{\rho}(v)$ for all $v \in V$.

When the algorithm terminates, with $\rank = \top$, we have 
by Invariant~\ref{invariant}(\ref{inv:closure}) 
$\lift(\rho_{\{S_\rank\}_\rank},v)(v)  = \rho_{\{S_\rank\}_\rank}(v)$
for each vertex~$v$ and thus $\rho_{\{S_\rank\}_\rank}$ is a simultaneous
fixed point of the $\lift(.,v)$-operators.
Now, as $\tilde{\rho}$ is the least simultaneous fixed point of all $\lift(., v)$-operators,
we obtain $\rho_{\{S_\rank\}_\rank}(v) \ge \tilde{\rho}(v)$ for all $v \in V$.
Hence we have $\rho_{\{S_\rank\}_\rank}(v) = \tilde{\rho}(v)$ for all $v \in V$.
\end{proof}

\begin{lemma}\label{lem:dom_invariant2}
Before and after each iteration of the while-loop in 
 Algorithm~\ref{alg:SymbolicParityDominion}
 we have $S_{\rank_1} \supseteq S_{\rank_2}$ for all $\rank_1 \leq \rank_2$
 with $\rank_1, \rank_2 \in M^\infty_\domsize$,
 i.e., Invariant~\ref{invariant}\upbr{\ref{inv:monotonicity}} holds.
\end{lemma}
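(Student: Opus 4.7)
The plan is to prove Invariant~\ref{invariant}(\ref{inv:monotonicity}) by induction on the iterations of the while-loop. The base case follows from initialization: $S_{\bar 0} = V$ and $S_\rank = \emptyset$ for every other $\rank \in M^\infty_\domsize$, so $S_{\rank_1} \supseteq S_{\rank_2}$ trivially holds whenever $\rank_1 \leq \rank_2$.

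For the inductive step, fix an iteration and assume anti-monotonicity holds at its start. I would first note that an iteration modifies only two kinds of sets: the set $S_\rank$ for the current rank (updated in lines~\ref{alg:startupdateS}--\ref{alg:endupdateS} or lines~\ref{alg:startupdatetop}--\ref{alg:endupdatetop}), and, when the else-branch at line~\ref{alg:symbParityc_line20} is taken, some sets $S_{\rank'}$ with $\rank' < \rank$; every modification is a union, so sets only grow. Consequently, for any pair $\rank_1 < \rank_2$ with $\rank_2 > \rank$ the set $S_{\rank_2}$ is unchanged while $S_{\rank_1}$ can only grow, so the inclusion $S_{\rank_1} \supseteq S_{\rank_2}$ is preserved from the pre-iteration invariant.

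The delicate case is $\rank_2 \leq \rank$, where $S_\rank$ may have strictly grown and some $S_{\rank'}$ with $\rank' < \rank$ may have grown in the repeat-until loop. I would split on which branch is taken after the update of $S_\rank$. If the \textbf{if}-branch at line~\ref{alg:symbParityc_line16} or the \textbf{elseif}-branch at line~\ref{alg:ifterminates} triggers, then $S_{\decr(\rank)} \supseteq S_\rank$ holds explicitly; by the pre-iteration invariant $S_{\rank_1} \supseteq S_{\decr(\rank)} \supseteq S_\rank$ for every $\rank_1 \leq \decr(\rank)$, and no other sets have changed. Otherwise the repeat-until loop executes; let $\rank^\ast$ denote the value of $\rank'$ at loop exit. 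The loop body performs $S_{\rank'} \leftarrow S_{\rank'} \cup S_\rank$ before decrementing $\rank'$ and exits as soon as $S_{\rank'} \supseteq S_\rank$. Thus (a) every $\rank_1$ with $\rank^\ast < \rank_1 \leq \decr(\rank)$ had $S_{\rank_1}$ enlarged to a superset of $S_\rank$; (b) for $\rank_1 = \rank^\ast$ the exit condition gives $S_{\rank^\ast} \supseteq S_\rank$; and (c) for $\rank_1 < \rank^\ast$ the set $S_{\rank_1}$ and the set $S_{\rank^\ast}$ are both unchanged in this iteration, so the pre-iteration invariant yields $S_{\rank_1} \supseteq S_{\rank^\ast} \supseteq S_\rank$. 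For pairs with $\rank_1 < \rank_2 < \rank$, the same trichotomy applied to $\rank_2$ in place of $\rank$, together with the pre-iteration invariant, gives $S_{\rank_1} \supseteq S_{\rank_2}$.

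The main obstacle is exactly the bookkeeping around the repeat-until loop: one must identify $\rank^\ast$ as a ``pivot'' whose set was not modified during the iteration, so that the pre-iteration invariant transports unchanged below $\rank^\ast$ while the loop takes care of ranks between $\rank^\ast$ and $\rank$. Once this observation is made, the remaining verification is routine case analysis.
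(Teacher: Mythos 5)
Your proof is correct and follows essentially the same route as the paper: induction over the iterations of the while-loop, with the observation that the repeat-until loop in lines~\ref{alg:symbParityc_line20}--\ref{alg:symbParityc_line25} restores anti-monotonicity by propagating the newly added vertices down to exactly those ranks whose sets do not yet contain them. Your treatment of the pivot $\rank^\ast$ is just a more explicit write-up of the case analysis the paper leaves implicit.
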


\begin{proof}
The proof is by induction over the iterations of the while-loop.
The claim is 
satisfied when we first enter the while-loop and only $S_{\bar{0}}$ is non-empty.   
It remains to show that when the claim is valid at the beginning of a iteration
then the claim also hold afterwards.
By the induction hypothesis, the sets $S_{\rank'}$ for $\rank' < \rank$ 
are monotonically decreasing. Thus it is sufficient to find the lowest 
rank~$\rank^*$ such that for all $\rank^* \le \rank' < \rank$ 
we have $S_\rank \not\subseteq S_{\rank'}$ and 
add the vertices newly added to $S_\rank$ to the sets $S_{\rank'}$ 
with $\rank^* \le \rank' < \rank$, which is done in
lines~\ref{alg:symbParityc_line16}--\ref{alg:symbParityc_line25} of the while-loop.
\end{proof}

\begin{lemma}\label{lem:dom_invariant1}
 Let $\tilde{\rho}$ be the progress measure of the given parity game
 and let $\rho_{\{S_{\rank}\}_{\rank}}(v)= 
\max \{\rank \in M^\infty_h \mid v \in S_{\rank} \}$
be the ranking function with respect to the family of sets $\{S_{\rank}\}_{\rank}$
that is maintained by the algorithm. Throughout Algorithm~\ref{alg:SymbolicParityDominion} we have
$\tilde{\rho}(v) \geq \rho_{\{S_{\rank}\}_{\rank}}(v)$ for all $v \in V$, i.e., 
Invariant~\ref{invariant}\upbr{\ref{inv:atleast}} holds.
\end{lemma}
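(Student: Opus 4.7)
The plan is to prove Invariant~\ref{invariant}(\ref{inv:atleast}) by induction over the individual operations of Algorithm~\ref{alg:SymbolicParityDominion} that modify the sets $S_\rank$. Equivalently, it suffices to show that every time a vertex $v$ is added to a set $S_\rank$, one already has $\tilde\rho(v) \geq \rank$, since $\rho_{\{S_\rank\}_\rank}(v)$ is the maximum over all $\rank$ with $v\in S_\rank$. The base case is immediate: before the while-loop only $S_{\bar 0}=V$ is populated, so $\rho_{\{S_\rank\}_\rank}\equiv \bar 0 \leq \tilde\rho$.

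For the inductive step, I would enumerate the places where vertices are added. Line~\ref{alg:symbParityc_line22} merely propagates vertices of $S_\rank$ into sets $S_{\rank'}$ with $\rank'<\rank$, so there the claim follows from the inductive hypothesis applied to the earlier update at rank $\rank$. The substantive cases are lines~\ref{alg:symbParityc_line6},~\ref{alg:symbParityc_line8} (for $\rank<\top$) and lines~\ref{alg:symbParityc_line11},~\ref{alg:symbParityc_line13} (for $\rank=\top$). The key tools are: (a)~$\tilde\rho$ is a simultaneous fixed point of all $\lift(.,v)$-operators, hence $\tilde\rho(v)\geq \incr_{\prio(v)}(\best(\tilde\rho,v))$; (b)~by the induction hypothesis $\rho_{\{S_\rank\}_\rank}\leq \tilde\rho$ pointwise, and $\best$ is monotone in its first argument, so $\best(\tilde\rho,v)\geq \best(\rho_{\{S_\rank\}_\rank},v)$; and (c)~the operator $\cpre{\po}(B)$ captures exactly the vertices~$v$ for which $\best(\rho,v)$ is forced to be $\geq x$ whenever $B\supseteq\{w:\rho(w)\geq x\}$.

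Concretely, if $v$ is added in line~\ref{alg:symbParityc_line6} as an element of $\cpre{\po}(S_{\decr_{2\idx-1}(\rank)})\cap P_{2\idx-1}$, then by (c) and the induction hypothesis $\best(\tilde\rho,v)\geq \decr_{2\idx-1}(\rank)$, and since $\prio(v)=2\idx-1$, we get $\tilde\rho(v)\geq \incr_{2\idx-1}(\decr_{2\idx-1}(\rank))$. Using the choice $\rank=\proj{\rank}_\ell$ with $2\idx-1\leq \ell$ (so that $\proj{\rank}_{2\idx-1}=\rank$) together with the identity $\incr_\ell(\decr_\ell(\rank))=\proj{\rank}_\ell$ valid for $\bar 0<\proj{\rank}_\ell<\top$, this lower bound evaluates to $\rank$. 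For line~\ref{alg:symbParityc_line8}, the vertex $v$ has priority $\prio(v)\leq\ell$ and lies in $\cpre{\po}(S_\rank)$, so $\best(\tilde\rho,v)\geq \rank$; when $\prio(v)=2\idx$ is even, $\lift(\tilde\rho,v)(v)=\proj{\best(\tilde\rho,v)}_{2\idx}\geq \proj{\rank}_{2\idx}=\rank$, and when $\prio(v)=2\idx-1$ is odd (with $2\idx-1\leq\ell$), $\lift(\tilde\rho,v)(v)=\incr_{2\idx-1}(\best(\tilde\rho,v))>\rank$. The cases lines~\ref{alg:symbParityc_line11} and~\ref{alg:symbParityc_line13} are analogous using $\incr_\ell(\top)=\top$ and $\incr_\ell(\decr_\ell(\top))=\top$.

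The main obstacle will be the careful bookkeeping around the definitions of $\incr_\ell$, $\decr_\ell$, and $\proj{.}_\ell$, in particular justifying the monotonicity of $\proj{.}_{2\idx}$ with respect to the lexicographic ordering (in which the rightmost coordinate is most significant). This monotonicity is true because if $x\geq y$, then the rightmost coordinate at which $x$ and $y$ differ lies either below index~$2\idx$ (in which case $\proj{x}_{2\idx}=\proj{y}_{2\idx}$) or at or above~$2\idx$ (in which case $\proj{x}_{2\idx}>\proj{y}_{2\idx}$), so $\proj{x}_{2\idx}\geq \proj{y}_{2\idx}$ in either case. Once these elementary identities about the progress-measure domain are in hand, every case of the inductive step is a short chain of inequalities combining (a) with the inductive hypothesis via (b)--(c).
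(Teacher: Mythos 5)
Your proposal is correct and follows essentially the same route as the paper's proof: induction reducing to the claim that whenever a vertex is added to $S_\rank$ one already has $\tilde{\rho}(v)\geq\rank$, established by combining the $\cpre{\po}$ semantics with the induction hypothesis to lower-bound $\best(\tilde{\rho},v)$ and then invoking the fixed-point property of $\tilde{\rho}$ under $\incr_{\prio(v)}$. The only cosmetic difference is that you induct over individual set-modifying operations rather than whole while-loop iterations, and you spell out the monotonicity of $\proj{\cdot}_\ell$ slightly more explicitly.
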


\begin{proof}
We show the lemma by induction over the iterations of the while-loop.
Before the first iteration of the while-loop only $S_{\bar{0}}$ is non-empty,
thus the claim holds by $\tilde{\rho} \ge \bar{0}$.

Assume we have $\rho_{\{S_{\rank}\}_{\rank}}(v) \le \tilde{\rho}(v)$ 
for all $v \in V$ before an iteration of the while-loop. We show
that $\rho_{\{S_{\rank}\}_{\rank}}(v) \le \tilde{\rho}(v)$ 
also holds during and after the iteration of the while-loop.
As the update of $S_{\rank'}$ in line~\ref{alg:symbParityc_line22}
does not change $\rho_{\{S_{\rank}\}_{\rank}}$, we only have to show 
that the invariant is maintained by the update of $S_\rank$ in 
lines~\ref{alg:startupdateS}--\ref{alg:endupdatetop}.
Further $\rho_{\{S_{\rank}\}_{\rank}}(v)$ only changes for vertices
newly added to $S_\rank$, thus we only have to take these vertices into account.

  Let $\ell$ be the maximal index such that $\rank=\proj{\rank}_\ell$
  or the highest odd priority if $\rank = \top$.
  Assume $\rank < \top$, the argument for $\rank = \top$ 
  is analogous. The algorithm adds vertices to $S_\rank$
  in (1) line~\ref{alg:symbParityc_line6} and 
  (2) line~\ref{alg:symbParityc_line8}.
  In case~(1) we add the vertices $\bigcup_{1 \le \idx \leq (\ell+1)/2}
	(
	\cpre{\po}(S_{{\decr_{2\idx-1}(\rank)}}) \cap P_{2\idx-1} 
	)
	$
  to $S_\rank$.
  Let $v \in \cpre{\po}$ $(S_{{\decr_{2\idx-1}(\rank)}}) \cap P_{2\idx-1}$ 
  for some $1 \le \idx \leq (\ell+1)/2$.
  \begin{itemize}
    \item If $v \in V_\pe \cap P_{2\idx-1}$, then all successors $w$ of $v$ are 
    in $S_{{\decr_{2\idx-1}(\rank)}}$ and thus, by the induction hypothesis,
	  have $\tilde{\rho}(w) \geq \decr_{2\idx-1}(\rank)$. Now as $v \in P_{2\idx-1}$,
	  it has rank $\tilde{\rho}(v)$ at least $\incr_{2\idx-1}(\decr_{2\idx-1}(\rank))= \rank$.
    \item If $v \in V_\po \cap P_{2\idx-1}$, at least one successors $w$ of $v$ 
    is in $S_{{\decr_{2\idx-1}(\rank)}}$ and thus, by the induction hypothesis,
	  has $\tilde{\rho}(w) \geq \decr_{2\idx-1}(\rank)$. Now as $v \in P_{2\idx-1}$, 
	  it has rank $\tilde{\rho}(v)$ at least $\incr_{2\idx-1}(\decr_{2\idx-1}(\rank))= \rank$.
  \end{itemize}
  
  \noindent For case~(2) consider a vertex $v \in  \cpre{\po}(S_\rank) \setminus \bigcup_{\ell < \idx \le d} P_\idx$ added in line~\ref{alg:symbParityc_line8}.
  \begin{itemize}
    \item If $v \in V_\pe$, all successors $w$ of $v$ are in $S_\rank$ and thus, 
    by the induction hypothesis, have $\tilde{\rho}(w) \geq \rank$. Since the
    priority of $v$ is $\leq \ell$, we have $\tilde{\rho}(v) \geq \proj{\rank}_\ell =  \rank$.
    \item If $v \in V_\po$, at least one successors $w$ of $v$ is in $S_\rank$ and 
    thus, by the induction hypothesis, has $\tilde{\rho}(w) \geq \rank$. 
    Since the priority of $v$ is $\leq \ell$, 
	  we have $\tilde{\rho}(v) \geq \proj{\rank}_\ell = \rank$.\qedhere
  \end{itemize}
\end{proof}

\begin{lemma}\label{lem:dom_invariant3}
Before and after each iteration of the while loop
we have for the rank
			stored in $\rank$ and all vertices~$v$ either 
			$\lift(\rho_{\{S_\rank\}_\rank},v)(v)  \ge  \rank$ or
			$\lift(\rho_{\{S_\rank\}_\rank},v)(v) = \rho_{\{S_\rank\}_\rank}(v)$.
At line~\ref{alg:symbParityc_line15} of the algorithm we additionally have 
			$v \in S_\rank$ for all vertices~$v$ for which the value of 
			$\lift(\rho_{\{S_\rank\}_\rank},v)(v)$ is equal to~$\rank$.
Thus Invariant~\ref{invariant}\upbr{\ref{inv:closure}} holds.
\end{lemma}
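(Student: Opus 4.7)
I prove both parts of Invariant~\ref{invariant}(\ref{inv:closure}) simultaneously by induction on the iterations of the while-loop in Algorithm~\ref{alg:SymbolicParityDominion}. For the base case, at the start of the first iteration we have $\rho(v) = \bar{0}$ for all $v$ and $\rank = \incr(\bar{0})$; thus $\best(\rho,v) = \bar{0}$ and $\lift(\rho,v)(v) = \incr_{\prio(v)}(\bar{0})$, which equals $\bar{0} = \rho(v)$ for even $\prio(v)$ and, for odd $\prio(v) = p \geq 1$, is a vector with a single non-zero entry at position $p$ and hence lexicographically at least $\incr(\bar{0}) = \rank$. So (a) holds, and (b) is vacuous. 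For the inductive step, assume (a) holds at the start of the current iteration with rank $\rank$, sets $\{S_\rank^{(0)}\}_\rank$, and ranking $\rho^{(0)}$.

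To establish (b) at line~\ref{alg:symbParityc_line15}, let $\rho^{(1)}$ denote the ranking after the update of $S_\rank$, fix a vertex $v$ with $\lift(\rho^{(1)}, v)(v) = \rank$, and let $p = \prio(v)$. The identity $\incr_p(\best(\rho^{(1)}, v)) = \rank$ forces $\rank = \proj{\rank}_p$ and therefore $p \leq \ell$ (for $\rank < \top$, where $\ell$ is the maximal index with $\rank = \proj{\rank}_\ell$). Unfolding $\incr_p$, this reduces to $\best(\rho^{(1)}, v) \geq \decr_p(\rank)$ for odd $p$, equivalently $v \in \cpre{\po}(S^{(1)}_{\decr_p(\rank)})$, and to $\best(\rho^{(1)}, v) \geq \rank$ for even $p$, equivalently $v \in \cpre{\po}(S^{(1)}_\rank)$. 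The first condition inserts $v$ into $S_\rank$ via line~\ref{alg:symbParityc_line6} and the second via the fixed-point iteration of line~\ref{alg:symbParityc_line8}; the case $\rank = \top$ is analogous using lines~\ref{alg:symbParityc_line11}--\ref{alg:symbParityc_line13}.

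For (a) at the end of the iteration, I first observe that the repeat-until in lines~\ref{alg:symbParityc_line20}--\ref{alg:symbParityc_line25} only propagates $S_\rank$ into lower-rank sets and therefore leaves $\rho$ unchanged, so $\rho_{\text{end}} = \rho^{(1)}$. Let $\rank_{\text{new}}$ be the new value of $\rank$: this is $\incr(\rank)$ in Case~A (the check $S_{\decr(\rank)}^{(1)} \supseteq S_\rank^{(1)}$ succeeds) and $\incr(\rank'_*)$ in Case~B, where $\rank'_* = \min\{\rho^{(0)}(v) : v \in S_\rank^{(1)} \setminus S_\rank^{(0)}\}$. For a fixed vertex $v$, if $\lift^{(1)}(v) \geq \rank_{\text{new}}$, (a) is immediate. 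If $\lift^{(1)}(v) = \rank$, then (b) places $v \in S_\rank^{(1)}$, so $\rho^{(1)}(v) \geq \rank$; combined with the easy global bound $\rho^{(1)}(v) \leq \lift^{(1)}(v)$ (which follows by induction from the fact that the algorithm only inserts $v$ into $S_\rank$ when $\lift \geq \rank$), this gives $\rho^{(1)}(v) = \lift^{(1)}(v) = \rank$.

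The remaining subcase, $\lift^{(1)}(v) < \rank$, is the technical core. Here $v \notin S_\rank^{(1)}$ and only $S_\rank$ moved in this iteration, so $\rho^{(1)}(v) = \rho^{(0)}(v)$; the pre-iteration (a) then forces $\lift(\rho^{(0)}, v)(v) = \rho^{(0)}(v) = \rho^{(1)}(v)$ (the alternative $\lift(\rho^{(0)}, v)(v) \geq \rank$ is ruled out since $\lift^{(1)} \geq \lift^{(0)}$). What remains is to compare $\lift^{(1)}(v)$ with $\lift^{(0)}(v)$, which can differ only through neighbors in $\Delta S = S_\rank^{(1)} \setminus S_\rank^{(0)}$. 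For $v \in V_\po$ ($\best$ is a $\max$) any such neighbor pushes $\best^{(1)}(v)$ to at least $\rank$, which forces $\lift^{(1)}(v) \geq \rank$ for $\prio(v) \leq \ell$ and contradicts the subcase, so $\best$ and $\lift$ are in fact unchanged. For $v \in V_\pe$ ($\best$ is a $\min$) the min can increase only when every $\rho^{(0)}$-minimizer among $v$'s neighbors lies in $\Delta S$; in Case~A all such vertices have $\rho^{(0)} = \decr(\rank)$ (by the case hypothesis), so $\best^{(1)}(v)$ jumps to at least $\rank$ and the analysis reduces to the previous subcases, while in Case~B the choice $\rank_{\text{new}} = \incr(\rank'_*)$ is made precisely so that any strict increase of $\best^{(1)}(v)$ reaches $\incr(\rank'_*) = \rank_{\text{new}}$ and hence, via monotonicity of $\incr_p$, $\lift^{(1)}(v) \geq \rank_{\text{new}}$, placing $v$ back into the first subcase. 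This bookkeeping tying $\rank_{\text{new}}$ to the possible $\best$-jumps for player-$\pe$ vertices is where I expect the proof to demand the most care.
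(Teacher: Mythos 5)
Your proof has the same skeleton as the paper's: induction over iterations, part~(b) established at line~\ref{alg:symbParityc_line15} by checking that lines~\ref{alg:symbParityc_line6}/\ref{alg:symbParityc_line8} (resp.\ \ref{alg:symbParityc_line11}/\ref{alg:symbParityc_line13}) catch every vertex whose lift value equals $\rank$, and part~(a) deduced from part~(b) together with an analysis of the smallest set updated in the iteration. Your part~(b) is essentially the paper's argument (stated contrapositively) and is fine, as is the identification of $\rank_{\text{new}}$ with $\incr(\rank'_*)$. The gap is in the ``technical core'' of part~(a): your analysis of the subcase $\lift^{(1)}(v)<\rank$ only works for $\prio(v)\le\ell$, and the conclusion ``so $\best$ and $\lift$ are in fact unchanged'' is false for vertices of priority greater than $\ell$. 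Concretely, with $7$ priorities (ranks written as $(x_1,x_3,x_5)$), take $\rank=(0,1,1)$, so $\ell=3$, and a vertex $v\in V_\po$ with $\prio(v)=4$, $\rho^{(0)}(v)=\bar 0$, whose maximizing successor $w$ has $\rho^{(0)}(w)=(0,1,0)$ and is newly added to $S_{(0,1,1)}$. Then $\best$ jumps from $(0,1,0)$ to $(0,1,1)$ and $\lift(v)$ jumps from $\proj{(0,1,0)}_4=\bar 0$ to $\proj{(0,1,1)}_4=(0,0,1)$, which is still $<\rank$; so $v$ lands in your third subcase with $\best$ and $\lift$ both changed. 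The same defect infects your Case~B step for $V_\pe$: ``via monotonicity of $\incr_p$, $\lift^{(1)}(v)\ge\rank_{\text{new}}$'' uses $\incr_p(x)\ge x$, which fails for even $p$ because $\incr_p$ zeroes the counters below index~$p$ (in the example, the invariant survives only because $(0,0,1)\ge\incr((0,1,0))=(1,1,0)$, and establishing that inequality in general is exactly the missing argument).

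The paper avoids this priority case distinction entirely via its Claim~\ref{claim:27}: for $\rank^*$ the smallest rank whose set changed, it assumes $\lift(\rho^{\text{new}},v)(v)>\rho^{\text{new}}(v)$ and $\lift(\rho^{\text{new}},v)(v)<\rank^*$, derives $\best(\rho^{\text{new}},v)<\rank^*$, and then observes that an increase of $\best$ forces some successor's rank to have increased to a value below $\rank^*$, i.e.\ some set $S_{\rank'}$ with $\rank'<\rank^*$ changed---contradicting the minimality of $\rank^*$. That argument reasons about \emph{which sets changed} rather than about the priority of $v$, which is what lets it cover $\prio(v)>\ell$. To repair your version you would need to add, for the case $\prio(v)>\ell$, an argument that whenever $\best^{(1)}(v)$ increases (necessarily to $\rank$, since newly promoted vertices receive rank exactly $\rank$), the value $\proj{\rank}_{\prio(v)}$ is either equal to $\proj{\best^{(0)}(v)}_{\prio(v)}=\lift^{(0)}(v)$ or at least $\rank_{\text{new}}$; this uses that the changed successor had old rank at least $\decr(\rank_{\text{new}})$ and a short lexicographic comparison involving $\proj{\cdot}_{\prio(v)}$. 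As written, that step is absent, so the induction does not close.
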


\begin{proof}
  We show the claim by induction over the iterations of the while-loop.
  Before we first enter the loop, we have $\rank = \incr(\bar{0})$ and $S_{\bar{0}}=V$
  and thus the claim is satisfied.
  For the inductive step, let $\rank^\text{old}$ be the value of~$\rank$ and $\rho^\text{old}$ the ranking 
  function $\rho_{\{S_\rank\}_\rank}$ before a fixed iteration of the 
  while-loop and assume we have for all $v \in V$ either 
  $\lift(\rho^\text{old},v)(v)  \ge  \rank^\text{old}$ or
  $\lift(\rho^\text{old},v)(v) = \rho^\text{old}(v)$ before
  the iteration of the while-loop. Let $\rank^\text{new}$ be the value of $\rank$ and 
  $\rho^\text{new}$ the ranking function $\rho_{\{S_\rank\}_\rank}$ after the
  iteration. 
  We have three cases for the value of $\rank^\text{new}$:
  (1)~$\rank^\text{new} = \incr(\rank^\text{old})$ (line~\ref{alg:symbParityc_line16}),
  (2)~$\rank^\text{new} = \rank^\text{old} = \top$ (line~\ref{alg:ifterminates}), or
  (3)~$\rank^\text{new} < \rank^\text{old}$, i.e., the rank is decreased in 
    lines~\ref{alg:rankloop}--\ref{alg:rankafterloop} to maintain anti-monotonicity.

  We show in Claim~\ref{claim:27} that, in all three cases, if a set  $S_{\rank'}$, for some $\rank' < \rank^\text{old}$, 
  is not changed in the considered iteration of the while-loop then 
  for all $v \in V$ with $\lift(\rho^\text{new},v)(v) \le \rank'$ we have that 
  $\lift(\rho^\text{new},v)(v) = \rho^\text{new}(v)$. 

  Given Claim~\ref{claim:27}, we prove the first part of the invariant as follows.
  In the {case~\upbr{1}} the lowest (and only) rank for which the set is updated is $\rank^\text{old}$,
  thus it remains to show $\lift(\rho^\text{new},v)(v) = \rho^\text{new}(v)$
  for vertices with $\lift(\rho^\text{new},v)(v) = \rank^\text{old}$, which
  is done by showing the second part of the invariant, namely that
  $v \in S_{\rank^\text{old}}$ for all vertices~$v$ with
  $\lift(\rho_{\{S_\rank\}_\rank},v)(v) = \rank^\text{old}$ after the update of the 
  set $S_{\rank^\text{old}}$ in lines~\ref{alg:startupdateS}--\ref{alg:endupdatetop};
  for case~(1) we have $\rho^\text{new} = \rho_{\{S_\rank\}_\rank}$ at this point
  in the algorithm.

  In the {cases~\upbr{2} and~\upbr{3}} we have that the lowest rank for which the set is updated in 
  the iteration is equal to $\rank^\text{new}$, thus Claim~\ref{claim:27} implies 
  that the invariant $\lift(\rho^\text{new},v)(v)  \ge  \rank^\text{new}$ or
  $\lift(\rho^\text{new},v)(v) = \rho^\text{new}(v)$ holds for all 
  $v \in V$ after the while-loop.

  \begin{claim}\label{claim:27}
    Let $\rank^* \le \rank^\text{old}$ be a rank with the guarantee that no set 
    corresponding to a lower rank than $\rank^*$ is changed in this iteration of 
    the while-loop. Then we have for all $v \in V$ either 
    $\lift(\rho^\text{new},v)(v)  \ge  \rank^*$ or
    $\lift(\rho^\text{new},v)(v) = \rho^\text{new}(v)$ after
    the iteration of the while-loop.
  \end{claim}

      To prove the claim, note that since
      each set $S_\rank$ is monotonically non-decreasing over 
      the algorithm, we have $\rho^\text{new}(v) \ge \rho^\text{old}(v)$
      and $\lift(\rho^\text{new},v)(v) \ge \rho^\text{new}(v)$. 
      Assume by contradiction that there is a vertex~$v$ with 
      $\lift(\rho^\text{new},v)(v) > \rho^\text{new}(v)$ and 
      $\lift(\rho^\text{new},v)(v) <  \rank^*$.
      The latter implies $\best(\rho^\text{new},v) <  \rank^*$.
      By the induction hypothesis for $\rank^* \le \rank^\text{old}$ we have
      $\lift(\rho^\text{old},v)(v) = \rho^\text{old}(v)$.
      By $\rho^\text{new}(v) \ge \rho^\text{old}(v)$ and the definition of the 
      lift-operator this implies $\best(\rho^\text{new},v) > \best(\rho^\text{old},v)$,
      i.e., the rank assigned to at least one vertex~$w$ with $(v, w) \in E$ 
      is increased. By  
      $\best(\rho^\text{new},v) <  \rank^*$
      this implies that a set $S_{\rank'}$ with $\rank' <  \rank^*$
      is changed in this iteration, a contradiction to the definition of $\rank^*$.
      This concludes the proof of the claim.\smallskip

  It remains to show that $v \in S_{\rank^\text{old}}$ for all vertices~$v$ with
  $\lift(\rho_{\{S_\rank\}_\rank},v)(v) = \rank^\text{old}$ after the update of the 
  set $S_{\rank^\text{old}}$ in lines~\ref{alg:startupdateS}--\ref{alg:endupdatetop}.
  Towards a contradiction assume that there is a $v \not\in S_{\rank^\text{old}}$ 
  such that $\lift(\rho_{\{S_\rank\}_\rank},v)(v) = \rank^\text{old}$.
  Assume $v \in V_\pe$, the argument for $v \in V_\po$ is analogous.
  Let $\ell$ be maximal such that $\rank^\text{old}
  =\proj{\rank^\text{old}}_\ell$ for $\rank^\text{old} < \top$
  and let $\ell$ be the highest odd priority in the parity game 
  for $\rank^\text{old} = \top$. Notice that $\prio(v)$ can be at most
  $\ell$ for $\lift(\rho_{\{S_\rank\}_\rank},v)(v) = \rank^\text{old}$ to hold.
  We now distinguish two cases depending on whether $\prio(v)$ is odd or even.
  \begin{itemize}
    \item If $\prio(v)$ is odd, i.e., $\prio(v)=2\idx-1$ for some $\idx \le (\ell+1)/2$,
      then we have that all successors $w$ of $v$ have 
      $\rho_{\{S_\rank\}_\rank}(w) \geq \decr_{2\idx-1}(\rank^\text{old})$ and 
      thus, by Lemma~\ref{lem:dom_invariant2}, 
      $w \in S_{\decr_{2\idx-1}(\rank^\text{old})}$. 
      But then $v$ would have being included in $S_{\rank^\text{old}}$ in 
      line~\ref{alg:symbParityc_line6}, a contradiction.

    \item If $\prio(v)$ is even, i.e., $\prio(v)=2\idx$ for some $\idx \le \ell/2$,
      then we have that all successors $w$ of $v$ have $\rho_{\{S_\rank\}_\rank}(w) 
      \geq \rank^\text{old}$.
      Then by the definition of 
      $\rho_{\{S_\rank\}_\rank}$ it must be that $w \in S_{\rank'}$ 
      for some $\rank' \geq \rank^\text{old}$ and 
      by Lemma~\ref{lem:dom_invariant2} it must be that $w \in S_{\rank^\text{old}}$. 
      But then $v$ would have being included in $S_{\rank^\text{old}}$ in 
      line~\ref{alg:symbParityc_line8}, a contradiction.
  \end{itemize}
  Thus, after the update of the set $S_{\rank^\text{old}}$ we have that $v \in S_{\rank^\text{old}}$ for all vertices~$v$ with
  $\lift(\rho_{\{S_\rank\}_\rank},v)(v) = \rank^\text{old}$.
  Together we the above observations this proves the lemma.
\end{proof}

\subparagraph*{Number of symbolic operations.} 
We address next the number of symbolic operations of Algorithm~\ref{alg:SymbolicParityDominion} when using the sets $S_\rank$ directly.
We analyze the number of symbolic operations when using a linear number of sets below.
The main idea is that a set $S_\rank$ is only reconsidered if at least 
one new vertex was added to $S_\rank$.
\begin{lemma}
  For parity games with $n$ vertices and $\numprio$ priorities
  Algorithm~\ref{alg:SymbolicParityDominion} takes $O(\numprio \cdot n \cdot |
  M^\infty_\domsize|)$ many symbolic operations and uses  $O(|M^\infty_\domsize|)$ many sets, 
  where $\domsize$ is some integer in $[1, n-1]$.
\end{lemma}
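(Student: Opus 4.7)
The space claim is immediate: the algorithm stores the single set $S_\rank$ for each $\rank \in M^\infty_\domsize$, so at most $|M^\infty_\domsize|$ sets are in memory simultaneously. For the operation count, my plan is an amortized analysis that charges every iteration of the while-loop either to a vertex insertion into some $S_\rank$ or to the single initial increment of the variable $\rank$ across it.

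The key structural fact is that the variable $\rank$ arrives at a given value $\rank$ at the top of the while-loop only (i) when it is set to $\rank$ by an increment on line~\ref{alg:rankinc} or~\ref{alg:symbParityc_line25}, or (ii) as the result of the propagation loop of an earlier iteration later setting $\rank \gets \incr(\rank')$ with $\incr(\rank') = \rank$. In case (ii) the very step that brought $\rank'$ to $\decr(\rank)$ during that repeat-loop must have just executed $S_\rank \gets S_\rank \cup S_{\tilde{\rank}}$ for some $\tilde{\rank} > \rank$ with $S_\rank \not\supseteq S_{\tilde{\rank}}$ beforehand, strictly enlarging $S_\rank$. Consequently each rank is processed at most $|S_\rank| + 1 \le n + 1$ times by the body of the while-loop. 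Within one such processing, lines~\ref{alg:symbParityc_line6} or~\ref{alg:symbParityc_line11} contribute $O(\numprio)$ one-step operations plus $O(\numprio)$ basic set operations. The fixed-point repeat-loops on lines~\ref{alg:symbParityc_line8} and~\ref{alg:symbParityc_line13} need a secondary amortization: every non-terminating iteration strictly enlarges $S_\rank$, so summed over all re-processings of a fixed $\rank$ the total number of such iterations is at most $n$ (for the enlarging ones) plus the number of re-processings (for the terminating check each time), hence $O(n)$. Together this charges $O(n \cdot \numprio)$ operations to each rank, and summing over the $|M^\infty_\domsize|$ ranks yields the claimed $O(\numprio \cdot n \cdot |M^\infty_\domsize|)$ bound for the update-of-$S_\rank$ part of the algorithm.

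It remains to account for lines~\ref{alg:symbParityc_line16}--\ref{alg:symbParityc_line25}. The inclusion test and the branch use $O(1)$ operations per outer iteration, absorbed into the above budget. For the propagation repeat-loop on lines~\ref{alg:rankloop}--\ref{alg:symbParityc_line24}, the entry condition of the else-branch together with the until-condition guarantee that every iteration strictly enlarges the corresponding set $S_{\rank'}$; since each of the $|M^\infty_\domsize|$ sets can receive at most $n$ new vertices, the globally summed number of propagation iterations is $O(n \cdot |M^\infty_\domsize|)$, each of $O(1)$ cost, which is absorbed. The main obstacle I anticipate is coupling the two amortizations cleanly—re-processings of a fixed rank with the inner fixed-point iterations, and the global propagation-loop count with vertex additions across all ranks—while verifying that the $\rank = \top$ branch satisfies exactly the same charging scheme; structurally the top branch executes the same pattern with at most $\lfloor \numprio/2 \rfloor$ summands in place of $(\ell+1)/2$ and with the unrestricted $\cpre{\po}(S_\top)$ in place of the priority-filtered variant, so all estimates transfer without modification.
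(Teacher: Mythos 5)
Your proposal is correct and follows the same overall decomposition as the paper's proof: the space bound is the one set per element of $M^\infty_\domsize$, each rank is processed $O(n)$ times, the inner fixed-point loops are charged to vertex insertions into $S_\rank$ plus one termination check per processing (the paper's ``$O(n+\#\rank)$''), and the propagation loop is accounted for separately. Two remarks on where your write-up deviates. First, your accounting of the propagation repeat-loop --- every iteration of lines~\ref{alg:rankloop}--\ref{alg:symbParityc_line24} strictly enlarges the set $S_{\rank'}$ it touches, so globally there are at most $n\cdot\lvert M^\infty_\domsize\rvert$ such iterations --- is more direct than the paper's, which instead charges each propagation iteration to a future while-loop iteration that will consider $\incr(\rank')$; your version avoids that indirection. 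Second, your claim that each rank is reached at most $\lvert S_\rank\rvert+1$ times, with the ``$+1$'' covering a \emph{single} arrival by increment, is imprecise (the paper's corresponding sentence shares the imprecision): after a propagation drops the variable strictly below $\rank$, it can climb back to $\rank$ through a chain of increments on line~\ref{alg:rankinc} without $S_\rank$ having been enlarged in the meantime, so line-\ref{alg:rankinc} arrivals at $\rank$ are not one-off. The $O(n)$ per-rank bound nevertheless holds: arrivals via line~\ref{alg:symbParityc_line25} each strictly enlarge $S_\rank$ (at most $n$ of them), and every repeated upward crossing from $\decr(\rank)$ to $\rank$ must be preceded by a downward crossing, which strictly enlarges $S_{\decr(\rank)}$ (again at most $n$ of them), giving at most $2n+1$ processings of each rank. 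This only changes a constant, so your asymptotic conclusion and the rest of your charging scheme go through.
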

\begin{proof}
In the algorithm we use one set $S_\rank$ for each $\rank \in M^\infty_\domsize$ and 
 thus $|M^\infty_\domsize|$ many sets.
 We first consider the number of symbolic operations needed to compute the sets $S_\rank$
 in lines~\ref{alg:startupdateS}--\ref{alg:endupdatetop}, and 
 then consider the number of symbolic operations to compute the new value of $\rank$ 
 in lines~\ref{alg:symbParityc_line15}--\ref{alg:symbParityc_line25}.
 
 \begin{enumerate}
 \item[1)] Whenever we consider a set $S_\rank$, we first initialize the set
 with $O(\numprio)$ many symbolic operations 
 (lines~\ref{alg:symbParityc_line6} \& \ref{alg:symbParityc_line11}).
 After that we do a fixed-point computation that needs symbolic operations 
 proportional to the number of added vertices.
 Now fix a set $S_\rank$ and consider all the fixed-point computations for $S_\rank$
 over the whole algorithm.
 As only $O(n)$ many vertices can be added to $S_\rank$, all
 these fixed-points can be computed in $O(n + \#r)$ symbolic operations, 
 where $\#\rank$ is the number of times the set $S_\rank$ is considered 
 (the algorithm needs a constant number of symbolic operations to realize that a fixed-point was already reached).
 Each set $S_\rank$ is considered at least once and only reconsidered when some new vertices are added to the set, i.e., 
 it is considered at most $n$ times. Thus for each set $S_\rank$ we have
 $O(\numprio \cdot n)$ many operations,
 which gives a total number of operations of 
 $O(\numprio \cdot n \cdot |M^\infty_\domsize|)$.
 
 \item[2)] Now consider the computation of the new value of $\rank$ in 
lines~\ref{alg:symbParityc_line15}--\ref{alg:symbParityc_line25}.
 Lines~\ref{alg:symbParityc_line15}--\ref{alg:terminates} take a constant number
 of operations. It remains to count the iterations of 
the repeat-until loop in 
lines~\ref{alg:symbParityc_line20}--\ref{alg:symbParityc_line25},
which we bound by the number of iterations of the while-loop as follows.
 Whenever a set $S_{\rank'}$ is considered as the left side argument in 
 line~\ref{alg:symbParityc_line24},
 then the new value for $\rank$ is less or equal to $\incr(\rank')$ and 
 thus there will be another iteration of the while-loop considering $\incr(\rank')$. 
 As there are only $O(n \cdot |M^\infty_\domsize|)$ many iterations
 of the while-loop
 over the whole algorithm, there are only $O(n \cdot |M^\infty_\domsize|)$
 many iterations of the repeat-until loop in total. In each iteration a constant
 number of operations is performed.
\end{enumerate}
 By (1) and (2) we have that Algorithm~\ref{alg:SymbolicParityDominion} takes 
 $O(\numprio \cdot n \cdot \left|M^\infty_\domsize\right|)$ many symbolic operations.
\end{proof}

\paragraph{Number of set operations in linear space algorithm.}
For the proof of Lemma~\ref{lem:small_dominions_space_efficient} and 
thus of Theorem~\ref{thm:pmalg_space_efficient} it remains to show 
that whenever the algorithm computes or updates a set $S_\rank$ using 
the succinct representation with the sets
$C^i_x$ introduced in Section~\ref{sec:space},
then we can charge a $\cpre{\pl}$ operation for it, and 
each $\cpre{\pl}$ operation is only charged for a constant number of 
set computations and updates. The argument is as follows.
(a)~Whenever the algorithm computes a 
set~$S_\rank$ in line~\ref{alg:symbParityc_line6} or~\ref{alg:symbParityc_line11},
at least one $\cpre{\pl}$ computation with this set is done. 
(b)~Now consider the computation of the new value of $\rank$.
The subset tests in lines~\ref{alg:symbParityc_line16} and~\ref{alg:ifterminates} 
are between a set that was already computed in line~\ref{alg:symbParityc_line6} or~\ref{alg:symbParityc_line11} and 
the set computed in line~\ref{alg:symbParityc_line8} or~\ref{alg:symbParityc_line13} and thus, 
if we store these sets, we do not require additional operations.
Whenever a set $S_{\rank'}$ is considered as the left side argument in 
line~\ref{alg:symbParityc_line24},
then the new value of $\rank$ is less or equal to $\incr(\rank')$ and 
thus there will be another iteration of the while-loop considering $\incr(\rank')$. 
Hence, we can charge the additional operations needed for the comparison 
to the $\cpre{\pl}$ operations of the next iteration that processes the 
rank $\incr(\rank')$.
(c)~Finally, we only need to update the sets $C^i_x$ once per iteration and in 
each iteration we perform at least one $\cpre{\pl}$ computation 
that we can charge for the update.

\section{Traps, Attractors, and Dominions}\label{sec:concepts}
For the algorithms in Appendices~\ref{sec:classic} and~\ref{app:bigstep} we use the 
following well-known notions. An example is given below. 

\paragraph{Traps.}
A set $U \subseteq V$ is a \emph{$\pl$-trap} if for all $\pl$-vertices 
$u$ in $U$ we have $\Out(u) \subseteq U$ and for all $\op$-vertices 
$v$ in $U$ there exists a vertex $w \in \Out(v) \cap U$.
For each $\pl$-trap~$U$ player~$\op$ has a strategy 
from each vertex of~$U$ to keep the play within~$U$, 
namely choosing an edge $(v,w)$ with $w \in \Out(v) \cap U$ 
whenever the current vertex~$v$ is in $U \cap V_\op$~\cite{Zielonka98}.
For a game graph $\game$ and a $\pl$-trap $U$ we denote by $\game[U]$ 
the game graph induced by the set of vertices~$U$. Note that given that in $\game$
each vertex has at least one outgoing edge, the same property holds for $\game[U]$.

\paragraph{Attractors.}
In a game graph $\game$, a $\pl$-\emph{attractor} $A = \at{\pl}{\game}{U}$ of a set 
$U \subseteq V$ is the set of vertices from which player~$\pl$ has a strategy 
to reach $U$ against all strategies of player~$\op$.
We have that $U \subseteq A$. 
A $\pl$-attractor can be constructed inductively as follows: Let $Z_0=U$ and 
for all $\idxa \ge 0$ let $Z_{\idxa+1} = Z_\idxa \cup \cpre{\pl}(Z_\idxa)$.
Then $A = \at{\pl}{\game}{U}= \bigcup_{\idxa\ge 0} Z_\idxa$. 
In other words, $A$ is the least fixed point of 
$f(X)= U \cup \cpre{\pl}(X)$, which provides a symbolic algorithm to 
compute attractors.  
Note that the number of $\cpre{\pl}$ operations for computing the attractor is 
bounded by $\lvert A \setminus U\rvert + 1$ 
and that from each vertex of~$A$ player~$\pl$ has a memoryless strategy that
stays within $A$ to reach~$U$ against any strategy of player~$\op$~\cite{Zielonka98}.
When every vertex has at least one outgoing edge, then the complement of 
a $\pl$-attractor is a $\pl$-trap~\cite{Zielonka98}.

\begin{example}\label{example:concepts}
	In this example we describe a trap, an attractor, and a dominion of the 
	parity game in Figure~\ref{fig:example}, an illustration is provided in 
	Figure~\ref{fig:example_concepts}. For the vertex set 
	$S = \set{d, e, f, g}$ we have the property that the player-$\po$ vertices
	$\set{e,f}$ have edges only to vertices in $S$. Further, the player-$\pe$ vertices
	$\set{d, g}$ both have an edge to some other vertex in $S$. Thus the set $S$ is a
	\emph{trap} for player~$\po$. Further, when player~$\pe$ simply follows the edge 
	to another vertex in $S$ whenever a play reaches $d$ or $g$, then at least 
	every 4th step of the play the vertex~$f$ with the highest priority~$4$ is reached.
	Thus player~$\pe$ has a winning strategy from every vertex of $S$ that stays
	in $S$, hence $S$ is also an $\pe$-\emph{dominion}. Moreover, the player-$\pe$ 
	vertices $c$ and $h$ have an edge to the set~$S$, i.e., they are in the 
	$\pe$-\emph{attractor} of $S$. As another example for a player-$\pe$ attractor, 
	consider the vertex~$f$. The vertex $d$ is in the $\pe$-attractor of $f$ because 
	it belongs to player~$\pe$ and has an edge to $f$. The vertex~$e$ is in the 
	$\pe$-attractor because its only outgoing edge goes to $d$, thus player~$\po$ 
	cannot avoid reaching $f$ from $e$. Continuing this argument, it can easily be 
	seen that the $\pe$-attractor of $f$ is equal to $\set{c, d, e, f, g, h}$. Note 
	that the set $V \setminus \set{c, d, e, f, g, h} =  \set{a, b}$ is a $\pe$-trap.\qee
\end{example}

\begin{figure}[ht]
 \centering
 \begin{tikzpicture}
\matrix[column sep=12mm, row sep=8mm]{
	\node[p1,label=60:$1$, fill=white, dotted] (a) {$a$};
	& \node[p1,label=60:$1$, dotted] (c) {$c$};
	& \node[p2,label=60:$3$] (e) {$e$};
	& \node[p1,label=60:$2$] (g) {$g$};\\
	\node[p2,label=-60:$0$, fill=white, dotted] (b) {$b$};
	& \node[p1,label=-60:$0$] (d) {$d$};
	& \node[p2,label=-60:$4$] (f) {$f$};
	& \node[p1,label=-60:$1$, dotted] (h) {$h$};\\
};
\path[arrow, bend right, dotted] 
	(a) edge (b)
	(b) edge (a);
\path[arrow] 
	(c) edge[dotted] (b)
	(b) edge[dotted] (d)
	(d) edge (f)
	(f) edge (g)
	(g) edge (e)
	(e) edge (d)
	(c) edge (d)
	(h) edge[dotted] (c)
	(h) edge (g)
	;
\end{tikzpicture}
\caption{Illustration of Example~\ref{example:concepts}.
The set of solid vertices is a trap for player~$\po$ and additionally 
a player-$\pe$ dominion. The $\pe$-attractor of this $\pe$-dominion contains the 
set itself and additionally the vertices $c$ and $h$. The $\pe$-attractor 
is itself an $\pe$-dominion and the solid edges indicate a winning strategy
for player~$\pe$. This $\pe$-attractor coincides
with the $\pe$-attractor of the vertex $f$.} 
\label{fig:example_concepts}
\end{figure}

The basic algorithm for parity games uses
the following well-known properties of traps and dominions. 
Furthermore, note that in a game graph where each vertex has at least one 
outgoing edge the complement of a $\pl$-attractor is 
a $\pl$-trap~\cite[Lemma~4]{Zielonka98}
and that a player-$\pl$ dominion is also a $\op$-trap and that the 
$\pl$-attractor of a player-$\pl$ dominion is again a player-$\pl$ dominion.
Let for a specific game graph~$\game$ or a
specific parity game~$\pgame$ denote the winning set of player~$\pl$ by
$W_\pl(\game)$ and $W_\pl(\pgame)$, respectively. 
\begin{lemma}\label{lem:doms}
The following assertions hold for game graphs~$\game$ with
at least one outgoing edge per vertex 
and parity objectives. Let $\pl \in \set{\pe, \po}$ and let $U \subseteq V$.
\begin{enumerate}
\item \cite[Lemma~4.4]
{JurdzinskiPZ08} Let $U$ be a $\pl$-trap in $\game$. Then 
a $\op$-dominion in $\game[U]$ is a $\op$-dominion in $\game$.\label{sublem:winclosed}
\item \cite[Lemma~4.1]{JurdzinskiPZ08} The set $W_\pl(\game)$ is a $\pl$-dominion.\label{sublem:winsetclosed}
\item  \cite[Lemma~4.5]{JurdzinskiPZ08} Let $U$ be a subset of the winning set $W_\pl(\game)$ of player~$\pl$ and 
let $A$ be its $\pl$-attractor $\at{\pl}{\game}{U}$. Then the winning set $W_\pl(\game)$
of the player~$\pl$ is the union of $A$ and the winning set 
$W_\pl(\game[V \setminus A])$,
and the winning set $W_\op(\game)$ of the opponent~$\op$ is equal to 
$W_\op(\game[V \setminus A])$.
\label{sublem:subgraph}
\end{enumerate}
\end{lemma}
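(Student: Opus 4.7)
}
The proof plan is to establish the three parts in order, using memoryless determinacy of parity games (for which a uniform memoryless winning strategy exists on each player's winning set) and the basic observation that the complement of a $\pl$-attractor is a $\pl$-trap.

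For part~(\ref{sublem:winclosed}), suppose $U$ is a $\pl$-trap in $\game$ and let $D \subseteq U$ be a $\op$-dominion in $\game[U]$ with witnessing memoryless strategy~$\strab$. I would argue that $\strab$, viewed as a strategy in $\game$, remains winning on $D$. The key observation is that $\game[U]$ and $\game$ offer player~$\pl$ exactly the same choices from any vertex $v \in U \cap V_\pl$: because $U$ is a $\pl$-trap, $\Out_\game(v) \subseteq U$, hence $\Out_\game(v) = \Out_{\game[U]}(v)$. Therefore every play in $\game$ starting in $D$ and consistent with $\strab$ corresponds to a play in $\game[U]$ consistent with $\strab$; by assumption such plays stay in $D$ and satisfy the parity condition for~$\op$. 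Thus $D$ is a $\op$-dominion in~$\game$.

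For part~(\ref{sublem:winsetclosed}), take a memoryless winning strategy~$\straa$ of player~$\pl$ that is uniform on~$W_\pl(\game)$ (by memoryless determinacy). I claim that any play from $v \in W_\pl(\game)$ consistent with~$\straa$ never leaves $W_\pl(\game)$. Suppose it did; then at the moment it first entered $W_\op(\game)$, the remaining play would be a play in $\game$ starting from a vertex where $\op$ has a winning strategy. Combining $\op$'s winning strategy with~$\straa$ from that point gives a play winning for~$\op$, contradicting that~$\straa$ is winning from~$v$. Thus $\straa$ witnesses that $W_\pl(\game)$ is a $\pl$-dominion.

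For part~(\ref{sublem:subgraph}), the inclusion $A \cup W_\pl(\game[V \setminus A]) \subseteq W_\pl(\game)$ is proved by composing strategies. From $A$, player~$\pl$ uses the attractor strategy to reach~$U \subseteq W_\pl(\game)$ in finitely many steps, then switches to a winning strategy on~$W_\pl(\game)$; the finite prefix does not affect the parity condition. From $W_\pl(\game[V \setminus A])$, note that $V \setminus A$ is a $\pl$-trap in $\game$ (standard property of the complement of a $\pl$-attractor in a sinkless game graph); applying part~(\ref{sublem:winsetclosed}) inside $\game[V \setminus A]$ and then part~(\ref{sublem:winclosed}) with roles swapped would show a $\pl$-dominion in $\game[V \setminus A]$ is one in $\game$. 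For the reverse inclusion, I would show $W_\op(\game[V \setminus A]) \subseteq W_\op(\game)$: since $V \setminus A$ is a $\pl$-trap, part~(\ref{sublem:winclosed}) applied to the $\op$-dominion $W_\op(\game[V \setminus A])$ of $\game[V \setminus A]$ (which is a dominion by part~(\ref{sublem:winsetclosed})) yields a $\op$-dominion in $\game$, hence is contained in $W_\op(\game)$. Combined with $A \subseteq W_\pl(\game)$ and the partition $V \setminus A = W_\pl(\game[V \setminus A]) \cup W_\op(\game[V \setminus A])$, this yields both claimed equalities.

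The main conceptual obstacle is the trap/closure argument in part~(\ref{sublem:winsetclosed}): asserting that a memoryless winning strategy cannot leak out of the winning set relies on memoryless determinacy, which is itself a nontrivial theorem for parity games but may be invoked as a black box. Everything else is a careful bookkeeping of strategy composition using the trap property to certify that restricted strategies remain feasible in the larger game.
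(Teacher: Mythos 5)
Parts~(1) and~(2) of your proposal are correct and essentially identical to the paper's arguments: for~(1) you use that a $\pl$-trap leaves player~$\pl$ with exactly the same moves in $\game$ as in $\game[U]$, and for~(2) you use determinacy plus prefix-independence to show that a winning strategy cannot afford to enter $W_\op(\game)$.

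In part~(3), however, there is a genuine gap in the inclusion $W_\pl(\game[V \setminus A]) \subseteq W_\pl(\game)$. You propose to obtain it by "applying part~(1) with roles swapped", but the swapped version of part~(1) requires $V \setminus A$ to be a \emph{$\op$-trap}, whereas the complement of a $\pl$-attractor is a \emph{$\pl$-trap} — precisely the wrong player. The hypothesis is not merely unverified: the intermediate claim you want (that a $\pl$-dominion of $\game[V\setminus A]$ is a $\pl$-dominion of $\game$) is false in general, because player~$\op$ may own vertices in $V \setminus A$ with edges leading into $A$, so she can force the play out of any such set and out of the scope of player~$\pl$'s subgame strategy. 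This matters for your endgame as well: the two inclusions you do establish ($A \subseteq W_\pl(\game)$ and $W_\op(\game[V\setminus A]) \subseteq W_\op(\game)$) together with the two partitions of $V$ and of $V \setminus A$ do \emph{not} by themselves pin down the winning sets — a vertex of $W_\pl(\game[V\setminus A])$ could a priori still lie in $W_\op(\game)$. The repair is the strategy composition the paper uses (and which you already use for the vertices of $A$): from $W_\pl(\game[V\setminus A])$, player~$\pl$ follows her subgame winning strategy as long as the play stays in $V \setminus A$; since $V \setminus A$ is a $\pl$-trap, the play can only enter $A$ through a move of player~$\op$, at which point player~$\pl$ switches to the attractor strategy towards $U$ and then to her winning strategy from $U \subseteq W_\pl(\game)$. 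Either the play stays in $V\setminus A$ forever and is won by the subgame strategy, or it eventually reaches $A$ and is won by the composed strategy; in both cases the vertex is in $W_\pl(\game)$, which closes the argument.
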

\begin{proof}
	\begin{enumerate}
		\item Player~$\pl$ cannot leave the $\pl$-trap $U$, thus 
		player~$\op$ can use the same winning strategy for the vertices of 
		the $\op$-dominion in $\game$ as in $\game[U]$.
		\item Recall that the winning sets of the two players partition the 
		vertices~\cite{Martin75}. Thus we have that as soon as the play leaves 
		the winning set of player~$\pl$,
		the opponent~$\op$ can play his winning strategy starting from the vertex 
		in his winning set that was reached. Hence the winning strategy of player~$\pl$
		for the vertices in $W_\pl(\game)$ has to ensure that only 
		vertices of $W_\pl(\game)$ are visited and thus the set $W_\pl(\game)$ is a 
		$\pl$-dominion.
		\item The set $V \setminus A$ is a $\pl$-trap~\cite[Lemma~4]{Zielonka98}. By \ref{sublem:winsetclosed} $W_\op(\game[V \setminus A])$
		is a $\op$-dominion in $\game[V \setminus A]$ and by \ref{sublem:winclosed}
		also in $\game$. Thus we have $W_\op(\game[V \setminus A]) \subseteq W_\op(\game)$.
		Since $W_\pl(\game)$ and $W_\op(\game)$ form a partition of $V$, we complete 
		the proof by showing $W_\pl(\game[V \setminus A]) \subseteq W_\pl(\game)$.
		For this we construct a winning strategy for player~$\pl$ from the vertices 
		of $W_\pl(\game[V \setminus A])$ in $\game$. As long as the play stays 
		within $V \setminus A$, player~$\pl$ follows her winning strategy 
		in $\game[V \setminus A]$. If the play reaches $A \setminus U$, she follows
		her attractor strategy to $U$. When the play reaches $U$, she follows her 
		winning strategy for $U$ in $\game$ (that exists by assumption). We have that
		player~$\pl$ wins by the winning strategy in $\game[V \setminus A]$ if 
		the play forever stays within $V\setminus A$ and by the winning strategy
		for $U$ in $\game$ if the play ever reaches a vertex of~$A$.
		\qedhere
	\end{enumerate}
\end{proof}

\section{Existing Algorithms for Parity Games}\label{sec:algoforparity}
In this section we present,
in addition to the progress measure algorithm presented in Section~\ref{sec:pmdef},
the key existing algorithms for parity games along with the main ideas for correctness. 

\subsection{Classical Algorithm}\label{sec:classic}
In the following we describe a classical algorithm for parity games 
by~\cite{Zielonka98,McNaughton93} and provide intuition for its correctness. 
Our symbolic big-step algorithm presented in Appendix~\ref{app:bigstep} uses the 
same overall structure as the classical algorithm but determines
dominions using our symbolic progress measure algorithm presented in Section~\ref{sec:pmalg}.

\begin{algorithm}
	\small
	\SetAlgoRefName{ClassicParity}
	\caption{Classical Algorithm}
	\label{alg:classic}
	\SetKwInOut{Input}{Input}
	\SetKwInOut{Output}{Output}
	\SetKw{break}{break}
	\BlankLine
	\Input{%
	  \emph{parity game} $\pgame = (\game, \prio)$, with\\ 
	  \emph{game graph} $\game = ((V, E),(\ve, \vo))$ and \\
	  \emph{priority function} $\prio: V \rightarrow [\numprio]$.
	}
	\Output
	{
	 winning sets $(\we, \wo)$ of player~$\pe$ and player~$\po$
	}
	\BlankLine
	\lIf{$\numprio = 1$}{\Return $(V, \emptyset)$}
	let $\pl$ be player~$\pe$ if $\numprio$ is odd and player~$\po$ otherwise\;
	$W_{\op} \leftarrow \emptyset$\;
	\Repeat{$W'_{\op} = \emptyset$}{
	  $\game' \gets \game \setminus \at{\pl}{\game}{P_{\numprio-1}}$\; 
	  $(\we', \wo') \leftarrow $\ref{alg:classic}$(\game', \prio)$\;
	  $A \leftarrow \at{\op}{\game}{W'_{\op}}$\;
	  $W_{\op} \leftarrow W_{\op} \cup A$\;
	  $\game \leftarrow \game \setminus A$\;
	}
	$W_{\pl} \gets V \setminus W_{\op}$\;
	\Return{$(\we, \wo)$}
\end{algorithm}

\paragraph{Informal description of classical algorithm.}
Let $\pl$ be $\pe$ if $\numprio$ is odd and $\po$ if $\numprio$ is even. 
Let $\game$ be the game graph as maintained by the algorithm.
The classical algorithm repeatedly identifies $\op$-dominions by recursive
calls for a parity game~$\pgame' = (\game', \prio)$ with one priority less that 
is obtained by temporarily removing the $\pl$-attractor of 
$P_{\numprio-1}$, i.e., the vertices with highest 
priority, from the game. In other words, the steps are as follows:
\begin{enumerate}
\item Obtain the game $\pgame'$ by removing the $\pl$-attractor of $P_{\numprio-1}$. 
\item If the winning set~$W'_\op$ 
of player~$\op$ in the parity game $\pgame'$ is non-empty, 
then its $\op$-attractor~$A$ is added to the 
winning set~$W_\op$ of $\op$ and removed from the game graph~$\game$.  
The algorithm recurses on the remaining game graph.
\item Otherwise all vertices in the parity game $\pgame'$ are winning for 
player~$\pl$. In this case the algorithm terminates and the remaining
vertices~$V \setminus W_\op$ are returned as the winning set of player~$\pl$. 
\end{enumerate}
The pseudocode of the classical algorithm is given in Algorithm~\ref{alg:classic}.

\paragraph{Key intuition for correctness.}
The correctness argument is inductive over the number of priorities 
and has the following two key aspects.
\begin{enumerate}
\item The winning set of player~$\op$ in $\pgame'$ is a $\op$-dominion in 
$(\game, \prio)$ because the vertices in $\pgame'$ form a $\pl$-trap.
Thus the attractor of the winning set of player~$\op$ in $\pgame'$ 
can be removed as part of the winning set of player~$\op$ and it suffices to 
solve the remaining game by Lemma~\ref{lem:doms}(\ref{sublem:subgraph}). 

\item If the algorithm terminates in some iteration where all vertices
in $\pgame'$ are winning for~$\pl$, then a winning strategy for 
player~$\pl$ on the remaining game can be constructed by combining her 
winning strategy in the subgame $\pgame'$ (by the inductive hypothesis over
the number of priorities as $\pgame'$ has a strictly smaller number of priorities) 
with her attractor strategy to the vertices with highest priority, 
and the fact that the set of remaining vertices $V \setminus W_\op$ is a $\op$-trap. 
\end{enumerate}
The classical algorithm can be interpreted both as an explicit 
algorithm as well as a set-based symbolic algorithm, since it only uses 
attractor computations and set operations.
The following theorem summarizes the results for the classical algorithm 
for parity games.

\begin{theorem}\label{thm:classical}
\cite{Zielonka98,McNaughton93}
Algorithm~\ref{alg:classic} computes the winning sets of parity games; 
as explicit algorithm it requires  $O(n^{\numprio-1} \cdot m)$ time and quasi-linear space; 
and as set-based symbolic algorithm it requires $O(n^\numprio)$ symbolic one step 
and set operations, and $O(\numprio)$ many sets. 
\end{theorem}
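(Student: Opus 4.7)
The plan is to prove correctness by induction on the number of priorities~$\numprio$ and then bound the resources by unrolling a simple recurrence. For $\numprio = 1$, every play has highest priority~$0$ visited infinitely often, which is even, so $(V, \emptyset)$ is correct. For the inductive step I would assume correctness for $\numprio - 1$ priorities and let $\pl$ be the player whose parity matches $\numprio - 1$. In each iteration of the repeat loop the parity game on $\game' = \game \setminus \at{\pl}{\game}{P_{\numprio-1}}$ has at most $\numprio - 1$ priorities (since all priority-$(\numprio-1)$ vertices lie in the removed attractor), so by the induction hypothesis the recursive call returns the correct $(W'_\pe, W'_\op)$ for $\pgame'$. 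The vertex set of $\game'$ is a $\pl$-trap in $\game$ (complement of a $\pl$-attractor), so Lemma~\ref{lem:doms}(\ref{sublem:winclosed}) implies $W'_\op$ is a $\op$-dominion in $\game$; its $\op$-attractor $A$ is therefore a $\op$-dominion as well, and Lemma~\ref{lem:doms}(\ref{sublem:subgraph}) (with roles of $\pl,\op$ swapped) justifies moving $A$ into $W_\op$ and recursing on $\game \setminus A$.

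When the loop terminates with $W'_\op = \emptyset$, the induction hypothesis gives that every vertex of the current $\game'$ is winning for $\pl$ in $\game'$. I would exhibit a winning $\pl$-strategy on the whole current $\game$ by following the inductive winning strategy on $V(\game')$, the memoryless attractor strategy on $\at{\pl}{\game}{P_{\numprio-1}} \setminus P_{\numprio-1}$, and any outgoing edge at vertices of $P_{\numprio-1}$. Any resulting play either stays in $\game'$ from some point on (winning for $\pl$ by the inductive hypothesis) or returns to the attractor infinitely often; in the latter case the memoryless attractor strategy reaches $P_{\numprio-1}$ within $n$ steps each time, so priority $\numprio-1$ is visited infinitely often and is the highest such priority, which is winning for $\pl$ by choice of $\pl$. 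Combined with the already-established fact that the accumulated $W_\op$ is a $\op$-dominion, this yields the correct partition.

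For the complexity, termination follows because each non-terminating iteration of the repeat loop removes a non-empty $\op$-attractor from $\game$, so at most $n$ iterations occur. Letting $T(n, \numprio)$ denote the explicit running time, the attractor computations cost $O(m)$ per iteration and the recursive call costs $T(n, \numprio - 1)$, giving $T(n, \numprio) \le n \cdot (T(n, \numprio - 1) + O(m))$ and hence $T(n, \numprio) = O(n^{\numprio-1} \cdot m)$; space is $O(n)$ per recursion level of depth $\numprio$, i.e.\ quasi-linear. For the set-based symbolic bound, attractor computations via $\cpre{\pl}$ take $O(n)$ symbolic operations each, the same recurrence gives $O(n^\numprio)$ symbolic operations, and only a constant number of sets (for $W_\op$, $A$, $W'_\op$, the current vertex set, attractor frontiers) per recursion level suffices, yielding $O(\numprio)$ sets.

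The main obstacle is the termination case of the inductive argument: piecing together the inductive winning strategy on $\game'$ with the attractor strategy so that every play either remains eventually in $\game'$ or visits priority $\numprio - 1$ infinitely often. The subtle point is that memorylessness of the attractor strategy is essential here, since it guarantees a bounded number of steps to $P_{\numprio-1}$ regardless of how often the play oscillates between $\game'$ and the attractor. The complexity bound is a routine unrolling of the recurrence once termination in at most $n$ outer iterations is established.
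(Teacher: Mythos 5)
Your proposal is correct and follows essentially the same route the paper takes: the paper only cites \cite{Zielonka98,McNaughton93} and sketches the two key points (the recursive winning set is an $\op$-dominion via Lemma~\ref{lem:doms}, and the terminating case is handled by stitching the inductive strategy together with the attractor strategy to $P_{\numprio-1}$), which is exactly the argument you flesh out, and your resource analysis via the recurrence $T(n,\numprio)\le n\cdot(T(n,\numprio-1)+O(m))$ is the standard one. The only nitpick is that ``any outgoing edge at vertices of $P_{\numprio-1}$'' should be read as any outgoing edge \emph{in the current \upbr{restricted} game graph}, so that the play stays inside the $\op$-trap $W_\pl$; with that reading your strategy construction matches the paper's own detailed version of this argument in the proof of Lemma~\ref{lem:bigstep_correctness}.
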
  

\subsection{Sub-exponential Algorithm}\label{sec:subex}
The sub-exponential algorithm of~\cite{JurdzinskiPZ08} is based on the following
modification of the classical algorithm. 
Before the recursive call, which finds a non-empty dominion, the algorithm 
enumeratively and explicitly searches for all dominions of size at most 
$\sqrt{n}$; if it succeeds to find a dominion, then its attractor is removed
from the game; otherwise, the subsequent recursive call is 
guaranteed to find a dominion of size $> \sqrt{n}$.  
A clever analysis of the recurrence relation shows that the running time 
of the algorithm is at most $n^{O(\sqrt{n})}$, yielding the first deterministic
sub-exponential time algorithm for parity games. 
However, the algorithm is inherently explicit and enumerative (it enumerates with a brute-force
search all dominions of size at most $\sqrt{n}$).
We refer the above algorithm as \textsc{SubExp} algorithm.

\begin{theorem}\label{thm:subexp}
\cite{JurdzinskiPZ08}
Algorithm \textsc{SubExp} computes the winning sets of parity games, and 
it is an explicit algorithm that requires  $n^{O(\sqrt{n})}$ time and quasi-linear space. 
\end{theorem}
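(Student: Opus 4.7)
The plan is to establish correctness by reduction to Algorithm~\ref{alg:classic} and to obtain the $n^{O(\sqrt{n})}$ time bound via the "small-or-large dominion" dichotomy; the quasi-linear space bound will follow from the fact that the brute-force search can be streamed.

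For correctness, the only change compared to Algorithm~\ref{alg:classic} is that, before each recursive call, the algorithm brute-force enumerates every vertex subset $U$ with $|U|\le\sqrt{n}$ and tests whether $U$ is a dominion for one of the two players in the current game $\game$. The test is polynomial: one verifies that $U$ is a trap for the opposing player (by inspecting outgoing edges) and that the claimed player wins the parity game restricted to $\game[U]$ (polynomial-time verifiable, e.g., by constructing a progress-measure certificate on the at most $\sqrt{n}$ vertices of $\game[U]$). If such a $U$ is found, the algorithm adds its attractor in $\game$ to the corresponding winning set and removes it from $\game$; Lemma~\ref{lem:doms}(\ref{sublem:subgraph}) guarantees that this preserves the winning sets of the remaining subgame. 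If no dominion of size $\le\sqrt{n}$ exists, the algorithm executes one outer step of the classical algorithm, and the inductive correctness proof of Theorem~\ref{thm:classical} applies verbatim.

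For the time bound, let $T(n)$ denote the worst-case running time, uniformly over the priority count. Each outer iteration falls into one of two cases. In Case~A the enumeration succeeds; this costs $\sum_{k\le\sqrt{n}}\binom{n}{k}\cdot\mathrm{poly}(n)=n^{O(\sqrt{n})}$ and removes at least one vertex from $\game$. In Case~B no dominion of size $\le\sqrt{n}$ exists; the ensuing recursive call (on $\numprio-1$ priorities) must therefore return a dominion of size $>\sqrt{n}$, whose attractor-removal deletes at least $\sqrt{n}+1$ vertices. Along any root-to-leaf path of the recursion tree, Case~B can occur at most $n/\sqrt{n}=\sqrt{n}$ times, while each Case-A invocation contributes its $n^{O(\sqrt{n})}$ cost additively (it is immediately followed by a strictly smaller instance). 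Unrolling this recurrence in the Jurdzinski--Paterson--Zwick style yields $T(n)\le n^{O(\sqrt{n})}$.

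For space, the enumeration requires only $O(\sqrt{n}\log n)$ bits of counter state to iterate through $k$-subsets in a streamed fashion, and each dominion check reuses quasi-linear working storage. The recursion has depth $O(\numprio)$, and each stack frame stores only the current induced subgame together with attractor workspace, giving quasi-linear total space. The main obstacle is the recurrence analysis: one must show that the $n^{O(\sqrt{n})}$ enumeration cost does not compound multiplicatively as Case-A and Case-B steps interleave along the recursion tree. The JPZ trick is to set the threshold precisely at $\sqrt{n}$, so that the bound on the number of Case-B reductions and the exponent of the per-iteration enumeration cost exactly balance, producing the stated sub-exponential bound uniformly in $\numprio$.
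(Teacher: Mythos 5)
This theorem is stated in the paper purely as a citation of Jurdzi\'nski--Paterson--Zwick; the paper's own ``proof'' is just the informal description in Appendix~\ref{sec:subex} (brute-force search for dominions of size at most $\sqrt{n}$, attractor removal, and the recurrence $T(n)\le n^{O(\sqrt{n})}+T(n-1)+T(n-\sqrt{n})$), and your proposal reconstructs exactly that argument, so you are on the same route. Two details are stated too strongly but do not affect the bound: the dominion test on a candidate set $U$ with $|U|\le\sqrt{n}$ is not polynomial in $n$ if you construct a progress measure on $\game[U]$ (its domain can have size $(\sqrt{n})^{\Theta(\sqrt{n})}$), though the total cost remains $n^{O(\sqrt{n})}$; and storing ``the current induced subgame'' in every stack frame over a recursion of depth $\Theta(n)$ would exceed quasi-linear space, so one must instead mark and unmark removed vertices in a single shared representation.
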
  

\subsection{Big-step Algorithm}\label{sec:bigstepexpl}
The progress measure algorithm and the sub-exponential algorithm 
were combined in~\cite{Schewe17} to obtain the big-step algorithm. 
The main idea is to use the progress measure to
identify (small) dominions of size $\le \domsize+1$, for some given 
integer $\domsize \in [1, n-1]$. 
Given that an $\pe$-dominion is of size $\le \domsize+1$, player~$\pe$ must have a 
strategy from each vertex of the $\pe$-dominion to reach a vertex with an even priority by
visiting at most $\domsize$ vertices with odd priorities. 
Thus, one considers a product domain $M_\domsize \subseteq M_\game$ 
containing only the vectors of $M_\game$ whose elements sum up to at most $\domsize$.
The co-domain $M^\infty_\domsize$ of the ranking function~$\rho$ is then 
given by $M^\infty_\domsize = M_\domsize \cup \{\top\}$ and
the function $\incr(\rank)$ and $\decr(\rank)$ 
are then only defined on the restricted domain $M^\infty_\domsize$
(the $\min$ in the definitions is over $M^\infty_\domsize$ instead of $M^\infty_\game$).
Again the corresponding progress measure for a parity game 
is defined as the least simultaneous fixed point of all $\lift(.,v)$-operators. 
The identification of $\pe$-dominions from the progress measure is achieved by 
selecting those vertices whose rank is a vector, i.e., smaller than~$\top$.

\begin{lemma}\cite{Schewe17}\label{lem:schewe:progress_dominions}
For a given parity game with $n$ vertices and the progress measure $\rho$ with
co-domain $M^\infty_\domsize$ for some integer $\domsize \in [1, n-1]$,
the set of vertices $\{v \in V \mid \rho(v)<\top\}$ is an $\pe$-dominion
that contains all $\pe$-dominions with at most $\domsize+1$ vertices.
\end{lemma}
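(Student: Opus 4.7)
My proof plan splits the statement into two assertions: (a)~the set $D^* := \{v \in V \mid \rho(v) < \top\}$ is an $\pe$-dominion, and (b)~every $\pe$-dominion $D$ with $|D| \le \domsize + 1$ is contained in $D^*$. Both parts rest on the fixed-point identity $\rho(v) \ge \incr_{\prio(v)}(\best(\rho, v))$ and on the observation that the proof of Lemma~\ref{lem:jur:pmcorrect} for the full co-domain $M^\infty_\game$ relies on the co-domain only through two properties: $\incr_{\prio(v)}$ fixes $\top$, and bounded counters cannot be incremented infinitely often without reaching $\top$. Both properties continue to hold in the restricted co-domain $M^\infty_\domsize$, which opens the door to adapting the argument.

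For (a) I would replay the argument of Lemma~\ref{lem:jur:pmcorrect} verbatim. The fixed-point condition, together with $\incr_{\prio(v)}(\top)=\top$, gives $\best(\rho, v) < \top$ for every $v \in D^*$, so each $\pe$-vertex in $D^*$ has at least one successor in $D^*$ and each $\po$-vertex in $D^*$ has all its successors in $D^*$. Let $\pe$ play the ``choose a successor attaining $\best(\rho, v)$'' strategy. Along any resulting play the rank sequence is non-increasing, and whenever an odd priority $p$ is visited while no higher priority occurs between consecutive visits, the $p$-counter of the rank strictly increases. Such strict increases are bounded in $M_\domsize$, so they cannot occur infinitely often, and the highest priority visited infinitely often must be even. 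Hence $\pe$ wins from every vertex of $D^*$ while staying inside $D^*$, making $D^*$ an $\pe$-dominion.

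For (b) let $D$ be an $\pe$-dominion with $|D| \le \domsize + 1$. My goal is to exhibit a ranking function $\rho_D : V \to M^\infty_\domsize$ that is a pre-fixed point of every $\lift(\cdot, v)$-operator over $M^\infty_\domsize$ and satisfies $\rho_D(v) < \top$ on $D$; by minimality of $\rho$ this yields $\rho(v) \le \rho_D(v) < \top$ for $v \in D$. I would set $\rho_D(v) = \top$ for $v \notin D$ (trivially a pre-fixed point there) and, inside $D$, use the progress measure of the subgame $\game[D]$, which is well defined because an $\pe$-dominion is always a $\po$-trap. Boundary consistency is immediate: for $\pe$-vertices of $D$ the $\best$-witnessing successor can be chosen inside $D$ by the dominion strategy, and for $\po$-vertices of $D$ every successor already lies in $D$, so the lift operators restricted to $D$ never look at the value $\top$ assigned outside $D$.

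The main obstacle I anticipate is showing that the ranks produced inside $D$ actually lie in $M_\domsize$, i.e., that the sum of their odd coordinates is at most $\domsize$. To establish this I would fix a memoryless $\pe$-winning strategy $\sigma$ on $D$ and consider the resulting one-player arena for $\po$ on at most $\domsize + 1$ vertices. Since $\pe$ wins every play, every cycle in that arena has even highest priority, so between two consecutive visits of a play to a vertex of some highest even priority $\prio^*$, only simple paths of length at most $|D| - 1 \le \domsize$ are traversed by $\po$. This bounds by $\domsize$ the total number of odd-priority vertices $\po$ can force between ``resets'' of the counters by $\prio^*$, which is exactly the argument needed to conclude that every rank in the subgame progress measure has odd coordinates summing to at most $\domsize$. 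With this bound $\rho_D$ is a valid pre-fixed point over $M^\infty_\domsize$, completing the proof.
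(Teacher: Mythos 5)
The paper does not prove this lemma at all---it is imported from \cite{Schewe17}---so your argument can only be judged on its own terms. On those terms the architecture is sound and is the standard one: part~(a) is the usual Jurdzi\'nski argument, which indeed uses nothing about the co-domain beyond finiteness and $\incr_{\prio(v)}(\top)=\top$; part~(b) correctly reduces to exhibiting a pre-fixed point $\rho_D$ that is $<\top$ on $D$ and invoking leastness, with the right boundary analysis at $\pe$- and $\po$-vertices of $D$.

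The one step that fails as written is the one you yourself flag as the main obstacle. Between two consecutive visits to the highest even priority $\prio^*$ a play need not traverse a simple path: it may run many times around cycles whose highest priority is a \emph{smaller} even number, each such cycle resetting only the lower counters, so the segment can be much longer than $\lvert D\rvert-1$; and in any case ``the number of odd vertices $\po$ can force between resets'' is not literally the coordinate sum of a progress-measure rank. Fortunately no path counting is needed. Take $\rho_D$ on $D$ to be the progress measure of the subgame $\game[D]$ over its \emph{own} co-domain $M^\infty_{\game[D]}$; by Lemma~\ref{lem:jur:pmcorrect} applied to $\game[D]$ all these ranks are $<\top$, and by definition every vector of $M_{\game[D]}$ has coordinates summing to at most the number of odd-priority vertices of $D$. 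Since an $\pe$-dominion must contain at least one even-priority vertex (otherwise every play confined to $D$ would have odd highest priority infinitely often), that number is at most $\lvert D\rvert-1\le\domsize$, hence $M_{\game[D]}\subseteq M_\domsize$. This inclusion also closes a gap you do not mention: $\incr_\ell$ depends on the co-domain, so a fixed point over $M^\infty_{\game[D]}$ is not automatically a pre-fixed point over $M^\infty_\domsize$; but for odd $\ell$ the increment is a minimum over the co-domain, and minimizing over the larger set $M^\infty_\domsize$ can only yield a smaller value. With these two repairs your $\rho_D$ is a pre-fixed point over $M^\infty_\domsize$ and leastness finishes the proof.
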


\begin{remark}\label{remark:odddom}
Note that the progress measure algorithm
determines $\pe$-dominions. We can compute $\po$-dominions
(including a winning strategy within the dominion) by adding one to each 
priority and changing the roles of the two players. If $\numprio$ has been
even before this modification, this does not increase the bound on the number 
of symbolic operations for the dominion search because the number of odd priorities,
and therefore the possible number of non-empty indices of a rank vector,
does not increase. 
\end{remark}

Combining the sub-exponential algorithm with the progress measure algorithm 
to identify small dominions gives the \textsc{BigStep} algorithm for parity games.

\begin{theorem}\cite{Schewe07}\label{thm:Schewe07}
Let $\gamma(\numprio) = \numprio/3 + 1/2 - 
4/(\numprio^2 - 1)$ for odd $\numprio$ and $\gamma(\numprio) = 
\numprio/3 + 1/2 - 1/(3 \numprio) - 4/\numprio^2$ for even $\numprio$.
Algorithm \textsc{BigStep} computes the winning sets of parity games
and it is an explicit algorithm that requires 
$O\big(m \cdot \big(\frac{\kappa \cdot n}{\numprio}\big)^{\gamma(\numprio)} \big)$  
time for some constant $\kappa$ and $O(n \cdot \numprio)$ space. 
\end{theorem}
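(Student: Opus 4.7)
I propose to describe BigStep as a modification of the classical recursive algorithm (Algorithm \ref{alg:classic}) in which each iteration of the outer loop is preceded by a small-dominion search based on the explicit progress-measure algorithm \textsc{SmallProgressMeasure} with the restricted codomain $M^\infty_\domsize$ for a well-chosen integer parameter $\domsize \in [1, n-1]$. By Lemma \ref{lem:schewe:progress_dominions} this search returns a set that is either empty or contains every $\pe$-dominion of size at most $\domsize+1$ (and symmetrically for $\po$-dominions, by the priority shift described in Remark \ref{remark:odddom}). If a non-empty dominion $D$ is found, we remove $\at{\op}{\game}{D}$ from the game, record it as part of $W_\op$, and iterate; if the search returns empty, then every $\op$-dominion in the current subgame has strictly more than $\domsize+1$ vertices, and we execute exactly one step of the classical recursion, removing $\at{\pl}{\game}{P_{\numprio-1}}$ and recursing with $\numprio-1$ priorities on the remaining subgame.

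Correctness follows by induction on $\numprio$ in exactly the manner sketched for Algorithm \ref{alg:classic}: the base case $\numprio = 1$ is trivial, every set we remove is the attractor of a dominion in the current subgame, and every subgame is a $\pl$-trap in the original graph, so parts (\ref{sublem:winclosed})--(\ref{sublem:subgraph}) of Lemma \ref{lem:doms} apply at every level to justify the decomposition of the winning sets. Since the progress-measure search either exhaustively identifies all small dominions or certifies that none exist, combining it with one classical recursive step per outer iteration does not change which vertices end up in $W_\pl$ versus $W_\op$.

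For the complexity I would set up a recurrence $T(n, \numprio)$ capturing (i)~one dominion-search invocation per outer iteration, costing $O(m \cdot |M^\infty_\domsize|)$ by the bound on \textsc{SmallProgressMeasure}, (ii)~at most one classical recursive call per iteration at cost $T(n, \numprio - 1)$, and (iii)~the removal of at least $\domsize+1$ vertices per iteration in which the search fails. Substituting the binomial bound $|M^\infty_\domsize| \le \binom{\domsize + \lfloor \numprio/2 \rfloor}{\domsize} + 1$ (stated in the excerpt just before Lemma \ref{lem:small_dominions_space_efficient}) yields, after unfolding across the $\numprio$ priority levels, a dominant contribution proportional to $m$ times a product of binomial coefficients in which the parameters $\domsize = \domsize(n, \numprio)$ at each level are free.

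The main obstacle is the joint optimization of this sequence of parameters. A Stirling-type estimate of $\binom{\domsize + d}{\domsize}$ with $d = \lfloor \numprio/2 \rfloor$ reveals that the optimal $\domsize$ grows roughly like $n/\numprio$, balancing the per-level cost of the dominion search against the number of iterations needed to exhaust the vertex set. Carrying this balancing through each level of the recursion gives an exponent of $\numprio/3 + 1/2$ plus a lower-order correction; the parity split in the statement, namely $-4/(\numprio^2-1)$ for odd $\numprio$ versus $-1/(3\numprio) - 4/\numprio^2$ for even $\numprio$, reflects whether the number of odd priorities used in the progress measure is $(\numprio+1)/2$ or $\numprio/2$, which changes the binomial's effective dimension at the deepest non-trivial level. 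The space bound $O(n \cdot \numprio)$ is inherited directly from \textsc{SmallProgressMeasure}, since the classical-algorithm skeleton only requires an additional $O(n)$ working sets per recursion level, and these can be maintained in place.
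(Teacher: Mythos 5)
Your overall architecture---the classical recursion of Algorithm~\ref{alg:classic} interleaved with a progress-measure dominion search over the restricted codomain $M^\infty_\domsize$, with correctness resting on Lemma~\ref{lem:schewe:progress_dominions}, Remark~\ref{remark:odddom}, and Lemma~\ref{lem:doms}---is the intended design, and your correctness argument is sound. The complexity analysis, however, has two concrete gaps. First, the parameter choice is wrong: you assert that the optimal $\domsize$ ``grows roughly like $n/\numprio$'', but with $\domsize = n/\numprio$ a single dominion search already costs $m\binom{\domsize+\lfloor\numprio/2\rfloor}{\lfloor\numprio/2\rfloor} = m\cdot\Theta\bigl((\kappa' n/\numprio^2)^{\lfloor\numprio/2\rfloor}\bigr)$, i.e., you recover an exponent of $\numprio/2$ in $n$---essentially the plain progress-measure bound---rather than $\numprio/3$. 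The correct choice, used in the paper's symbolic analogue (Lemmata~\ref{lem:bigstep_runtimesimple} and~\ref{lem:bigstep_runtime}), is $\domsize \approx n^{\beta(\numprio-1)}$ with $\beta(\numprio)=\gamma(\numprio)/(\lfloor\numprio/2\rfloor+1)\in[1/2,7/10]$; the identities \eqref{eqn:gamma} and \eqref{eqn:beta} are precisely what make the induction $\gamma(\numprio)=\gamma(\numprio-1)+1-\beta(\numprio-1)$ close, and the exponent $\numprio/3+1/2-\cdots$ (including its parity split) is the solution of that recurrence, not a consequence of $\domsize\approx n/\numprio$.

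Second, your control flow breaks the iteration count. You run the search repeatedly, falling back to a classical recursive step only when it returns empty, and you account only for ``the removal of at least $\domsize+1$ vertices per iteration in which the search fails.'' But an iteration in which the search succeeds may remove as few as one vertex, and after removing the $\op$-attractor of the returned set, new small $\op$-dominions can appear in the remaining subgame (the complement of an $\op$-attractor is only an $\op$-trap, so an $\op$-dominion of the subgame need not be one of the original game and need not have been caught by the previous search). Hence up to $n$ successful searches are possible, each costing $\Theta\bigl(m\cdot n^{\beta(\numprio-1)\lfloor\numprio/2\rfloor}\bigr)$, which overshoots the target $O\bigl(m\cdot n^{\gamma(\numprio-1)+1-\beta(\numprio-1)}\bigr)$ by a factor of about $n^{\beta(\numprio-1)}$. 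The missing idea---implemented in Algorithm~\ref{alg:symbigstep} and Lemma~\ref{lem:iterations}---is to pair exactly one search with one recursive call in \emph{every} iteration: if the union of the dominion found by the search and the one found by the recursion had size at most $\domsize+1$, the search alone would already have identified it and the algorithm would terminate, so every non-terminal iteration removes more than $\domsize+1$ vertices and both the number of searches and the number of recursive calls are bounded by $O(n/\domsize)=O(n^{1-\beta(\numprio-1)})$.
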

\noindent Recently, the running time bound for the 
\textsc{BigStep} algorithm was improved further to $O\big(m \big(\frac{6 e^{5/3} n}{\numprio^2}\big)^{\gamma(\numprio)}\big)$ \cite{Schewe17}.

\section{Symbolic Big-Step Algorithm}\label{app:bigstep}

In Section~\ref{sec:pmalg} we presented a set-based symbolic algorithm to compute the 
progress measure that is also capable of determining dominions of bounded size.
Since the classical algorithm can be implemented with set-based symbolic operations,
we now show how to combine our algorithm and the classical set-based algorithm
to obtain a set-based symbolic \emph{Big-Step} algorithm for parity games (see 
Appendix~\ref{sec:bigstepexpl} for the explicit Big-Step algorithm~\cite{Schewe17}).

\begin{algorithm}
	\small
	\SetAlgoRefName{SymbolicBigStepParity}
	\caption{Set-Based Symbolic Big-Step Algorithm for Parity Games}
	\label{alg:symbigstep}
	\SetKwInOut{Input}{Input}
	\SetKwInOut{Output}{Output}
	\SetKw{break}{break}
	\BlankLine
	\Input{%
	  \emph{parity game} $P = (\game, \prio)$, with\\ 
	  \emph{game graph} $\game = ((V, E),(\ve, \vo))$,\\
	  \emph{priority function} $\prio: V \rightarrow [\numprio]$, and\\
	  \emph{parameter} $\dombound{n}{\numprio} \in [1, n] \cap \mathbb{N}$
	}
	\Output
	{
	 winning sets $(\we, \wo)$ of player~$\pe$ and player~$\po$
	}
	\BlankLine
	\lIf{$\numprio = 1$}{\Return $(V, \emptyset)$}
	let $\pl$ be player~$\pe$ if $\numprio$ is odd and player~$\po$ otherwise\;
	$W_{\op} \leftarrow \emptyset$\;
	\Repeat{$W'_{\op} = \emptyset$}{
	  \If{$\numprio > 2$}{
	$W'_{\op} \leftarrow\domalg(\game, \prio, \dombound{n}{\numprio}, \op)$\;   \label{alg:symbigstep_dominion}
	    $A \leftarrow \at{\op}{\game}{W'_{\op}}$\label{alg:symbigstep_attr1}\;
	    $W_{\op} \leftarrow W_{\op} \cup A$\;
	    $\game \leftarrow \game \setminus A$\;
	  }
	  $\game' \gets \game \setminus \at{\pl}{\game}{P_{\numprio - 1}}$\; \label{alg:symbigstep_attr2}
	  $(\we', \wo') \leftarrow $\ref{alg:symbigstep}$(\game', \prio)$\label{l:bs_nosmall}\;
	  $A \leftarrow \at{\op}{\game}{W'_{\op}}$; \label{alg:symbigstep_attr3}
	  $W_{\op} \leftarrow W_{\op} \cup A$\;
	  $\game \leftarrow \game \setminus A$\;
	}
	$W_{\pl} \gets V \setminus W_{\op}$\;
	\Return{$(\we, \wo)$}
	\BlankLine
	\BlankLine
	\SetKwProg{myproc}{Procedure}{}{}
	\myproc{$\domalg(\game, \prio, \dombound{n}{\numprio}, \op)$}{
		\If{$\op = \pe$}{
			\Return \ref{alg:SymbolicParityDominion} for $\game$, $\prio$, and $\dombound{n}{\numprio}$\;
		}\Else{
			construct the parity game $(\game', \prio')$ from $(\game, \prio)$ by
			increasing each priority by one and changing the roles of the two players\;
			\Return \ref{alg:SymbolicParityDominion} for $\game'$, $\prio'$, and $\dombound{n}{\numprio}$\;
		}
	}
\end{algorithm}

\paragraph{Iterative Winning Set Computation.}
The basic structure of Algorithm~\ref{alg:symbigstep} is the same as in 
the classical algorithm for parity games (see Appendix~\ref{sec:classic}).
Let $\pl$ be $\pe$ if $\numprio$ is odd and $\po$ if $\numprio$ is even
and assume we have $\numprio > 2$ (the cases $\numprio \le 2$, i.e., Büchi games, are
simpler). Let $\game$ be the game graph maintained by the algorithm.
The winning set of $\op$ is initialized with the empty set and then 
the algorithm searches for $\op$-dominions in a repeat-until loop.
When a $\op$-dominion is found in an iteration of the repeat-until loop, 
its $\op$-attractor is added to the winning set of $\op$ and 
removed from the game graph~$\game$.
If no $\op$-dominion is found, then the repeat-until loop terminates and 
the set of vertices in the remaining 
game graph~$\game$ is returned as the winning set of player~$\pl$.

\paragraph{Dominion Search.}
The search for $\op$-dominions is conducted in two different ways:
by Procedure~$\domalg$ and by a recursive call to a derived parity 
game with game graph $\game'$ and the priority function $\prio$ restricted
to the vertices of $\game'$. The derived parity game is denoted by 
$(\game', \prio)$ and has $\numprio - 1$ priorities. The parameter 
$\dombound{n}{\numprio}$ is used to balance the number of symbolic operations
of the two procedures.

First, all $\op$-dominions of size at most $\dombound{n}{\numprio}+1$ are found
with Procedure~$\domalg$ that uses Algorithm~\ref{alg:SymbolicParityDominion}.
Note that this algorithm determines $\pe$-dominions. We can compute $\po$-dominions
(including a winning strategy within the dominion) by adding one to each 
priority and changing the roles of the two players.

For the recursive call for the parity game~$\pgame' = (\game', \prio)$ we 
obtain the game graph~$\game'$ from $\game$ by removing the $\pl$-attractor 
of the vertices with the highest priority. Note that the vertices of $\game'$
form a $\pl$-trap in $\game$. The winning set of player~$\op$ in $\pgame'$
is a $\op$-dominion in the original parity game. 

\paragraph{Number of Iterations.}
The following lemma shows that the number of iterations of the repeat-until 
loop is bounded by $O(n / \dombound{n}{\numprio})$. 
This holds because in each but the last iteration
the union of the dominion identified by $\domalg$ 
and the dominion identified by the recursive call
is larger than $\dombound{n}{\numprio}+1$;
otherwise, the union of the two dominions, which is itself a dominion,
would already have been identified solely by Procedure~$\domalg$ and the 
algorithm would have terminated.

\begin{lemma}[\cite{Schewe17}]\label{lem:iterations}
	Let $\dombound{n}{\numprio}$ be the parameter of Algorithm~\ref{alg:symbigstep}.
	In each but the last iteration of the repeat-until loop
	at least $\dombound{n}{\numprio} + 2$ vertices are removed, and thus
	there are at most $\lfloor n / (\dombound{n}{\numprio} + 2) \rfloor + 1$ 
	iterations.
\end{lemma}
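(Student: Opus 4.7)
The plan is to show that each non-final iteration of the repeat-until loop removes at least $\dombound{n}{\numprio}+2$ vertices from $\game$; the iteration bound then follows immediately, since the $n$ vertices of $V$ can absorb at most $(k-1)(\dombound{n}{\numprio}+2) \le n$ removals outside of the last iteration, so $k \le \lfloor n/(\dombound{n}{\numprio}+2) \rfloor + 1$.

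Fix an iteration that is not the last and let $\game$ denote the game graph at the beginning of it. I would name the two candidate $\op$-dominions uncovered in this iteration: let $D_1$ denote the $W'_\op$ produced by $\domalg$ in line~\ref{alg:symbigstep_dominion} (which is empty when $\numprio \le 2$), and let $D_2$ denote the $\op$-winning set returned by the recursive call in line~\ref{l:bs_nosmall}. The two sets removed from $\game$ in the iteration are $A_1 = \at{\op}{\game}{D_1}$ in line~\ref{alg:symbigstep_attr1} and $A_2 = \at{\op}{\game \setminus A_1}{D_2}$ in line~\ref{alg:symbigstep_attr3}; these are disjoint, so exactly $|A_1| + |A_2|$ vertices leave $\game$ in this iteration.

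The main step is to argue that $A_1 \cup A_2$ is itself an $\op$-dominion of $\game$ at the start of the iteration. The set $A_1$ is an $\op$-dominion of $\game$ by standard attractor-of-dominion reasoning (combine the attractor strategy toward $D_1$ with the winning strategy on $D_1$). For a vertex in $A_2$ I would argue that player~$\op$ wins by first following the attractor strategy toward $D_2$, then using the winning strategy of the recursive subgame $\game'$ on $D_2$, and switching to the $A_1$-winning strategy whenever player~$\pl$ forces the play out of the vertex set of $\game'$; the successive removal of the $\op$-attractor $A_1$ and the $\pl$-attractor of $P_{\numprio-1}$ guarantees that such an escape from $\game'$ can only land inside $A_1$, which keeps the play confined to $A_1 \cup A_2$.

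Once $A_1 \cup A_2$ is known to be an $\op$-dominion of $\game$, the lemma follows by contradiction: if $|A_1 \cup A_2| \le \dombound{n}{\numprio}+1$, then by Key Lemma~\ref{lem:small_dominions_space_efficient} together with Remark~\ref{remark:odddom} the set $D_1$ contains every $\op$-dominion of $\game$ of size at most $\dombound{n}{\numprio}+1$, so $A_1 \cup A_2 \subseteq D_1 \subseteq A_1$ and hence $A_2 \subseteq A_1 \cap A_2 = \emptyset$; this forces $D_2 \subseteq A_2 = \emptyset$, which triggers the loop-termination condition $W'_\op = \emptyset$ and contradicts the iteration being non-final. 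I expect the delicate step to be precisely the verification that $A_1 \cup A_2$ is a genuine $\op$-dominion of $\game$, in particular the bookkeeping that the only way for player~$\pl$ to escape the recursive subgame under $\op$'s combined strategy is through $A_1$.
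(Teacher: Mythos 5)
Your proposal is correct and follows essentially the same route as the paper: both arguments apply Key Lemma~\ref{lem:small_dominions_space_efficient} (via Remark~\ref{remark:odddom}) to a combined $\op$-dominion covering everything removed in the iteration, conclude that if it had at most $\dombound{n}{\numprio}+1$ vertices it would already be contained in the output of $\domalg$, and hence that the recursive call would return the empty set and terminate the loop. The only difference is that the paper applies this to the union $D_1 \cup D_2$ of the two dominions while you apply it to $A_1 \cup A_2$; your choice is in fact slightly more careful, since your explicit combined-strategy argument handles the possibility that player~$\pl$ escapes from $D_2$ into $A_1 \setminus D_1$, a point the paper's terser phrasing glosses over.
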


\begin{proof}
	Consider a fixed iteration of the repeat-until loop.
	Let $W''_\op$ be the union of the $\op$-dominion identified by Procedure~$\domalg$
	and the $\op$-dominion identified by the recursive call. If 
	$\lvert W''_\op \rvert \le \dombound{n}{\numprio} + 1$, then by 
	Lemma~\ref{lem:small_dominions_space_efficient} the $\op$-dominion $W''_\op$ is 
	identified by Procedure~$\domalg$ and therefore the recursive call  
	returned the empty set and the algorithm terminates.
	Otherwise we have $\lvert W''_\op \rvert > \dombound{n}{\numprio} + 1$,
	which can happen in at most $\lfloor n / (\dombound{n}{\numprio} + 2) \rfloor$ 
	iterations of the repeat-until loop since the vertices of $W''_\op$ are 
	removed from the maintained game graph before the next iteration.
\end{proof}

\paragraph{Analysis of number of symbolic operations.}
We first analyze the number of symbolic operations
of Algorithm~\ref{alg:symbigstep} for instances with 
an arbitrary numbers of priorities~$\numprio$
and then give better bounds for the case $\numprio \leq \sqrt{n}$.
For the former we follow the analysis of~\cite{JurdzinskiPZ08} for a similar explicit 
algorithm. There we choose the parameter $\dombound{n}{\numprio}$ to depend
(only) on the number of vertices of the game graph for which Procedure~$\domalg$ is called.
In the analysis for $\numprio \leq \sqrt{n}$
the parameter $\dombound{n}{\numprio}$ depends 
on the number of vertices in the \emph{input game graph} and the number 
of priorities in the game graph for which the procedure is called. 

\begin{lemma}\label{lem:bigstep_runtimes_bigc}      
	Let $\dombound{n'}{\numprio}=\lceil \sqrt{2n'}\rceil-2$ 
	where $n'$ is the number of vertices in $\game$ when Procedure~$\domalg$
	is called in Algorithm~\ref{alg:symbigstep}. With this choice for 
	$\dombound{n'}{\numprio}$ and parity games with $n$ vertices, 
	Algorithm~\ref{alg:symbigstep} requires $n^{O(\sqrt{n})}$ symbolic one-step 
	and set operations and $O(n)$ many sets.
\end{lemma}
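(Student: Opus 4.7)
}
The plan is to adapt the Jurdziński–Paterson–Zwick subexponential analysis, replacing the enumerative small-dominion search with calls to Procedure~$\domalg$ whose cost is controlled by Key Lemma~\ref{lem:small_dominions_space_efficient}.

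First I would bound the cost of a single invocation of $\domalg$ on a game of $n'$ vertices. Instantiating Key Lemma~\ref{lem:small_dominions_space_efficient} with $\domsize = \lceil\sqrt{2n'}\rceil - 2 = \Theta(\sqrt{n'})$ yields $O(\numprio \cdot (n')^2 \cdot \binom{\domsize + \lfloor \numprio/2\rfloor}{\domsize})$ symbolic operations per call. I would verify that the binomial is $(n')^{O(\sqrt{n'})}$ uniformly in $\numprio \in [1,n']$ by splitting at $\numprio/2 = \domsize$: when $\numprio/2 \le \domsize$, $\binom{\domsize + \numprio/2}{\domsize} \le 2^{\domsize + \numprio/2} \le 2^{O(\sqrt{n'})}$; when $\numprio/2 > \domsize$, $\binom{\domsize + \numprio/2}{\domsize} \le (e\numprio/\domsize)^{\domsize} \le (e\numprio/\sqrt{2n'})^{\sqrt{2n'}} \le (n')^{O(\sqrt{n'})}$ since $\numprio \le n'$. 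Hence one $\domalg$ call costs at most $n^{O(\sqrt{n})}$.

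Second, I would set up the recurrence. Let $F(n,\numprio)$ be the worst-case number of symbolic operations of Algorithm~\ref{alg:symbigstep} over inputs with at most $n$ vertices and at most $\numprio$ priorities. A single iteration of the repeat-until loop performs one $\domalg$ call, attractor computations, and one recursive call on $\game' = \game \setminus \at{\pl}{\game}{P_{\numprio-1}}$; since $P_{\numprio-1}\neq\emptyset$, this subgame has at most $n-1$ vertices and at most $\numprio-1$ priorities, contributing at most $F(n-1,\numprio-1)$. By Lemma~\ref{lem:iterations}, every non-last iteration removes at least $\domsize+2 = \lceil\sqrt{2n}\rceil$ vertices, so all subsequent iterations operate on a game of at most $n - \lceil\sqrt{2n}\rceil$ vertices. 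Together with the base case $F(n,1)=O(1)$, this gives the JPZ-style recurrence
\[
    F(n,\numprio) \;\le\; F(n-1,\numprio-1) + F\!\left(n - \lceil\sqrt{2n}\rceil,\, \numprio\right) + n^{O(\sqrt{n})}.
\]

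The third and main step is to solve this recurrence, showing $F(n,\numprio) \le n^{K\sqrt{2n}}$ by induction on $n$ for a sufficiently large constant $K$. The hard part is the inductive-step calculation. Taking logarithms and expanding $\sqrt{2(n-1)}\ln(n-1)$ around $\sqrt{2n}\ln n$ yields $(n-1)^{K\sqrt{2(n-1)}} \le n^{K\sqrt{2n}} \cdot (1 - \Theta(K\ln n/\sqrt{n}))$ for large $n$; expanding $\sqrt{2(n-\sqrt{2n})}\ln(n-\sqrt{2n})$ similarly yields $(n-\sqrt{2n})^{K\sqrt{2(n-\sqrt{2n})}} \le n^{K\sqrt{2n}}\cdot O(n^{-K})$; and the additive $n^{O(\sqrt{n})}$ term is at most $n^{K\sqrt{2n}} \cdot n^{-\Omega(\sqrt{n})}$ once $K$ exceeds the implied constant in $O(\sqrt{n})$. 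For $K$ large and $n$ beyond some constant, the three contributions sum to at most $n^{K\sqrt{2n}}$; the finitely many small $n$ are absorbed by enlarging $K$. This yields $F(n,\numprio) \le n^{O(\sqrt{n})}$ as claimed.

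Finally, for the space bound, at each recursion level Algorithm~\ref{alg:symbigstep} persistently maintains only $O(1)$ sets ($\game$, $W_\op$, $W'_\op$, $A$, $\game'$, $\we'$, $\wo'$), while $\domalg$ and the recursive call each use $O(n)$ sets but are never simultaneously active with each other. Summing $O(1)$ persistent sets over the at most $\numprio \le n$ recursion levels and adding the $O(n)$ peak of a single sub-procedure call yields $O(n)$ sets overall.
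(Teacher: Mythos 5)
Your proposal is correct and follows essentially the same route as the paper: bound one $\domalg$ call via Key Lemma~\ref{lem:small_dominions_space_efficient} with the binomial controlled by the same case split at $\lfloor\numprio/2\rfloor$ versus $\domsize$, derive the recurrence $T(n)\le T(n-1)+T(n-\lceil\sqrt{2n}\rceil)+n^{O(\sqrt{n})}$ from Lemma~\ref{lem:iterations}, and count $O(1)$ persistent sets per recursion level for the space bound. The only difference is that the paper solves the recurrence by citing Theorem~8.1 of \cite{JurdzinskiPZ08}, whereas you inline the inductive solution $F(n)\le n^{K\sqrt{2n}}$, which is a sound (and more self-contained) substitute.
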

\begin{proof}
  Let $T(n)$ denote the number of the symbolic one-step operations of Algorithm~\ref{alg:symbigstep} 
  when called for a game graph with $n$~vertices.
  We instantiate $\dombound{n}{\numprio}$ as $\dombound{n}{\numprio}=\lceil 
  \sqrt{2n}\rceil-2$. 
  First, for a game graph of size~$n$,
  by Lemma~\ref{lem:small_dominions_space_efficient}, we can compute a dominion in
  $O\left( \numprio \cdot n \cdot \binom {\dombound{n}{\numprio} + \lfloor \numprio/2 \rfloor} {\dombound{n}{\numprio}}
    \right)$. 
    This can be bounded with $O(n^{\dombound{n}{\numprio}+2})$
    as follows (using $\numprio \le n$).
    First we apply Stirling's approximation 
    $(\dombound{n}{\numprio}/e)^{\dombound{n}{\numprio}} \le \dombound{n}{\numprio}!$ and we 
    distinguish whether  $\dombound{n}{\numprio} \geq \lfloor \numprio/2 \rfloor$ 
    or $\dombound{n}{\numprio} \leq \lfloor \numprio/2 \rfloor$. We have
    \begin{align*}
     \binom {\dombound{n}{\numprio} + \lfloor \numprio/2 \rfloor} {\dombound{n}{\numprio}} 
     \leq \frac{(\dombound{n}{\numprio} + \lfloor \numprio/2 \rfloor)^{\dombound{n}{\numprio}}}{\dombound{n}{\numprio}!}
     \leq 
    \left( \frac{(\dombound{n}{\numprio} + \lfloor \numprio/2 \rfloor)e} {\dombound{n}{\numprio}} \right)^{\dombound{n}{\numprio}} \,,
    \end{align*}
    and (a) if $\dombound{n}{\numprio} \geq \lfloor \numprio/2 \rfloor$, then
    \begin{align*}
    \left( \frac{(\dombound{n}{\numprio} + \lfloor \numprio/2 \rfloor)e} {\dombound{n}{\numprio}} \right)^{\dombound{n}{\numprio}} 
    \leq
    \left( \frac{2\dombound{n}{\numprio} \cdot e} {\dombound{n}{\numprio}} \right)^{\dombound{n}{\numprio}} 
    \leq
    \left( 2e \right)^{\dombound{n}{\numprio}}\,,
    \end{align*}
    (b) if $\dombound{n}{\numprio} \leq \lfloor \numprio/2 \rfloor$, then
    $$
    \left( \frac{(\dombound{n}{\numprio} + \lfloor \numprio/2 \rfloor)e} {\dombound{n}{\numprio}} \right)^{\dombound{n}{\numprio}} \leq
      \left( \frac{2 \lfloor \numprio/2 \rfloor \cdot e} {\dombound{n}{\numprio}} \right)^{\dombound{n}{\numprio}}\,.
    $$
    In both cases we obtain $O(n^{\dombound{n}{\numprio}+2})$.\par
  Let $\Gamma = \dombound{n}{\numprio} +2 = \lceil \sqrt{2n}\rceil$. 
  To solve an instance of size $n$, the algorithm
  (1) calls the subroutine for computing a dominion, which is in $O(n^\Gamma)$; 
  (2) then makes a recursive call to Algorithm~\ref{alg:symbigstep}, with an instance of 
  size at most size $n-1$
  (it removes the largest parity and thus at least one vertex); and
  (3) finally processes the instance in which
  all winning vertices from the previous two steps have been removed,
  i.e., an instance of size at most $n-\Gamma$ (Lemma~\ref{lem:iterations}).
  Thus, we obtain the following recurrence relation for $T(n)$
  \[
  T(n) \leq O(n^\Gamma) + T(n-1)+T(n-\Gamma)
  \]
  for $\Gamma= \lceil \sqrt{2n}\rceil$.
  This coincides with the recurrence relation for the explicit algorithm
  and thus we get $T(n)=n^{O(\sqrt{n})}$~\cite[Theorem 8.1]{JurdzinskiPZ08}.
  Finally, as in Algorithm~\ref{alg:symbigstep} the number of symbolic set operations is 
  at most a factor of $n$ higher than the number of symbolic one-step operations,
  we obtain the claim by $n \cdot T(n) = n^{O(\sqrt{n})}$.
  
  Now let us consider the space requirements of Algorithm~\ref{alg:symbigstep}.
  Computing a dominion, by Lemma~\ref{lem:small_dominions_space_efficient}, 
  can be done with $O(n)$ many sets.
  The recursion depth is bounded by $\numprio$ and thus also by $n$, and 
  for each instance on the stack we just need a constant number of sets 
  to store the current subgraph and the winning sets of the two players.
  Finally, also attractor computations can be done with a constant number of sets.
  That is, Algorithm~\ref{alg:symbigstep} uses $O(n)$ many sets,
  independent of the value of $\dombound{n}{\numprio}$.
\end{proof}

To provide an improved bound for $\numprio \le \sqrt{n}$ we follow the 
analysis of~\cite{Schewe07}; see~\cite{Schewe17} for a recent improvement of 
this analysis. We use the following definitions from~\cite{Schewe07}:
let $\gamma(\numprio) = \numprio/3 + 1/2 - 
4/(\numprio^2 - 1)$ for odd $\numprio$ and $\gamma(\numprio) = 
\numprio/3 + 1/2 - 1/(3 \numprio) - 4/\numprio^2$ for even $\numprio$, and let
$\beta(\numprio) = \gamma(\numprio)/(\lfloor\numprio / 2\rfloor + 1)$.
With these definitions we have 
$\beta(\numprio) \in [1/2, 7/10] \text{ for all } \numprio \ge 3$, 
$\gamma(\numprio) \in [\numprio/3, \numprio/3+1/2]$ and
\begin{align}
 \gamma(\numprio) = \gamma(\numprio - 1) + 1 - \beta(\numprio - 1)\,, \label{eqn:gamma}\\
 \beta(\numprio-1) \cdot \lceil \numprio/2 \rceil = \gamma(\numprio - 1)\,. \label{eqn:beta}
\end{align}
We first present a shorter proof for the simplified bound of 
$O(n^{1+\gamma(\numprio)})$ symbolic one-step operations and then a more extensive
analysis to show that, for some constant~$\kappa$, $O(n \cdot (\kappa \cdot n /
\numprio)^{\gamma(\numprio)})$ symbolic one-step operations are sufficient.
We bound the number of symbolic set operations with an additional factor 
of $O(n)$. In both cases the parameter $\dombound{n}{\numprio}$ depends on 
number of vertices in the input game graph and the number of 
priorities~$\numprio$ in the parity game that is maintained by the algorithm.

\begin{lemma}[Simplified Analysis of Number of Symbolic Operations]\label{lem:bigstep_runtimesimple}
	For parity games with $n$ vertices and $\numprio$~priorities
	Algorithm~\ref{alg:symbigstep} takes $O(n^{1+\gamma(\numprio)})$ 	
	symbolic one-step operations and $O(n^{2+\gamma(\numprio)})$ symbolic set 
	operations for $2 <\numprio \leq \sqrt{n}$ and $O(n^2)$ symbolic one-step and
	set operations for $\numprio = 2$. Moreover, it needs $O(n)$ many sets.
\end{lemma}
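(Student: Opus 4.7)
The proof outline is to derive and solve a recurrence for the number $T(n, \numprio)$ of symbolic one-step operations of \ref{alg:symbigstep} on an instance with $n$ vertices and $\numprio$ priorities. I would proceed by induction on $\numprio$ with base case $\numprio = 2$, following the analysis of \cite{Schewe07}. Each iteration of the repeat-until loop performs (i) a call to Procedure~$\domalg$ searching for $\op$-dominions of size at most $\domsize + 1$ at a cost of $O(\numprio \cdot n \cdot \binom{\domsize + \lfloor\numprio/2\rfloor}{\domsize})$ one-step operations by Key Lemma~\ref{lem:small_dominions_space_efficient}, (ii) a constant number of attractor computations costing $O(n)$ each, and (iii) a recursive call on a subgame with at most $n$ vertices and $\numprio - 1$ priorities. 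By Lemma~\ref{lem:iterations} the loop executes at most $O(n/\domsize)$ times, giving the recurrence
\[
T(n, \numprio) \le \frac{O(n)}{\domsize + 2}\left( O\!\left(\numprio \cdot n \cdot \binom{\domsize + \lfloor \numprio/2 \rfloor}{\domsize}\right) + T(n, \numprio - 1) \right).
\]

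For the base case $\numprio = 2$ the dominion search block is skipped, so every iteration only performs an attractor computation and a recursive call on a game with $\numprio - 1 = 1$ priority which returns immediately. Since the attractor removes at least one vertex per iteration, $O(n)$ iterations of cost $O(n)$ yield $T(n,2) = O(n^2)$.

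For the inductive step with $2 < \numprio \le \sqrt{n}$ I would choose $\domsize = \Theta(n^{\beta(\numprio-1)})$ to balance the two terms, except in the degenerate case $\numprio = 3$ where $\beta(2) = 0$ and one instead takes $\domsize = \Theta(n)$ so that the loop runs $O(1)$ times and the bound $T(n,3) = O(n^2) = O(n^{1+\gamma(3)})$ follows directly from $T(n,2) = O(n^2)$. For $\numprio \geq 4$ the regime $\numprio \le \sqrt{n}$ together with $\beta(\numprio-1) \geq 1/2$ ensures $\domsize \geq \lfloor \numprio/2 \rfloor$, so $\binom{\domsize + \lfloor \numprio/2 \rfloor}{\domsize} = O(\domsize^{\lfloor \numprio/2 \rfloor})$. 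Plugging this in together with the inductive hypothesis $T(n, \numprio-1) = O(n^{1+\gamma(\numprio-1)})$, the recurrence becomes
\[
T(n, \numprio) = O\!\left( \numprio \cdot n^2 \cdot \domsize^{\lfloor \numprio/2 \rfloor - 1} \right) + O\!\left( \frac{n^{2 + \gamma(\numprio - 1)}}{\domsize} \right).
\]
Identity~\eqref{eqn:beta} ensures both summands evaluate to $O(n^{2 + \gamma(\numprio-1) - \beta(\numprio-1)})$ (absorbing the $\numprio$ factor using $\numprio \le \sqrt{n}$ and the slack from $\lfloor\numprio/2\rfloor \le \lceil\numprio/2\rceil$), and identity~\eqref{eqn:gamma} rewrites this exponent as $1 + \gamma(\numprio)$. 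The set-operation bound follows from Key Lemma~\ref{lem:small_dominions_space_efficient}, which costs an additional factor of $n$. For the space bound, the dominion procedure stores $O(n)$ sets, attractor computations and subgraph bookkeeping need only a constant number of sets, and the $O(n)$ sets of the dominion search are released before the recursive call so that across the recursion (of depth at most $\numprio - 1$) only $O(n)$ sets are ever simultaneously alive.

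The main obstacle is the exponent bookkeeping in the inductive step: verifying that the chosen $\domsize$ lies in the range where the binomial admits the clean bound $O(\domsize^{\lfloor\numprio/2\rfloor})$, checking that the extra $\numprio$ factor from the dominion cost is absorbed precisely when $\numprio \leq \sqrt{n}$, and handling the small-$\numprio$ degeneracies (in particular $\numprio = 3$, where the balance point of the two terms collapses and one must instead pick $\domsize$ at the upper end of its range so that the recursive call contributes only a constant number of times). The rest of the proof is mechanical algebra driven by identities~\eqref{eqn:gamma} and~\eqref{eqn:beta}.
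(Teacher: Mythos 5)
Your overall route is the paper's: induction on $\numprio$ with base case $\numprio=3$ and $\domsize=n$ (your observation that $\beta(2)=0$ explains \emph{why} the balancing degenerates there, which the paper leaves implicit), the choice $\domsize=\Theta(n^{\beta(\numprio-1)})$ for $\numprio\ge 4$, the iteration count from Lemma~\ref{lem:iterations}, and the identities~\eqref{eqn:gamma} and~\eqref{eqn:beta} to close the induction. The set-operation and space arguments also match.

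There is, however, one step that does not go through as written: absorbing the leading factor $\numprio$ from the dominion cost $O\bigl(\numprio\cdot n\cdot\binom{\domsize+\lfloor\numprio/2\rfloor}{\domsize}\bigr)$. You bound the binomial by $O(\domsize^{\lfloor\numprio/2\rfloor})$, which discards the $1/\lfloor\numprio/2\rfloor!$ in the denominator, and then propose to absorb the extra $\numprio$ using $\numprio\le\sqrt{n}$ together with the slack $\lfloor\numprio/2\rfloor\le\lceil\numprio/2\rceil$ in~\eqref{eqn:beta}. For \emph{odd} $\numprio$ this works: the slack contributes a factor $n^{-\beta(\numprio-1)}\le n^{-1/2}$, which cancels $\numprio\le\sqrt{n}$. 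For \emph{even} $\numprio$ there is no slack ($\lfloor\numprio/2\rfloor=\lceil\numprio/2\rceil$), and your first summand evaluates to $\Theta(\numprio\cdot n^{1+\gamma(\numprio)})$, which is only $O(n^{3/2+\gamma(\numprio)})$ in general since $\numprio$ may grow with $n$. The paper avoids this by keeping the factorial: via Stirling, $\binom{\domsize+\lfloor\numprio/2\rfloor}{\domsize}\le\bigl(O(1)\cdot n^{\beta(\numprio-1)}/\numprio\bigr)^{\lfloor\numprio/2\rfloor}$, and the resulting $(O(1)/\numprio)^{\lfloor\numprio/2\rfloor}$ term dominates the leading $\numprio$ for every parity of $\numprio$. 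This is a local fix rather than a flaw in the strategy, but as stated your exponent bookkeeping fails for even $\numprio$.
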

\begin{proof}
	By the proof of Lemma~\ref{lem:bigstep_runtimes_bigc} the algorithm only 
	stores $O(n)$ many sets, independent of the value of $\dombound{n}{\numprio}$, 
	and thus it only remains to show the upper bound on the number of symbolic operations.
	We focus on the bound for the number of symbolic one-step operations,
	the number of symbolic set operations is higher only for the subroutine
	\ref{alg:SymbolicParityDominion}, where the difference is at most a factor
	of $O(n)$, by Lemma~\ref{lem:small_dominions_space_efficient}.
	For $\numprio = 2$ note that in each iteration of the repeat-until loop at
	least one vertex is removed from the game, thus there can be at most $O(n)$
	iterations. In each iteration there are attractor computations,
	a recursive call for parity games with one priority, and set operations,
	which all can be done with $O(n)$ symbolic operations\footnote{Note that 
	while the attractor computations in line~\ref{alg:symbigstep_attr3} only need $O(n)$ 
	symbolic operations over the whole algorithm, 
	the attractor computations in line~\ref{alg:symbigstep_attr2} might require $O(n)$ 
	symbolic operations in each iteration.},
	thus at most $O(n^2)$ symbolic operations are needed for $\numprio = 2$.
	
	By Lemma~\ref{lem:iterations} the repeat-until loop can have
	at most $\left \lfloor \frac{n}{(\dombound{n}{\numprio} + 2)} \right \rfloor + 1$ iterations.
	We show the claimed number of symbolic one-step operations for $\numprio \ge 3$
	by induction over~$\numprio$.
	For the base case of $\numprio = 3$ let $\dombound{n}{3} = n$. In this
	case there is only one iteration of the repeat-until loop. The 
	call to the progress measure procedure and the 
	recursive call for parity games with one priority less, i.e., Büchi games,
	both take $O(n^2)$ symbolic one-step operations. Thus the total 
	number of symbolic one-step operations for $\numprio = 3$ is bounded 
	by $O(n^{1 + \gamma(3)}) = O(n^2)$. For $\numprio=3$ the number of
	sets $S_\rank$ is only $O(n)$, therefore we do not have to use the sets $C^i_x$ to 
	achieve linear space and thus the number of symbolic set operations 
	is also $O(n^2)$.
	
	For the induction step we differentiate between the number of vertices~$n_0$
	in the first (i.e., non-recursive) call to Algorithm~\ref{alg:symbigstep}
	and the number of vertices~$n'$ in the parity game that is maintained by 
	the algorithm.
	We set the parameter $\dombound{n}{\numprio}$ according to $n_0$ (and to at most
	$n'$) to maintain the property $\numprio \le \sqrt{n_0}$
	in all recursive calls.
	
	Let $\dombound{n}{\numprio} = \min\big(\lceil n_0^{\beta(\numprio-1)} \rceil,
	n'\big)$ for $\numprio > 3$. 
	By Lemma~\ref{lem:iterations} the number of iterations in 
	Algorithm~\ref{alg:symbigstep} is bounded by $O(n_0^{1-\beta(\numprio - 1)})$.
	By Lemma~\ref{lem:small_dominions_space_efficient} each call to \ref{alg:SymbolicParityDominion} needs only 
	\begin{equation*}
	O\left(
	    \numprio \cdot n_0 \cdot 
	    \binom { \lceil n_0^{\beta(\numprio-1)} \rceil + \lfloor \numprio/2 \rfloor} { \lceil n_0^{\beta(\numprio-1)} \rceil} 	
	\right)
	\end{equation*} many symbolic one-step operations.
	The binomial coefficient can be bounded using $\beta(\numprio-1) \geq 1/2$,
	Stirling's approximation of $(\numprio/e)^\numprio \le \numprio!$, and 
	$3 < \numprio \leq \sqrt{n_0}$.
	\begin{align*}
	  \binom { \lceil n_0^{\beta(\numprio-1)} \rceil + \lfloor \numprio/2 \rfloor} { \lceil n_0^{\beta(\numprio-1)} \rceil} &\leq
	  \frac{(\lceil n_0^{\beta(\numprio-1)} \rceil + \lfloor \numprio/2 \rfloor)^{\lfloor \numprio/2 \rfloor}}{ \lfloor \numprio/2 \rfloor!}
	  \le \left(\frac{2 e \cdot (\lceil n_0^{\beta(\numprio-1)} \rceil + \lfloor \numprio/2 \rfloor)}{\numprio - 1}\right)^{\lfloor \numprio/2 \rfloor}\,,\\
	  &\leq\left(\frac{2 e \cdot n_0^{\beta(\numprio-1)} + e \cdot \numprio}{\numprio - 1}\right)^{\lfloor\numprio / 2\rfloor} 
	  \leq \left(\frac{4 e \cdot n_0^{\beta(\numprio-1)}}{\numprio} \right)^{\lfloor\numprio / 2\rfloor}
	\end{align*}
	thus, the number of symbolic one-step operations in a call to  \ref{alg:SymbolicParityDominion} is bounded by
	$O(n_0 \cdot n_0^{\beta(\numprio - 1) \lfloor\numprio / 2\rfloor})$.
	Hence, 
	we can bound the total number of symbolic one-step operations for all calls to  \ref{alg:SymbolicParityDominion}
	with $O(n_0 \cdot  n_0^{\beta(\numprio - 1) \lfloor\numprio / 2\rfloor}
	\cdot n_0^{1-\beta(\numprio - 1)})$. 
	Using
	\eqref{eqn:beta},
	we have $\beta(\numprio - 1)\lfloor\numprio / 2\rfloor
	\le \beta(\numprio - 1)\lceil\numprio / 2\rceil = \gamma(\numprio - 1)$
	and can bound the total number of symbolic one-step operations
	of the calls to the progress measure
	procedure with $O(n_0 \cdot n_0^{\gamma(\numprio - 1) + 1 - \beta(\numprio - 1)})$,
	which by \eqref{eqn:gamma} is equal to $O(n_0^{1+\gamma(\numprio)})$.
	By the induction assumption we further have that the total number of symbolic
	one-step operations for all recursive calls is 
	$O(n_0^{1-\beta(\numprio - 1) + 1 + \gamma(\numprio - 1)})
	= O(n_0^{1 + \gamma(\numprio)})$, which concludes the proof.
\end{proof}
\begin{lemma}[Number of Symbolic Operations]\label{lem:bigstep_runtime}
	For parity games with $n$ vertices and $2 < \numprio \le \sqrt{n}$ priorities
	Algorithm~\ref{alg:symbigstep} takes, for some constant~$\kappa$,
	$O\big(n \cdot (\frac{\kappa \cdot n}{\numprio})^{\gamma(\numprio)}\big)$ symbolic one-step 
	operations and
	$O\big(n^2 \cdot (\frac{\kappa \cdot n}{\numprio})^{\gamma(\numprio)}\big)$ symbolic
	set operations.
	Moreover, it needs $O(n)$ many sets.
\end{lemma}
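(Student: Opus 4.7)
The plan is to refine the argument of Lemma~\ref{lem:bigstep_runtimesimple} by letting the dominion-size parameter $\dombound{\cdot}{\cdot}$ depend explicitly on $\numprio$ as well as on $n_0$, and by keeping the factor $\numprio$ in the denominator of the binomial estimate for $\lvert M^\infty_\domsize\rvert$ instead of absorbing it into a polynomial in $n$. Writing $n_0$ for the vertex count of the top-level input parity game and $n'$ for the vertex count of the current subgame, I set
\[
h := \dombound{n_0}{\numprio} = \min\Bigl(\bigl\lceil (\eta\, n_0/\numprio)^{\beta(\numprio-1)}\bigr\rceil,\ n'\Bigr),
\]
for an absolute constant $\eta\ge 1$ chosen in the last step. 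The $O(n)$ bound on the number of sets and the fact that the number of symbolic set operations exceeds the number of symbolic one-step operations by at most a factor $O(n)$ are inherited verbatim from the proofs of Lemma~\ref{lem:bigstep_runtimes_bigc} and Lemma~\ref{lem:small_dominions_space_efficient}. Hence it suffices to bound the number $T(n',\numprio)$ of symbolic one-step operations.

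I would prove by induction on $\numprio$ that $T(n',\numprio)\le \alpha\cdot n'\cdot (\kappa\,n_0/\numprio)^{\gamma(\numprio)}$ for constants $\alpha$ and $\kappa$ to be fixed. The base case $\numprio=3$ is direct with $h=n'$: one iteration suffices, and both the dominion search and the recursive Büchi call take $O((n')^2)$ one-step operations, matching $\gamma(3)=1$. For the inductive step with $\numprio\ge 4$, Lemma~\ref{lem:iterations} bounds the repeat-until loop by $O(n'/h)$ iterations. By Lemma~\ref{lem:small_dominions_space_efficient}, each call to \ref{alg:SymbolicParityDominion} costs $O\bigl(\numprio\, n'\binom{h+\lfloor\numprio/2\rfloor}{\lfloor\numprio/2\rfloor}\bigr)$ one-step operations. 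Applying Stirling together with $\numprio\le\sqrt{n_0}$ and $h+\numprio\le 2\max(h,\numprio)$ gives $\binom{h+\lfloor\numprio/2\rfloor}{\lfloor\numprio/2\rfloor}\le (2e(h+\numprio)/\numprio)^{\lfloor\numprio/2\rfloor}\le (\kappa_1\, n_0/\numprio)^{\beta(\numprio-1)\lfloor\numprio/2\rfloor}$ for some absolute $\kappa_1$, provided $\eta$ is large enough; here the key gain over the proof of Lemma~\ref{lem:bigstep_runtimesimple} is to keep the denominator $\numprio$ throughout. Summing this per-iteration cost over $O(n'/h)$ iterations and using \eqref{eqn:beta} and \eqref{eqn:gamma}, the total dominion-search contribution is $O\bigl(n'\cdot (\kappa_1\, n_0/\numprio)^{\gamma(\numprio)}\bigr)$. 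The recursive calls, by the inductive hypothesis applied with parameter $\numprio-1$ on a subgame of at most $n'$ vertices, contribute at most $O(n'/h)\cdot \alpha\, n'\,(\kappa\, n_0/(\numprio-1))^{\gamma(\numprio-1)}$, which by \eqref{eqn:gamma} and the definition of $h$ equals $O\bigl(\alpha\, n'\,(\kappa\, n_0/\numprio)^{\gamma(\numprio)}\bigr)\cdot(\numprio/(\numprio-1))^{\gamma(\numprio-1)}$. Attractor computations and set-operation overhead contribute only lower-order terms.

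The main obstacle is choosing the constants so that the induction closes uniformly over all $\numprio\le\sqrt n$, since two potential blow-ups arise: the factor $(\numprio/(\numprio-1))^{\gamma(\numprio-1)}$ from shifting the denominator from $\numprio-1$ to $\numprio$, and the cumulative growth of multiplicative constants across recursion levels. The first is uniformly bounded because $\gamma(\numprio-1)\le\numprio/3$ yields $(\numprio/(\numprio-1))^{\gamma(\numprio-1)}\le \exp\!\bigl(\gamma(\numprio-1)/(\numprio-1)\bigr) = O(1)$ independently of $\numprio$. The second is handled by first choosing $\kappa\ge\kappa_1$ large enough that $(\kappa_1/\kappa)^{\gamma(\numprio)}$ compensates for the $O(1)$ factor in the dominion bound, and then choosing $\alpha$ large enough that the sum of the dominion contribution and of the inherited recursive contribution does not exceed $\alpha\,n'(\kappa\, n_0/\numprio)^{\gamma(\numprio)}$. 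Specializing to $n_0=n'=n$ at the top level yields the claimed bound $O\bigl(n\cdot(\kappa\,n/\numprio)^{\gamma(\numprio)}\bigr)$ on symbolic one-step operations, and multiplying by $O(n)$ yields the corresponding bound on symbolic set operations.
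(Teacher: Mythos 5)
There is a genuine gap in the inductive step: with your choice $h=\min\bigl(\lceil(\eta n_0/\numprio)^{\beta(\numprio-1)}\rceil,n'\bigr)$ the induction does not close, and the error is quantitative, not just a matter of constants. The number of iterations of the repeat-until loop is $O(n_0/h)=O\bigl(n_0^{1-\beta(\numprio-1)}(\numprio/\eta)^{\beta(\numprio-1)}\bigr)$, so the total recursive contribution is at most
\[
n_0^{1-\beta(\numprio-1)}\Bigl(\tfrac{\numprio}{\eta}\Bigr)^{\beta(\numprio-1)}\cdot \alpha\,n'\Bigl(\tfrac{\kappa n_0}{\numprio-1}\Bigr)^{\gamma(\numprio-1)} .
\]
Dividing by the target $\alpha\,n'(\kappa n_0/\numprio)^{\gamma(\numprio)}$ and using \eqref{eqn:gamma}, the ratio is
$\bigl(\tfrac{\numprio}{\numprio-1}\bigr)^{\gamma(\numprio-1)}\cdot \numprio^{\beta(\numprio-1)}\cdot \numprio^{1-\beta(\numprio-1)}\big/\bigl(\eta^{\beta(\numprio-1)}\kappa^{1-\beta(\numprio-1)}\bigr)
=\Theta(\numprio)/\bigl(\eta^{\beta}\kappa^{1-\beta}\bigr)$,
not $O(1)$ as your displayed chain of equalities asserts: you correctly control the harmless factor $(\numprio/(\numprio-1))^{\gamma(\numprio-1)}=O(1)$, but an uncancelled factor of $\numprio$ survives, so the induction fails once $\numprio$ exceeds a constant depending on $\eta$ and $\kappa$, whereas the lemma must hold uniformly for all $\numprio\le\sqrt n$ with absolute constants. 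The root cause is that your parameter moves in the wrong direction: to gain the extra $\numprio^{1-\beta(\numprio-1)}$ in the denominator demanded by $\gamma(\numprio)=\gamma(\numprio-1)+1-\beta(\numprio-1)$, one needs \emph{fewer} iterations, i.e.\ a \emph{larger} $h$ (roughly $n_0^{\beta(\numprio-1)}\numprio^{1-\beta(\numprio-1)}$), whereas dividing $n_0$ by $\numprio$ inside the base makes $h$ smaller by $\numprio^{\beta(\numprio-1)}$ and the iteration count correspondingly larger. The same slack-counting error affects your dominion-search bound: from the stated per-call estimate $(\kappa_1 n_0/\numprio)^{\beta(\numprio-1)\lfloor\numprio/2\rfloor}$ times your iteration count one only gets $O\bigl(\numprio^2\cdot n'(\kappa_1 n_0/\numprio)^{\gamma(\numprio)}\bigr)$; that part happens to be repairable because the tighter binomial bound carries an extra $\numprio^{-\Theta(\numprio)}$, but the recursive part is not.

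The paper avoids this by \emph{not} carrying the denominator $\numprio^{\gamma(\numprio)}$ through the induction at all. It sets $\dombound{n_0}{\numprio}=\min(\lceil 2\sqrt[3]{\numprio}\,n_0^{\beta(\numprio-1)}\rceil,n')$, so that each level contributes only a modest factor $1/\sqrt[3]{\numprio}$ to the iteration count, and proves the intermediate claim $T\le\kappa_1 n_0(\kappa_2 n_0)^{\gamma(\numprio)}/\sqrt[3]{\numprio!}$; the per-level factors telescope into the factorial, and only at the very end is $\sqrt[3]{\numprio!}\ge(\numprio/e)^{\numprio/3}\ge(\numprio/(e\kappa_3))^{\gamma(\numprio)-1/2}$ converted, via a single application of Stirling, into the denominator $\numprio^{\gamma(\numprio)}$ of the final bound. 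You would need either to adopt this factorial bookkeeping or to redo your induction with $h\approx n_0^{\beta(\numprio-1)}\numprio^{1-\beta(\numprio-1)}$ and re-verify that the dominion-search cost remains affordable for that larger parameter.
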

\begin{proof} 
By the proof of Lemma~\ref{lem:bigstep_runtimes_bigc} the algorithm only stores $O(n)$ many sets independent of the value of $\dombound{n}{\numprio}$, 
and thus it only remains to show the upper bound on the number of symbolic operations.
We show the bound for the number of symbolic one-step operations,
	the number of symbolic set operations is higher only for the subroutine
	\ref{alg:SymbolicParityDominion}, where the difference is at most a factor
	of $O(n)$ by Lemma~\ref{lem:small_dominions_space_efficient}.

  We differentiate between the number of vertices~$n_0$
	in the first (i.e.\ non-recursive) call to Algorithm~\ref{alg:symbigstep}
	and the number of vertices~$n'$ in the parity game that is maintained by 
	the algorithm.
	We will set the parameter $\dombound{n}{\numprio}$ according to $n_0$ (and 
	to at most $n'$) to maintain the property $\numprio \le \sqrt{n_0}$
	in all recursive calls.
	
Similar to~\cite{Schewe07}, we use $\dombound{n_0}{\numprio} = \min\left(
\lceil 2 \sqrt[3]{\numprio} \cdot n_0^{\beta(\numprio-1)} \rceil, n'\right)$ to balance 
the number of symbolic operations to compute dominions with the number of symbolic 
operations in the recursive calls.
We will frequently apply \emph{Stirling's approximation} by which we have
\begin{itemize}
\item[ (S1)]\ $\big(\frac{\numprio}{e}\big)^\numprio \le \numprio!$, and
\item[ (S2)]\ $\numprio! \in O\big(\big(\frac{\numprio}{\hat{\kappa}}\big)^\numprio\big)$ for all $\hat{\kappa} < e$.
\end{itemize}
We first show that proving the following claim is sufficient.

\begin{claim}
The number of symbolic one-step operations
is bounded by $\kappa_1 \cdot n_0 \cdot \frac{(\kappa_2 
\cdot n_0)^{\gamma(\numprio)}} { \sqrt[3]{\numprio !}}$ for some constants~$\kappa_1$
and $\kappa_2$. 
\end{claim}
The claim implies the lemma since we have for $\numprio \ge 3$
\begin{align*}
	\kappa_1 \cdot n_0 \cdot \frac{(\kappa_2 
\cdot n_0)^{\gamma(\numprio)}}{\sqrt[3]{\numprio !}}
&\le 
\kappa_1 \cdot n_0 \cdot (\kappa_2 
\cdot n_0)^{\gamma(\numprio)} \cdot \left(\frac{e}{\numprio}\right)^{\frac{\numprio}{3}} 
\le 
\kappa_1 \cdot n_0 \cdot (\kappa_2 
\cdot n_0)^{\gamma(\numprio)} \cdot \left(\frac{e}{\numprio}\right)^{\gamma(\numprio)-\frac{1}{2}}\,,\\
&\le 
\kappa_1 \cdot n_0 \cdot \left(\frac{\kappa_2 \cdot e \cdot \kappa_3
\cdot n_0}{\numprio}\right)^{\gamma(\numprio)}
\end{align*}
for some constant $\kappa_3 > 1$. 
The first inequality is by (S1) and the second by the definition of $\gamma(\numprio)$.
Thus it remains to prove the above claim.
\smallskip

We prove the claim by induction over $\numprio$, where the base case is 
provided by Lemma~\ref{lem:bigstep_runtimesimple} for, e.g., $\numprio = 3$.
Assume the claim holds for parity games with $\numprio - 1$ priorities; we will
show that the claim then also holds for parity games with $\numprio$ priorities.
For $\numprio > 2$ the number of symbolic operations 
per iteration is dominated by (a)
the recursive call for parity games with one priority less
and (b) the call to Procedure~$\domalg$. To bound the total number 
of symbolic operations for (a) and (b), we use that by Lemma~\ref{lem:iterations}
the number of iterations in Algorithm~\ref{alg:symbigstep} is 
bounded by 
\begin{equation*}
\left \lfloor \frac{n_0}{\dombound{n_0}{\numprio} + 2} \right \rfloor + 1 \le 
\frac{n_0^{1-\beta(\numprio - 1)} }{ 2\sqrt[3]{\numprio}} + 1 \leq
2 \cdot \frac{n_0^{1-\beta(\numprio - 1)}}{3\sqrt[3]{\numprio}}\,.
\end{equation*}

(a) We first bound the total number of symbolic one-step operations for the 
recursive call. By the induction hypothesis the number of symbolic one-step operations
in a single recursive call is
bounded by 
\begin{equation*}
\kappa_1 \cdot n_0 \cdot \frac{(\kappa_2 \cdot n_0)^{\gamma(\numprio - 1)}
} {\sqrt[3]{(\numprio - 1)!}}\,. 
\end{equation*}
By the definition of $\gamma(\numprio)$ (see also the proof of 
Lemma~\ref{lem:bigstep_runtimesimple}) we can bound the total number of symbolic
one-step operations for all recursive calls with
\begin{equation}\label{eq:timerec}
\begin{split}
	\kappa_1 \cdot n_0 \cdot \frac{(\kappa_2 \cdot n_0)^{\gamma(\numprio - 1)}}{
\sqrt[3]{(\numprio - 1)!}} \cdot \frac{2 \cdot n_0^{1-\beta(\numprio - 1)}}{3\sqrt[3]{\numprio}} 
&\le \kappa_1 \cdot n_0 \cdot \frac{2 \cdot (\kappa_2 \cdot n_0)^{\gamma(\numprio)}}{3\sqrt[3]{\numprio!}}\,.
\end{split}
\end{equation}

(b) Thus it remains to bound the total number of symbolic one-step operations in the 
calls to Procedure~$\domalg$.
By Lemma~\ref{lem:small_dominions_space_efficient} the number of symbolic one-step operations in 
a call to Procedure~$\domalg$ is bounded by 
$O\left( \numprio \cdot n_0 \cdot \binom { \dombound{n_0}{\numprio} + \lfloor \numprio/2 \rfloor} {\lfloor \numprio/2 \rfloor} \right)$.
	The binomial coefficient can be bounded by using $\beta(\numprio-1) \geq 1/2$, 
	Stirling's approximation (S1), and $3 < \numprio \leq \sqrt{n_0}$.
	We have
	\begin{align*} 
	  \binom { \lceil 2 \sqrt[3]{\numprio} \cdot n_0^{\beta(\numprio-1)} \rceil + \lfloor \numprio/2 \rfloor} { \lfloor \numprio/2 \rfloor} 
	  & \leq \frac{(\lceil 2 \sqrt[3]{\numprio} \cdot n_0^{\beta(\numprio-1)} \rceil 
	  + \lfloor \numprio/2 \rfloor)^{\lfloor \numprio/2 \rfloor}}{ \lfloor \numprio/2 \rfloor!}\,,\\
	  &\leq \left(\frac{e \cdot (\lceil 2 \sqrt[3]{\numprio} \cdot n_0^{\beta(\numprio-1)} \rceil+ \lfloor \numprio/2 \rfloor)}{\lfloor \numprio/2 \rfloor}\right)^{\lfloor \numprio/2 \rfloor}\,,\\
	  &\leq 
	  \left(\frac{ (1 + 1/4) \cdot e \cdot \lceil 2 \sqrt[3]{\numprio} \cdot n_0^{\beta(\numprio-1)} \rceil}{ \lceil (\numprio-1)/2 \rceil} \right)^{\lceil \frac{\numprio-1}{2} \rceil}\,.
	\end{align*}

With this bound on the binomial coefficient we have that the
number of symbolic one-step operations in one call to Procedure~$\domalg$ is bounded by 
\begin{align*}
\kappa_4 \cdot \numprio \cdot n_0 \cdot 
\left(
  \frac{
    7 \lceil 
      2 \sqrt[3]{\numprio} n_0^{\beta(\numprio-1)} 
    \rceil
  }
  {
      \numprio-1
  }
\right)^{
\lceil \frac{\numprio-1}{2} \rceil}
\end{align*}
for some constant $\kappa_4$. To obtain a bound on 
the total number of symbolic one-step operations for all calls to Procedure~$\domalg$,
we multiply that with the number of iterations 
$2 \cdot n_0^{1-\beta(\numprio - 1)} / (3\sqrt[3]{\numprio})$
and split the further analysis into two parts.

\noindent (1) First, we consider the factor 
$	\lceil n_0^{\beta(\numprio-1)} \rceil^{\lceil \frac{\numprio-1}{2} \rceil} 
	\cdot n_0^{1-\beta(\numprio-1)}$
depending on $n_0$.
We upper bound the factor by using $c \leq \sqrt{n_0}$ and $\beta(\numprio-1) \geq 1/2$, 
and then apply \eqref{eqn:gamma}.
\begin{align*}
	\lceil n_0^{\beta(\numprio-1)} \rceil^{\lceil \frac{\numprio-1}{2} \rceil} 
	\cdot n_0^{1-\beta(\numprio-1)}
	&\le \sqrt{e} \cdot n_0^{\beta(\numprio-1) \cdot \lceil \frac{\numprio}{2} \rceil} 
	\cdot n_0^{1-\beta(\numprio-1)}
	= \sqrt{e} \cdot n_0^{\gamma(\numprio)}\,.
\end{align*}
(2)~Second, we consider the factor 
$\numprio \cdot \left(7 {\lceil 2 \sqrt[3]{\numprio}\rceil}/
{( \numprio-1 )}\right)^{\lceil \frac{\numprio-1}{2} \rceil}$
depending on $\numprio$ and show that,
when assuming\footnote{Notice that for $\numprio \leq 63$ the factor is bounded by a constant anyway.} $\numprio \geq 64$,
we have
\begin{align*}
\numprio \cdot \left(\frac{7 \lceil 2 \sqrt[3]{\numprio}\rceil}
{\numprio-1}\right)^{\lceil \frac{\numprio-1}{2} \rceil}
&\le \numprio \cdot \left( \frac{\kappa_5 \sqrt[3]{\numprio}}
{\numprio - 1} \right)^{\frac{\numprio-1}{2}} 
\le \numprio \cdot
\left( \frac{\kappa_6 \sqrt[3]{\numprio-1}}{\numprio-1} \right)^{\frac{\numprio-1}{2}}\,,\\
&\le \numprio \cdot
\left( \frac{\sqrt{\kappa_6}}{\sqrt[3]{\numprio-1}} \right)^{\numprio-1}
\le \numprio \cdot
\left( \frac{4}{\sqrt[3]{\numprio-1}} \right)^{\numprio-1}\,.
\end{align*}
with 
$\kappa_5= 63/4$, and
$\kappa_6= \kappa_5 \cdot \sqrt[3]{64/63} \approx 15.83$.
Notice that in the above equation 
we exploited the assumption $\numprio \geq 64$ several times. 
First, to bound $7 \lceil 2 \sqrt[3]{\numprio}\rceil$ by $\kappa_5 \sqrt[3]{\numprio}$,
second, to obtain $\left( \frac{\kappa_5 \sqrt[3]{\numprio}} {\numprio - 1} \right) \leq 1$,
and, finally, to bound $\kappa_5 \sqrt[3]{\numprio}$ by $\kappa_6 \sqrt[3]{\numprio-1}$.

Now let $\kappa_7$ be such that $\kappa_7 / (\numprio - 1)! \ge (\hat{\kappa} / (\numprio - 1))^{\numprio - 1}$ for some $\hat{\kappa}$ slightly smaller than $e$, such that we can apply (S2).
Notice that we also can bound $4^3/\hat{\kappa}$ by $24$ for $\hat{\kappa}$ close to $e$. 
We have
\begin{align*}
	\numprio \cdot
\left( \frac{4 \cdot \sqrt[3]{\hat{\kappa}}}{\sqrt[3]{\hat{\kappa}} \cdot \sqrt[3]{(\numprio-1}} \right)^{\numprio-1}
&\le \numprio \cdot
\frac{\sqrt[3]{\kappa_7} \cdot 24^{(\numprio-1)/3}}{\sqrt[3]{(\numprio-1)!}}\,
\le \frac{\kappa_8 \cdot \kappa_9^{\gamma(\numprio)}}{\sqrt[3]{(\numprio-1)!}}\,,
\end{align*}
for appropriately chosen constants $\kappa_8$ and $\kappa_9$.
Putting parts~(1), ~(2) and the not yet considered factor of 
$\kappa_4 \cdot 2/(3\sqrt[3]{\numprio})$ together, 
we have that the total number of symbolic one-step operations for all calls to Procedure~$\domalg$ is bounded by
\begin{equation*}
\begin{split}
	\kappa_4 \cdot \numprio \cdot n_0 \cdot 
\left(\frac{7 \lceil 2 \sqrt[3]{\numprio} n_0^{\beta(\numprio-1)} \rceil
}{\lceil \frac{\numprio-1}{2} \rceil}\right)^{
\lceil \frac{\numprio-1}{2} \rceil} \cdot 
\frac{2 \cdot n_0^{1-\beta(\numprio - 1)}}{3 \sqrt[3]{\numprio}}
&\le \kappa_1 \cdot n_0 \cdot
\frac{(\kappa_2 \cdot n_0)^{\gamma(\numprio)}}{3 \sqrt[3]{\numprio!}}
\end{split}
\end{equation*}
for $\kappa_1 \ge  2\sqrt{e} \cdot \kappa_4 \cdot \kappa_8$ and $\kappa_2 \ge \kappa_9$.
Together with Equation~\eqref{eq:timerec} this concludes the proof.
\end{proof}

\paragraph{Correctness.} 
The correctness of Algorithm~\ref{alg:symbigstep} is as follows: 
(i)~the algorithm correctly identifies dominions for player $\op$ 
(Lemma~\ref{lem:small_dominions_space_efficient}), and then removes them
(Lemma~\ref{lem:doms}(\ref{sublem:subgraph})), 
and it suffices to solve the remaining game graph; 
and (ii)~when no vertices are removed, then no dominion is found, and the computation is exactly
as the classical algorithm, and from the correctness of the classical algorithm it follows
that the remaining vertices are winning for player $\pl$.
We present a correctness proof below to make the paper 
self-contained.

\begin{lemma}[Correctness]\label{lem:bigstep_correctness}
	The algorithm correctly computes the winning sets of player~$\pe$ and player~$\po$
	for parity games with $\numprio \ge 1$ priorities.
\end{lemma}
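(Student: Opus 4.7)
The plan is to prove the lemma by induction on $\numprio$. The base case $\numprio = 1$ is immediate: the only priority is $0$, every play has the highest priority visited infinitely often equal to $0$ (even), so all vertices are winning for $\pe$ and the algorithm returns $(V, \emptyset)$.

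For the inductive step, assume the claim for parity games with at most $\numprio - 1$ priorities and let $\pl$ be as defined in the algorithm, so that $\numprio - 1$ has the same parity as the objective of $\pl$. The argument has two parts. First, I would establish a loop invariant for the repeat-until loop: at the start of each iteration, $W_\op$ is contained in the winning set of $\op$ in the input parity game, and the winning sets of both players in the input game, restricted to the remaining vertex set, are exactly the winning sets of the current $\game$. To preserve this invariant across one iteration, note that (i)~by Lemma~\ref{lem:small_dominions_space_efficient} together with Remark~\ref{remark:odddom} the set $W'_\op$ produced by $\domalg$ is a $\op$-dominion in the current $\game$; (ii)~the vertex set of $\game' = \game \setminus \at{\pl}{\game}{P_{\numprio-1}}$ is a $\pl$-trap in $\game$, so by Lemma~\ref{lem:doms}(\ref{sublem:winclosed}) combined with the induction hypothesis the set $\wo'$ returned by the recursive call on $(\game', \prio)$ is a $\op$-dominion in $\game$; and (iii)~by Lemma~\ref{lem:doms}(\ref{sublem:subgraph}), removing the $\op$-attractor of any $\op$-dominion from $\game$ preserves the winning sets of both players on the remaining graph.

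Second, I would analyze the terminating iteration. When the loop exits with $W'_\op = \emptyset$, the recursive call has returned $\wo' = \emptyset$, so by the induction hypothesis every vertex of $\game'$ is winning for $\pl$ in $(\game', \prio)$. I then need to show that every remaining vertex of $\game$ is winning for $\pl$. I would do this by constructing a memoryless winning strategy for $\pl$ on the current $\game$: inside the $\pl$-trap $V(\game')$ follow the recursive winning strategy, and inside $\at{\pl}{\game}{P_{\numprio-1}}$ follow the attractor strategy that drives the token to $P_{\numprio-1}$. Any play consistent with this strategy either stays in $V(\game')$ from some point on, in which case it is winning for $\pl$ by the recursive strategy, or it visits $\at{\pl}{\game}{P_{\numprio-1}}$ infinitely often, which by the attractor strategy forces $P_{\numprio-1}$ to be visited infinitely often; since $\numprio - 1$ is the maximal priority of $\game$ and matches the parity of $\pl$, $\pl$ wins in the second case as well. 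Combined with the loop invariant this yields the lemma.

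The main obstacle will be the play analysis for the combined strategy in the terminating iteration, specifically handling plays that oscillate between $V(\game')$ and its $\pl$-attractor complement; the crucial fact is that whenever such a play leaves $V(\game')$, the attractor strategy guarantees a subsequent visit to $P_{\numprio-1}$, so infinitely many alternations force infinitely many visits to $P_{\numprio-1}$. Everything else reduces to routine bookkeeping with traps, attractors, and dominions via Lemma~\ref{lem:doms} and, for identifying $\op$-dominions of bounded size, Lemma~\ref{lem:small_dominions_space_efficient}.
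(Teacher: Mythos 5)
Your proposal is correct and follows essentially the same route as the paper's proof: induction on $\numprio$, soundness of the removed $\op$-dominions via Lemma~\ref{lem:small_dominions_space_efficient}, Lemma~\ref{lem:doms}(\ref{sublem:winclosed}) and Lemma~\ref{lem:doms}(\ref{sublem:subgraph}), and a combined attractor-plus-recursive winning strategy for $\pl$ on the remaining vertices in the terminating iteration, with the same case split on whether the play eventually stays in $\game'$ or visits $P_{\numprio-1}$ infinitely often. The only detail you leave implicit is what the strategy does at the $P_{\numprio-1}$ vertices themselves, namely choose a successor inside the remaining vertex set, which exists because that set is a $\op$-trap; the paper states this step explicitly.
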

\begin{proof}
	The proof is by induction over~$\numprio$. For $\numprio = 1$
	all vertices are winning for player~$\pe$ and the algorithm correctly returns
	$\we = V$ and $\wo = \emptyset$. Assume the algorithm correctly computes
	the winning sets for parity games with $\numprio - 1$ priorities. We show that
	this implies the correctness for $\numprio$ priorities. 
	Let $\pl$ be $\pe$ if $\numprio$ is odd and $\po$ otherwise. 
	We first show (1) that each vertex of $W_\op$ is indeed winning for player~$\op$
	and then (2) that each vertex of $W_\pl$ is winning for player~$\pl$; 
	this is sufficient to prove the lemma by $W_\op \cup W_\pl = V$.
	
	For (1) recall that the
	algorithm repeatedly computes $\op$-dominions~$D$ and their $\op$-attractor~$A$,
	adds $A$ to the winning set of player~$\op$ and recurses on the parity game
	with $A$ removed. By Lemma~\ref{lem:doms}(\ref{sublem:subgraph}) 
	we have that this approach is correct if the sets $D$ are indeed $\op$-dominions.
	By the soundness of \ref{alg:SymbolicParityDominion} by 
	Lemma~\ref{lem:small_dominions_space_efficient} we have that $D$ is
	a $\op$-dominion when computed with \ref{alg:SymbolicParityDominion};
	it remains to show the soundness of determining a dominion $W_\op'$ 
	of player~$\op$ by the recursive call to
	\ref{alg:symbigstep} on the parity game $(\game', \prio)$ with 
	one priority less (line~\ref{l:bs_nosmall}). This follows by
	Lemma~\ref{lem:doms}(\ref{sublem:winclosed}) from $\game'$ 
	not containing a vertex with priority~$\numprio-1$ and the complement 
	of a $\pe$-attractor being a $\pe$-trap~\cite[Lemma~4]{Zielonka98}.
	
	To show completeness, i.e., $W_\op(P) \subseteq W_\op$, we describe 
	a winning strategy for player~$\pl$ on the vertices of $W_\pl$. Since the 
	set $W_\pl$ is the set of remaining vertices after the
	removal of $\op$-attractors, the set $W_\pl$ is a $\op$-trap.
	Let $\game^*$ be the game graph as in the last iteration of the algorithm, i.e., $\game^* = \game[W_\pl]$. Furthermore, 
	let $Z$ be the set of vertices in $\game^*$ with priority $\numprio-1$,
	and let $\game' = \game^* \setminus  \at{\pl}{\game^*}{Z}$.
	Since the algorithm has terminated, we have that player~$\pl$ wins on all 
	vertices of $\game'$ in the parity game~$(\game', \prio)$. The winning strategy
	for player~$\pl$ for the vertices of~$W_\pl$ is as follows: for vertices
	of $Z$ the player choses an edge to some vertex in $W_\pl$;
	for vertices of $\at{\pl}{\game^*}{Z} \setminus Z$
	the player follows her attractor strategy to $Z$; and for the vertices of $\game'$
	the player plays according to her winning strategy in~$(\game', \prio)$.
	Then in a play starting from $W_\pl$ either $Z$ is visited
	infinitely often or the play remains within $\game'$ from 
	some point on; in both cases player~$\pl$ wins with the given strategy.
\end{proof}

\section{Strategy Construction}\label{app:strategy_construction}
Here we discuss how our algorithms can be extended to also compute winning strategies
within the same bounds for the number of symbolic operations and sets required.

\paragraph{Obtaining Strategies from the Symbolic Progress Measure.}
Let us first consider Algorithm~\ref{alg:SymbolicParityDominion}.
Given the set representation of the progress measure returned by Algorithm~\ref{alg:SymbolicParityDominion}
the winning strategy for player~$\pe$ can be computed
with $O(n)$ symbolic one-step operations and $O(c \cdot n^2)$ symbolic set operations.

To this end, notice that the values of the progress measure immediately give a winning strategy of 
player~$\pe$. That is, for a vertex $v$ in the winning set of $\pe$ a winning strategy of $\pe$
picks an arbitrary successor $w$ of $v$, with $\rho(w) \leq_{\prio(v)} \rho(v)$  if $v$ is of even priority or
with $\rho(w) <_{\prio(v)} \rho(v)$  if $v$ is of odd priority.

In the symbolic setting we can compute the strategy as follows. 
We maintain a set $S$ of the vertices in $V_\pe  \setminus V_\top$ that already have a strategy,
and first initialize $S$ as the empty set.
We then iterate over all vertices $v \in V \setminus V_\top$, i.e., over all vertices 
that are winning for player~$\pe$, and do the following.
\begin{itemize}
  \item First, compute the exact rank $\rho(v)$ of the vertex $v$ by
  checking for which of the sets $C^i_x$ the intersection with $\set{v}$
  is not empty with $O(n)$ set operations.\footnote{Alternatively we could use binary search on the 
  sets $S_\rank$ at the cost of increasing the set operations for this step
  by a factor of $O(c \cdot \log n)$.}  
  \item Second, for each $0 \leq \ell < \numprio$ compute the set 
	$S_{\incr_\ell(\rho(v))}$ with $O(n)$  many symbolic set operations per set.
  \item Then compute $\cpre{\pe}(\{v\})$ and consider the 
	 sets $\cpre{\pe}(\{v\}) \cap S_{\incr_\ell(\rho(v))} \cap V_\pe \cap P_\ell$, 
for $0 \leq \ell < \numprio$.
	From each vertex contained in one of the sets a winning strategy for player~$\pe$
	can move to $v$ and thus, for each of these vertices not already in yhe set $S$,
	we fix $v$ as the strategy of player~$\pe$ and add it to $S$.
\end{itemize}
In each iteration of the above algorithm we use only one $\cpre{\pl}$ operation and 
$O(c \cdot n)$ symbolic set operations and as we have $O(n)$ iterations,
the claim follows.

\paragraph{Computing Strategies with the Big-Step Algorithm.}
Now consider the symbolic big step algorithm \ref{alg:symbigstep} that computes the winning sets for a parity game.
One can also compute the actual strategies for both players by using 
\ref{alg:symbigstep} and within the bounds of Theorem~\ref{thm:bigstep}.
This is by the following observations.
\begin{itemize}
 \item Whenever a vertex is added to a winning set because of \ref{alg:SymbolicParityDominion} (line~\ref{alg:symbigstep_dominion}) then 
       we can obtain a winning strategy for that vertex
       within the same algorithmic bounds (as outlined above).
 \item For $\numprio=1$ player $\pe$ can just pick any successor.
       It is then  easy to construct a strategy with $O(|V|)$ many symbolic operations.
 \item For a vertex that is added to a winning set because of a recursive call to \ref{alg:symbigstep} (line~\ref{l:bs_nosmall}),
       we can use the strategy that was computed in the recursive call.
 \item When a vertex is added to a winning set because of the computation 
			of the $\pl$-attractor~$A$ in lines~\ref{alg:symbigstep_attr1} 
			and~\ref{alg:symbigstep_attr3} we can compute the winning strategy with $O(\lvert A \rvert)$ symbolic operations as follows.
       In the computation of the attractor we keep track of the vertices $A_{\idxa-1}$ added in the previous iteration of the attractor
       computation and whenever a vertices of $V_\op$ is added to the attractor we identify a successor for each
       of them. 
       To this end, for each vertex in $v \in A_{\idxa-1}$ we compute $\pre(\{v\}) \cap A_\idxa \cap V_\op$ and 
       fix $v$ as players $\op$ choice for all these vertices.
       As each vertex of $A$ is in exactly one set $A_\idxa$, we only need $O(\lvert A \rvert)$ many operations,
       and as the attractor computation itself needs $O(\lvert A \rvert)$ many $\cpre{\pl}$ operations this does not increase 
       the bound for the number of symbolic operations.
\end{itemize}

\begin{thebibliography}{10}

\bibitem{Akers78}
S.~B. Akers.
\newblock Binary decision diagrams.
\newblock {\em {IEEE} Trans. Comput.}, C-27(6):509--516, 1978.

\bibitem{AlurHK02}
R.~Alur, T.~A. Henzinger, and O.~Kupferman.
\newblock Alternating-time temporal logic.
\newblock {\em {JACM}}, 49:672--713, 2002.
\newblock \href {http://dx.doi.org/10.1145/585265.585270}
  {\path{doi:10.1145/585265.585270}}.

\bibitem{BenerecettiDM16}
M.~Benerecetti, D.~Dell'Erba, and F.~Mogavero.
\newblock Solving parity games via priority promotion.
\newblock In {\em {CAV}}, pages 270--290, 2016.
\newblock \href {http://dx.doi.org/10.1007/978-3-319-41540-6_15}
  {\path{doi:10.1007/978-3-319-41540-6_15}}.

\bibitem{BeyeneCPR14}
T.~A. Beyene, S.~Chaudhuri, C.~Popeea, and A.~Rybalchenko.
\newblock A constraint-based approach to solving games on infinite graphs.
\newblock In {\em {POPL}}, pages 221--234, 2014.

\bibitem{BjorklundSV03}
H.~Bj\"orklund, S.~Sandberg, and S.~Vorobyov.
\newblock A discrete subexponential algorithms for parity games.
\newblock In {\em STACS}, pages 663--674, 2003.
\newblock \href {http://dx.doi.org/10.1007/3-540-36494-3_58}
  {\path{doi:10.1007/3-540-36494-3_58}}.

\bibitem{BrowneCJLM97}
A.~Browne, E.~M. Clarke, S.~Jha, D.~E. Long, and W.~R. Marrero.
\newblock An improved algorithm for the evaluation of fixpoint expressions.
\newblock {\em TCS}, 178(1-2):237--255, 1997.

\bibitem{Bryant86}
R.~E. Bryant.
\newblock Graph-based algorithms for boolean function manipulation.
\newblock {\em {IEEE} Trans. Comput.}, C-35(8):677--691, 1986.
\newblock \href {http://dx.doi.org/10.1109/TC.1986.1676819}
  {\path{doi:10.1109/TC.1986.1676819}}.

\bibitem{BuchiL69}
J.~R. B{\"u}chi and L.~H. Landweber.
\newblock Solving sequential conditions by finite-state strategies.
\newblock {\em Trans. AMS}, 138:295--311, 1969.

\bibitem{BustanKV04}
D.~Bustan, O.~Kupferman, and M.~Y. Vardi.
\newblock A measured collapse of the modal {\(\mathrm{\mu}\)}-calculus
  alternation hierarchy.
\newblock In {\em {STACS}}, pages 522--533, 2004.

\bibitem{CaludeJKLS17}
C.~S. Calude, S.~Jain, B.~Khoussainov, W.~Li, and F.~Stephan.
\newblock Deciding parity games in quasipolynomial time.
\newblock In {\em {STOC}}, 2017.
\newblock To appear.

\bibitem{CernyCHRS11}
P.~Cern{\'y}, K.~Chatterjee, T.~A. Henzinger, A.~Radhakrishna, and R.~Singh.
\newblock Quantitative synthesis for concurrent programs.
\newblock In {\em {CAV}}, pages 243--259, 2011.

\bibitem{ChatterjeeHL15}
K.~Chatterjee, M.~Henzinger, and V.~Loitzenbauer.
\newblock {Improved Algorithms for One-Pair and $k$-Pair Streett Objectives}.
\newblock In {\em LICS}, pages 269--280, 2015.
\newblock \href {http://dx.doi.org/10.1109/LICS.2015.34}
  {\path{doi:10.1109/LICS.2015.34}}.

\bibitem{Church62}
A.~Church.
\newblock Logic, arithmetic, and automata.
\newblock In {\em {ICM}}, pages 23--35, 1962.

\bibitem{deAlfaroF07}
L.~de~Alfaro and M.~Faella.
\newblock An accelerated algorithm for 3-color parity games with an application
  to timed games.
\newblock In {\em {CAV}}, pages 108--120, 2007.

\bibitem{deAlfaroFHMS03}
L.~de~Alfaro, M.~Faella, T.~A. Henzinger, R.~Majumdar, and M.~Stoelinga.
\newblock The element of surprise in timed games.
\newblock In {\em {CONCUR}}, pages 142--156, 2003.

\bibitem{InterfaceTheories}
L.~de~Alfaro and T.~A. Henzinger.
\newblock Interface theories for component-based design.
\newblock In {\em {EMSOFT}}, pages 148--165. 2001.
\newblock \href {http://dx.doi.org/10.1007/3-540-45449-7_11}
  {\path{doi:10.1007/3-540-45449-7_11}}.

\bibitem{deAlfaroHM01}
L.~de~Alfaro, T.~A. Henzinger, and R.~Majumdar.
\newblock Symbolic algorithms for infinite-state games.
\newblock In {\em {CONCUR}}, pages 536--550, 2001.
\newblock \href {http://dx.doi.org/10.1007/3-540-44685-0_36}
  {\path{doi:10.1007/3-540-44685-0_36}}.

\bibitem{EmersonL86}
E.~A. Emerson and Ch.{-}L. Lei.
\newblock Efficient model checking in fragments of the propositional
  mu-calculus.
\newblock In {\em {LICS}}, pages 267--278, 1986.

\bibitem{EmersonJ91}
E.A. Emerson and C.S. Jutla.
\newblock Tree automata, mu-calculus and determinacy.
\newblock In {\em FOCS}, pages 368--377, 1991.
\newblock \href {http://dx.doi.org/10.1109/SFCS.1991.185392}
  {\path{doi:10.1109/SFCS.1991.185392}}.

\bibitem{FearnleyJSSW17}
J.~Fearnley, S.~Jain, S.~Schewe, F.~Stephan, and D.~Wojtczak.
\newblock An ordered approach to solving parity games in quasi polynomial time
  and quasi linear space.
\newblock To be announced at SPIN, 2017.
\newblock URL: \url{http://arxiv.org/abs/1703.01296}.

\bibitem{Friedmann09}
O.~Friedmann.
\newblock An exponential lower bound for the parity game strategy improvement
  algorithm as we know it.
\newblock In {\em {LICS}}, pages 145--156, 2009.
\newblock \href {http://dx.doi.org/10.1109/LICS.2009.27}
  {\path{doi:10.1109/LICS.2009.27}}.

\bibitem{FriedmannL09}
O.~Friedmann and M.~Lange.
\newblock Solving parity games in practice.
\newblock In {\em {ATVA}}, pages 182--196, 2009.

\bibitem{GimbertI17}
H.~Gimbert and R.~Ibsen-Jensen.
\newblock A short proof of correctness of the quasi-polynomial time algorithm
  for parity games.
\newblock 2017.
\newblock URL: \url{https://arxiv.org/abs/1702.01953}.

\bibitem{HenzingerKR02}
T.~A. Henzinger, O.~Kupferman, and S.~K. Rajamani.
\newblock Fair simulation.
\newblock {\em Information and Computation}, 173(1):64--81, 2002.
\newblock \href {http://dx.doi.org/10.1006/inco.2001.3085}
  {\path{doi:10.1006/inco.2001.3085}}.

\bibitem{HoffmannL13}
P.~Hoffmann and M.~Luttenberger.
\newblock Solving parity games on the {GPU}.
\newblock In {\em {ATVA}}, pages 455--459, 2013.

\bibitem{HuthKP11}
M.~Huth, J.~Huan{-}Pu Kuo, and N.~Piterman.
\newblock Concurrent small progress measures.
\newblock In {\em {HVC}}, pages 130--144, 2011.

\bibitem{JobstmannGB05}
B.~Jobstmann, A.~Griesmayer, and R.~Bloem.
\newblock Program repair as a game.
\newblock In {\em CAV}, pages 226--238, 2005.
\newblock \href {http://dx.doi.org/10.1007/11513988_23}
  {\path{doi:10.1007/11513988_23}}.

\bibitem{Jurdzinski00}
M.~Jurdzi\'nski.
\newblock {Small Progress Measures for Solving Parity Games}.
\newblock In {\em STACS}, pages 290--301, 2000.
\newblock \href {http://dx.doi.org/10.1007/3-540-46541-3_24}
  {\path{doi:10.1007/3-540-46541-3_24}}.

\bibitem{JurdzinskiL17}
M.~Jurdzi{\'n}ski and R.~Lazi{\'c}.
\newblock {Succinct progress measures for solving parity games}.
\newblock To be announced at LICS, 2017.
\newblock URL: \url{https://arxiv.org/abs/1702.05051}.

\bibitem{JurdzinskiPZ08}
M.~Jurdzi\'nski, M.~Paterson, and U.~Zwick.
\newblock {A Deterministic Subexponential Algorithm for Solving Parity Games}.
\newblock {\em {SIAM} J. Comput.}, 38(4):1519--1532, 2008.

\bibitem{Lee59}
C.~Y. Lee.
\newblock Representation of switching circuits by binary-decision programs.
\newblock {\em Bell System Techn. J.}, 38(4):985--999, 1959.
\newblock \href {http://dx.doi.org/10.1002/j.1538-7305.1959.tb01585.x}
  {\path{doi:10.1002/j.1538-7305.1959.tb01585.x}}.

\bibitem{MadhusudanT01}
P.~Madhusudan and P.~S. Thiagarajan.
\newblock Distributed controller synthesis for local specifications.
\newblock In {\em {ICALP}}, pages 396--407, 2001.
\newblock \href {http://dx.doi.org/10.1007/3-540-48224-5_33}
  {\path{doi:10.1007/3-540-48224-5_33}}.

\bibitem{Martin75}
D.~A. Martin.
\newblock {Borel} determinacy.
\newblock {\em Annals of Mathematics}, 102(2):363--371, 1975.

\bibitem{McNaughton93}
R.~McNaughton.
\newblock Infinite games played on finite graphs.
\newblock {\em Annals of Pure and Applied Logic}, 65(2):149--184, 1993.
\newblock \href {http://dx.doi.org/10.1016/0168-0072(93)90036-D}
  {\path{doi:10.1016/0168-0072(93)90036-D}}.

\bibitem{PnueliR89}
A.~Pnueli and R.~Rosner.
\newblock On the synthesis of a reactive module.
\newblock In {\em {POPL}}, pages 179--190, 1989.
\newblock \href {http://dx.doi.org/10.1145/75277.75293}
  {\path{doi:10.1145/75277.75293}}.

\bibitem{RamadgeW87}
P.J. Ramadge and W.~Murray Wonham.
\newblock Supervisory control of a class of discrete-event processes.
\newblock {\em {SIAM} J. Control Optim.}, 25(1):206--230, 1987.
\newblock \href {http://dx.doi.org/10.1137/0325013}
  {\path{doi:10.1137/0325013}}.

\bibitem{Safra88}
S.~Safra.
\newblock On the complexity of $\omega$-automata.
\newblock In {\em {{FOCS}}}, pages 319--327, 1988.

\bibitem{Safra89}
S.~Safra.
\newblock {\em Complexity of automata on infinite objects}.
\newblock PhD thesis, Weizmann Inst., 1989.

\bibitem{Schewe07}
S.~Schewe.
\newblock {Solving Parity Games in Big Steps}.
\newblock In {\em FSTTCS}, pages 449--460, 2007.

\bibitem{Schewe17}
S.~Schewe.
\newblock {Solving Parity Games in Big Steps}.
\newblock {\em JCSS}, 84:243--262, 2017.

\bibitem{Seidl96}
H.~Seidl.
\newblock Fast and simple nested fixpoints.
\newblock {\em IPL}, 59(6):303--308, 1996.

\bibitem{Somenzi15}
F.~Somenzi.
\newblock {CUDD}: {CU} decision diagram package release 3.0.0, 2015.
\newblock URL: \url{http://vlsi.colorado.edu/~fabio/CUDD/}.

\bibitem{Vester16}
S.~Vester.
\newblock Winning cores in parity games.
\newblock In {\em {LICS}}, pages 662--671, 2016.

\bibitem{VogeJ00}
J.~V{\"{o}}ge and M.~Jurdzi{\'n}ski.
\newblock A discrete strategy improvement algorithm for solving parity games.
\newblock In {\em {CAV}}, pages 202--215, 2000.
\newblock \href {http://dx.doi.org/10.1007/10722167_18}
  {\path{doi:10.1007/10722167_18}}.

\bibitem{Zielonka98}
W.~Zielonka.
\newblock {Infinite games on finitely coloured graphs with applications to
  automata on infinite trees}.
\newblock {\em Theoretical Computer Science}, 200(1--2):135--183, 1998.

\end{thebibliography}
\end{document}